\def\arxiv{}
\definecolor{edits}{rgb}{0,0,0}
\newcommand{\NN}{\mathbb{N}}
\newcommand{\bigO}{\mathcal{O}}
\newcommand{\Ind}[1]{\mathds{1}\left\{ #1 \right\}}
\newcommand{\Exp}[1]{\mathbb E \left[ #1 \right]} 
\renewcommand{\Pr}{\mathbb{P}}
\newcommand{\inv}{{-1}}
\newcommand{\din}{\delta_{\text{in}}}
\newcommand{\dout}{\delta_{\text{out}}}
\newcommand{\GAR}{\mathcal{L}}
\newcommand{\type}{\textsf{type}}
\newcommand{\Xvec}{\mathbf{X}}
\newcommand{\thetavec}{\mathbf{\theta}}
\newcommand{\Dvec}{\mathbf{D}}
\newcommand{\xvec}{\mathbf{x}}
\newcommand{\Yvec}{\mathbf{Y}}
\newcommand{\yvec}{\mathbf{y}}
\newcommand{\rvec}{\mathbf{r}}
\newcommand{\rbar}{\overline{r}}
\newcommand{\V}{\mathcal{V}}
\newcommand{\Ber}{\textsf{Bernoulli}}
\newcommand{\rout}[1]{r_{#1}^{\textsf{out}}}
\newcommand{\A}{\mathbf{A}}
\newcommand{\W}{\mathbf{W}}
\newcommand{\I}{\mathbf{I}}
\newcommand{\M}{\mathbf{M}}
\newcommand{\betavec}{\mathbf{\beta}}
\newcommand{\uvec}{\mathbf{u}}
\newcommand{\mean}{m}
\newcommand{\meanvec}{\mathbf{\mean}}
\newcommand{\pvec}{\mathbf{\pi}}
\newcommand{\oblig}{O}
\newcommand{\tmix}{t_{\textsf{mix}}}
\newcommand{\noblig}{N}
\newcommand{\FJ}{Friedkin-Johnsen\xspace}
\newcommand{\abs}[1]{\left\lvert #1 \right\rvert}
\newcommand{\revenue}{\textsf{Revenue}}
\newcommand{\bond}{\textsf{BondAmt}}
\newcommand{\ctype}{\textsf{Type}}
\newcommand{\out}{\text{out}}
\newcommand{\Deltain}{\Delta_{\text{in}}}
\newcommand{\E}{\mathcal{E}}
\crefname{assumption}{assumption}{assumptions}
\crefname{algocf}{algorithm}{algorithm}
\mathchardef\mhyphen="2D 
\DeclareMathOperator*{\argmax}{arg\,min}
\DeclareMathOperator*{\argmax}{arg\,max}
\DeclarePairedDelimiter{\norm}{\lVert}{\rVert}
\let\originalleft\left
\let\originalright\right
\renewcommand{\left}{\mathopen{}\mathclose\bgroup\originalleft}
\renewcommand{\right}{\aftergroup\egroup\originalright}
\renewenvironment{proof}[1][Proof]{%
  \par\noindent{\itshape #1.} \ignorespaces
}{%
  \hfill\Halmos\par
}
\begin{document}


\RUNAUTHOR{Broderick et al.}

\RUNTITLE{Network and Risk Analysis of Surety Bonds}

\TITLE{Network and Risk Analysis of Surety Bonds}
%

\ARTICLEAUTHORS{%

\AUTHOR{Tamara Broderick, Ali Jadbabaie, Vanessa Lin, Manuel Quintero, Arnab Sarker}
\AFF{Institute for Data, Systems, and Society,
Massachusetts Institute of Technology,\\ Cambridge, MA 02139, \EMAIL{\{tamarab,jadbabaie,vllin,mquint,arnabs\}@mit.edu}}

\AUTHOR{Sean R. Sinclair\footnote{Contact author}}
\AFF{Department of Industrial Engineering and Management Sciences,
Northwestern University, \\
Evanston, IL 60208, \EMAIL{sean.sinclair@northwestern.edu}}
} 

\ABSTRACT{%
Surety bonds are financial agreements between a contractor (principal) and obligee (project owner) to complete a project.  However, most large-scale projects involve multiple contractors, creating a network and introducing the possibility of incomplete obligations to propagate and result in project failures.  Typical models for risk assessment assume independent failure probabilities within each contractor.  However, we take a network approach, modeling the contractor network as a directed graph where nodes represent contractors and project owners and edges represent contractual obligations with associated financial records.  To understand risk propagation throughout the contractor network, we extend the celebrated Friedkin-Johnsen model and introduce a stochastic process to simulate principal failures across the network.  From a theoretical perspective, we show that under natural monotonicity conditions on the contractor network, incorporating network effects leads to increases in the average risk for the surety organization.  We further use data from a partnering insurance company to validate our findings, estimating an approximately 2\% higher exposure when accounting for network effects.
}%




\KEYWORDS{Surety bonds, Contractor network, Risk propagation, Systemic risk, Opinion dynamics} 

\maketitle

\section{Introduction}
\label{sec:intro}

Surety bonds are a foundational mechanism in contractual risk management, widely used to guarantee the completion of projects in sectors such as construction, infrastructure, and public works. In a typical surety agreement, the surety company guarantees to the obligee (the project owner) that the principal (a contractor) will fulfill the terms of a bonded contract. If the principal fails to perform, the surety must step in to ensure project completion, often absorbing substantial financial losses in the process~\citep{russell1990surety}. These agreements are not only mandated for public contracts under laws such as the U.S. Miller Act of 1935~\citep{uscode40_3131}, but also play a growing role in private-sector project financing~\citep{wambach2011surety}. Despite their ubiquity, surety bonds remain difficult for contractors to secure, in part due to the unexpected and systemic nature of failures, which can leave insurers liable; for instance, there were over \$21 billion in claims between 1990 and 1997 in the U.S. alone~\citep{wambach2011surety}.

The pricing of surety bonds depends critically on the ability to assess the default risk of individual contractors. A large body of work in finance and insurance focuses on this task, estimating failure probabilities using firm-level covariates such as credit ratings, leverage ratios, and liquidity metrics~\citep{kim2019default}. While such models (often using statistical machine learning) have improved the accuracy of idiosyncratic risk estimation, they share a key limitation: {\em they assume failures occur independently across firms}. This assumption neglects an increasingly salient feature of real-world contracting environments, the presence of network dependencies among contractors, subcontractors, and project owners (see \cref{fig:simple_diagram} for a toy contractor network, later in \cref{sec:experiments} we consider a real-world contractor network with $\sim30,000$ organizations).  
In practice, most large-scale projects involve multi-tiered contractual relationships, where the performance of one firm is contingent on the timely execution of work by others. As highlighted by recent industry reviews~\citep{assuredpartners2024surety, bci2021supplychain}, subcontractor failure is one of the leading causes of bonded losses. For instance, if a plumbing contractor cannot begin work until the electrical subcontractor completes their portion of a build, the default of the latter creates a domino effect. In such environments, {\em risk is not merely a function of a firm’s own characteristics, but also of its position in the broader contractor network}.  Moreover, these cascading effects are pervasive across domains. In supply chains, for example, the bankruptcy of an upstream supplier can cripple downstream production. In collaborative research, a delay in one lab’s work can stall the entire study. In interbank lending networks, financial contagion spreads through credit exposures. Across all these settings, a network-aware perspective is essential to understand and mitigate systemic risk.

\begin{figure}[!t]
\centering
\includegraphics[scale=1]{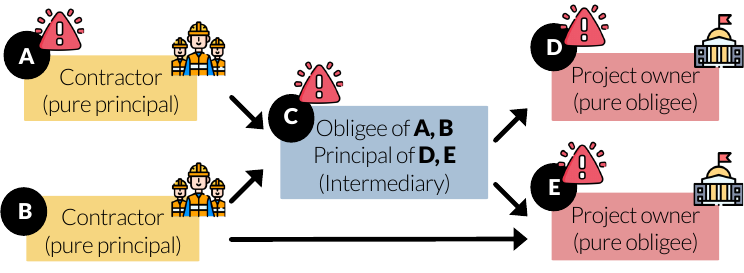}
\caption{Illustrated representation of a series of subcontractor dependencies.  Here we observe that a failure of subcontractor A has the potential to propagate and affect C, and also the obligees D and E.  Note that even though A does not work {\em directly} with D or E, the intermediary C allows them to influence the risk of project incompletion.}
\label{fig:simple_diagram}
\end{figure}

However, despite growing recognition of these interdependencies, there remains a lack of formal models that account for network effects in surety risk assessment. While prior work has examined dynamic models of credit contagion and equilibrium default~\citep{benzoni2015modeling,nickerson2017debt}, these frameworks are often not tailored to the structure of bonded contractor networks, where obligations are directional. Notably, recent theoretical work in stochastic dynamics and mean-field models~\citep{amini2022dynamic,carmona2013mean} explores related ideas, but typically assumes irreversible failures or continuous-time evolution.  Our work is motivated by this important gap. Namely, we seek to answer the following research questions:
\smallskip

\begin{center}
{\em How do contractor relationships influence systemic risk in surety-based contractor networks?  What conditions lead to cascading failures, and how do these propagate over time?  Can we identify key contractors whose failures disproportionately impact financial stability?
}
\end{center}

\subsection{Main Contributions}

\paragraph{Network-Based Model of Risk Propagation.} In an attempt to tackle these questions, one of the main contributions of this work is a network-based modeling framework for risk propagation in surety-based contractor networks. Traditional models like \cite{kim2019default} assume independent failure probabilities, not capturing how failures spread through contractor relationships. Instead, we represent the contracting environment as a directed network $G = (\V, \E)$, where nodes correspond to principals (contractors) and obligees (project owners), and edges capture the potential flow of risk through a contract for bonded work from principal to obligee.  In contrast to network-unaware models of failure dynamics, we introduce a stochastic process $\Xvec^t = (X_i^t)_{i \in \V}$ which represents whether node $i \in \V$ in the network fails at timestep $t \in \mathbb{N}$.  This stochastic process evolves according to the following simple dynamics (see \cref{eq:sp_dynamics}):
\[
    X_i^{t+1} = \Ber\Big((1 - \alpha_i) r_i + \alpha_i \sum_{j \in \din(i)} w_{ij} X_j^{t}\Big).
\]
Here, $r_i$ denotes the contractor's {\em idiosyncratic} or individual risk score, $\alpha_i$ represents the probability that $i$ is affected by one of their neighbors, $\din(i)$ is the set of in-neighbors to $i$, and $w_{ij}$ is the fraction of $i$'s projects that are contracted to principal $j$.  At $t = 0$, we set $X_i^{0} = \Ber(r_i)$ to denote the ``independent'' failure model.  However, as $t$ increases, we see that failures have the ability to propagate and affect their neighbors through the terms $w_{ij}X_j^{t}$.
This framework generalizes standard independent failure models and allows for a more realistic assessment of systemic risk in these interdependent contracting environments, as we demonstrate in \Cref{sec:experiments}.

\paragraph{Mean-Field and Limiting Distribution Analysis.}
To understand how risk propagates in the network, we analyze the stationary distribution of the stochastic failure process $X_i^t$.  We start off by showing that the marginal failure probabilities $m_i^t = \Exp{X_i^t}$ converge to a unique fixed point $m_i$ as their {\em limiting failure probability} (\cref{thm:inverse_fixed_point}).  This result generalizes the \FJ model of opinion dynamics to a setting with heterogeneous bias parameters~\citep{FriedkinJohnsen}.  We further quantify the rate of convergence, establishing an exponential decay in $|m_i^t - m_i|$ governed by the operator norm of a squared weight-adjusted adjacency matrix, and additionally show that the convergence occurs in finite time for acyclic contractor networks (\cref{thm:limiting_probabilities_rate_of_convergence}).

Building on the mean-field analysis, we next study the full joint distribution over failures.  The stochastic process $X_i^t$ defines a Markov chain over $\{0,1\}^n$, where $n$ is the number of nodes (contractors and obligees) in the network.  However, naive analysis will establish its convergence to the stationary distribution in $\bigO(2^n)$ time~\citep{resnick2013adventures}.  In contrast, we show that in acyclic graphs, the convergence occurs in at most $d$ steps, where $d > 0$ is the length of the longest path in the network (\cref{thm:mix_dag}). For general (not necessarily acyclic) networks, we leverage the structure of the stochastic process to show it admits a coupling. This allows us to develop a contraction-based analysis, and show that the rate of convergence scales logarithmically with respect to $n$ (\cref{thm:mixing_time}).
Together, these results characterize the stationary behavior of the failure stochastic process, and show that the stationary distribution can be simulated efficiently.

\paragraph{Structural Insights into Amplification of Systemic Risk.}  
Our modeling framework allows us to quantify how systemic risk is amplified by the network structure beyond what traditional independent risk models predict.  The main insight of our analysis is we show that when obligees hire riskier contractors on average (\Cref{ass:larger_neighbors}), which is often observed in contractor networks \citep{dietz2018mitigating}, the expected failure probabilities (\cref{thm:average_increase}) increase over time.  
 We further outline conditions (\Cref{ass:strictly_larger_neighbors}) such that the total risk and loss in the network is {\em strictly} larger (\cref{thm:gap}).  With this analysis we observe that contractor failures are not uniformly impactful, intermediary nodes, which serve as bridges between multiple principals and obligees, are the primary drivers behind the propagation of project incompletion across the network. In essence, by accounting for network effects, we see that surety organizations may be {\em fundamentally underestimating how risky a contractor network is.}  
Lastly, we define an eigenvalue centrality (\cref{def:page_rank}) that captures the extent to which a contractor’s risk influences the broader network through downstream intermediaries.
To summarize, our results formally characterize when and how network structure exacerbates risk.

\paragraph{Empirical Validation and Risk Estimation.}
We validate our theoretical findings using anonymized real-world surety bond data from a partnering insurance firm. Our empirical analysis shows that accounting for network dependencies leads to a 2\% higher estimated systemic risk compared to traditional models that assume independent contractor failures, and that the distribution of losses exhibits larger right tails, underscoring the potential for more severe extreme events. We further develop a methodology for identifying critical nodes (contractors whose failures have large effects on the network's overall stability) and illustrate this with a detailed case study.
Together, these insights provide methodology and actionable recommendations for insurance providers and policymakers seeking to better anticipate and mitigate systemic risk in contractor networks.

\paragraph{Paper Organization.} We review the related literature in the remainder of this section. We formally present our model in \cref{sec:preliminary}.  In \cref{sec:mean_field} we analyze the mean failure probabilities, establishing the rate of convergence to their limits. Then, in \cref{sec:stochastic_process_analysis} we analyze the mixing time of our stochastic process. Under both sections we provide insights into risk propagation due to network structure. Finally, in \cref{sec:experiments} we complement our theoretical results on real-world data from our partnering surety organization, and conclude in \Cref{sec:conclusion}.  When omitted, all proofs are deferred to the appendix.
\subsection{Related Literature}
\label{sec:related_work}

Our work lies at the intersection of operations research, economics, applied probability, and network analysis, with close connections to models of financial contagion and surety risk. See \citet{caccioli2018network} for a broad survey.

\paragraph{Empirical Studies in Surety Bonds.}
Surety bonds are a widely adopted mechanism to protect against contractor defaults and offer several advantages over traditional insurance \citep{schubert2002point}.  Surety bonds are typically priced using firm-level information such as financial ratios or credit data \citep{schubert2002point, kim2019default}. A large empirical literature applies statistical and machine learning methods, including logit regression~\citep{tserng2014prediction}, SVMs~\citep{tserng2011svm,horta2013company}, ensemble learning~\citep{choi2018predicting}, and Bayesian networks~\citep{cao_2022}, to estimate individual contractors’ default probabilities from accounting data~\citep{barboza2017machine,nguyen2025bankruptcy,shumway2001forecasting,vassalou2004default}. Although some models incorporate macroeconomic covariates~\citep{shumway2001forecasting,vassalou2004default}, all of this literature treats contractors as independent units, with no mechanism by which one contractor’s failure propagates to others.

In practice, however, systemic factors and subcontracting dependencies create correlations in defaults. Historical events such as the 1980s oil embargo led to widespread contractor failures despite strong individual credit profiles~\citep{russell1990surety}, and modern construction projects often hinge on ``lower-tier'' subcontractors, whose disruptions can cascade through the supply chain~\citep{dietz2018mitigating,bci2021supplychain}. Motivated by these limitations, our model augments contractor-level default estimates from the existing literature with network interactions, capturing how contractual ties generate correlated risks and amplify potential losses.

\paragraph{Financial Contagion and Cascade Models.}
Much of the contagion literature studies interbank lending. An early contribution is \citet{allenFinancialContagion2000}, who model contagion as an equilibrium phenomenon in interbank markets, where small liquidity shocks can spread through overlapping claims. Subsequent work often uses threshold models in which a node defaults once losses from its neighbors exceed a threshold \citep{watts2002simple,gai2010contagion,elliott2014financial,acemoglu2015systemic}. These frameworks highlight how localized shocks can spread systemically, but their dynamics are typically deterministic and tied to an initial shock event. More recent analyses consider noisy threshold contagion, where a small probability of below-threshold adoption can accelerate the spread of complex contagion \citep{ecklesLongTiesAccelerate2024}. Unlike threshold contagion models that assume diminishing returns from additional affected neighbors, our model is stochastic and cumulative.  A contractor’s failure risk grows as an aggregate function of weighted neighbor defaults and an idiosyncratic baseline.

Other extensions consider multilayer contagion with mutations, where new strains can emerge as the process spreads and heterogeneity across layers (e.g., schools, workplaces) shapes transmission \citep{SoodSridhar2023}. This underscores how ignoring heterogeneity in either the contagion type or the network structure can miscalculate systemic risk. In our setting, the analogous challenge is to capture heterogeneity in contractor obligations  and recovery, rather than multilayer or mutating contagions.

A further distinction is that most threshold models assume that failed institutions remain insolvent, whereas in surety settings, defaults trigger intervention from the surety organization to ensure project completion. Models with recovery or stochasticity include recovery in reinsurance networks \citep{klages2020cascading}, Gaussian noise in asset values \citep{ramirez2023stochastic}, and dynamic link formation with Cramér–Lundberg premiums/claims \citep{amini2022dynamic}.  However, these approaches rely on explicit thresholds or detailed balance-sheet information uncommon for bonded contractors. Instead, we propose a stochastic cascade model with heterogeneous exposures and explicit recovery, tailored to contractor–surety networks.

\citet{davis2001infectious} introduce an alternative to threshold contagion by modeling ``infectious'' defaults, where bonds fail independently or through Bernoulli contagion within a sector. This framework shares similarities with our approach in that defaults can arise either idiosyncratically or via neighbors, but it assumes uniform exposures and fully connected networks. Moreover, it lacks recovery, which are central in the surety context. Our model builds on this probabilistic contagion idea while incorporating heterogeneity in network structure, firm characteristics, and explicit recovery.

\paragraph{Opinion Dynamics.}

Opinion dynamics provides a natural basis for modeling failure cascades, where contractors correspond to individuals, and default probabilities correspond to opinions. A central feature is neighbor influence, allowing local shocks to generate global effects across the network. For instance, \citet{benzoni2015modeling} show how investors’ beliefs about bond pricing can propagate through financial networks in ways that resemble contagion.  A well-known framework is the Friedkin–Johnsen model, in which agents reconcile their intrinsic beliefs with their neighbors’ views~\citep{FriedkinJohnsen}. Our model has a similar structure where defaults may be triggered by neighbors but are also shaped by inherent contractor-level failure probabilities. In fact, the mean-field of our stochastic process corresponds to a variation of the Friedkin–Johnsen model when opinions are reinterpreted as default probabilities (see \Cref{sec:mean_failure_probabilities}).  Related extensions, such as the interacting Pólya Urn model of \citet{tang2024estimating}, further highlight how noisy observations and social pressure interact with intrinsic beliefs.

\paragraph{Eigenvalue Centralities.}

The Friedkin–Johnsen model is closely connected to eigenvector-based centrality measures such as Bonacich centrality and PageRank. Namely, \citet{bonacich2001eigenvector} highlight that Bonacich centrality can be applied to the same situations as the Friedkin-Johnsen model, with the distinction that the latter is concerned with the limiting state, while Bonacich centrality quantifies the level of influence each node has on the final equilibrium.  This centrality is analogous to the equilibrium output in response to a shock in the input-output model introduced by \citet{leontief1986input} and appears in other settings in the economics literature.  For example, \citet{acemoglu2012network} studies a network of production sectors and shows that the volatility of aggregate output scales with the size of an eigenvector-based centrality vector that is closely related to Bonacich centrality.  A special class of the Friedkin-Johnsen model also coincides with the teleportation model of random surfing introduced by \citet{page1999pagerank,proskurnikov2016pagerank}.

In financial settings, \citet{battiston2012debtrank} propose DebtRank, an eigenvector-based measure of systemic importance in interbank lending. While their measure prevents risk from being transmitted multiple times along cycles, in surety networks repeated impacts of earlier failures are precisely what matter. For this reason, we use an eigenvector-based centrality as a complement to our stochastic model, capturing how individual risks amplify through contractual ties (see \Cref{sec:page_rank}).
\section{Network Model Definition}
\label{sec:preliminary}

\paragraph{Technical notation.} In what follows, for $N \in \mathbb{N}_+$, we let $[N] = \{1, 2, \ldots, N\}$.  For a vector $\xvec$ we use $\norm{\xvec}$ to be its $\ell_\infty$ norm, i.e. $\norm{\xvec} = \max_i |x_i|$, and for a matrix $\A$ we use $\norm{\A}$ to denote the $\ell_\infty$-induced matrix norm, i.e. $\norm{\A} = \sup\{\norm{\A\xvec} \,:\, \norm{\xvec} \leq 1\}$. For two vectors $\mathbf{x}$ and $\mathbf{y}$ we write $\mathbf{x} \geq \mathbf{y}$ to denote the inequality holds entrywise.   See \cref{table:notation} (appendix) for a full table of notation.

\paragraph{Model primitives.} We consider a large-scale network of contracts (edges) between contractors and project owners (nodes). We represent this system as a directed graph $G = (\mathcal{V}, \mathcal{E})$, referred to as the {\bf contractor network}, where each node $i \in \mathcal{V}$ corresponds to an organization (subcontractor), a sub-unit within an organization, a project owner (general contractor), or a collection thereof. Throughout, let \(n = \vert\mathcal{V}\vert\) denote the total number of nodes (equivalently, the number of contractors/project owners) in the graph.  Directed edges $e = (j,i) \in \mathcal{E}$ represent one or more bonded contracts from principal $j$ (the contractor) to obligee $i$ (the project owner). These edges can capture an entire portfolio of contracts issued from $j$ to $i$, or a single agreement. Multiple edges are not permitted, though self-loops and cycles are allowed.\footnote{Self-loops may seem superfluous, but it is often the case that one arm of an organization subcontracts to another arm of the same organization.  Crucially, a self-loop feeds the consequences of a default back into the same contractor one time-step later: node $i$ failing at time $t$ raises the likelihood that $i$ again defaults at $t+1$.} Nodes may act as both principals and obligees, and thus can have both incoming and outgoing edges. For notational convenience, we define the edge direction from $j$ to $i$, indicating the flow of bonded obligations from contractor to project owner; this also aligns with the flow of {\em risk} in the network, which moves in the opposite direction of payment. Each edge $e = (j,i)$ is associated with a weight $w_{ij}$, denoting the total fraction of $i$'s projects that are subcontracted to principal $j$. By construction, the incoming weights for any obligee sum to one. We let $\W$ denote the weighted adjacency matrix, with entries $\W_{ij} = w_{ij}$, and use $\din(i) = \{j \mid (j,i) \in \mathcal{E}\}$ and $\dout(i) = \{k \mid (i,k) \in \mathcal{E}\}$ to denote the incoming and outgoing neighborhoods of node $i$, respectively.

A node $i \in \mathcal{V}$ is said to be a {\bf pure principal} if $\din(i) = 0$. This corresponds to organizations that only act as subcontractors to other obligees, and do not have any bonded work that is deferred to lower tier subcontractors. Similarly, a node $i \in \mathcal{V}$ is said to be a {\bf pure obligee} if $\dout(i) = 0$.  
In practice, pure obligees typically represent project owners such as municipal agencies that contract with a single general contractor or construction manager. Their indegree is usually one, reflecting the primary contractor that organizes the project on their behalf.
Any other nodes $i$ are said to be {\bf intermediaries}. (See \cref{fig:simple_diagram} for a representation of the three classes of contractors.)
If the graph contains only pure principals and pure obligees, it is bipartite and the flow of risk is straightforward to characterize; principals affect only their obligees, and obligees are influenced only by their principals.  However, if the contracting network contains an intermediary, the risk exposure it imposes on its obligees is dependent on its principals, because it relies on principals to complete some of its obligations.  This creates opportunities for risk to flow in unexpected ways, where obligees are affected by principals they do not directly contract with.

As a model for network failure, we assume that each principal $i \in \mathcal{V}$ has an associated {\bf idiosyncratic risk} score $r_i \in (0,1)$.  We interpret $r_i$ as the probability that node $i$ fails independently.  These are assumed to be determined exogenously, based only on individual node level attributes and without any direct knowledge of the network.\footnote{In practice these scores are based on each organization's financial records and hence include some limited network effects. However, we treat these as exogenous inputs into the model.} 
Because pure obligees do not perform bonded work themselves we model their idiosyncratic risk as zero, $r_i=0$.  Any project failure at that level is therefore interpreted as the consequence of downstream contractor defaults rather than an independent failure of the obligee.

We additionally associate with each node a value $\alpha_i \in [0,1]$ corresponding to their {\bf network-associated failure propagation probability}.  We use $\alpha_i$ to denote the probability that a failure of one of node $i$'s neighbors propagates and affects node $i$, essentially a measure of node $i$'s susceptibility to project incompletion by its principals.  Accordingly, we set $\alpha_i=1$ for pure obligees, which by construction means that any failure of their principals directly translates to project incompletion at the obligee node, and $\alpha_i = 0$ for any {pure principals} since none of their work is performed by other principals.  Further, any node which is an {intermediary} has $\alpha_i \in (0,1)$.
 We let $\A$ be the diagonal matrix with entries $\A_{ii} = \alpha_i$ for $i \in \mathcal{V}$.

\begin{remark}
In our model we assume that the entire contractor network is {\em fully observed} by the surety organization. This is unlikely to hold in practice, since each surety organization only observes contracts that their organization bonds. 
In \cref{app:simulations_unobserved_edges} we present methodology to impute these unobserved edges based on observed contracts and organization-level financial records.
\end{remark}

\paragraph{Stochastic Risk Propagation.} We are interested in simulating cascading failures across the contractor network.  In accordance with this goal, we will define a stochastic process $(X_i^t)_{i \in \mathcal{V}, t \in \mathbb{N}}$ to model our {\bf failure dynamics}, where $X_i^t \in \{0,1\}$ will denote the indicator of whether contractor $i$ fails at timestep $t$.  When $i$ is a pure obligee we represent this as the indicator that one of $i$'s project fails. We further denote $\Xvec^t$ to represent the vector $(X_i^t)_{i \in \mathcal{V}}$ of node-level failures at time-step $t$.  This is with slight abuse of notation, since elsewhere we use bold capital letters to denote matrices.  We emphasize that the notion of timestep in this model is primarily used as a vehicle for understanding the stationary failure dynamics.

Initially we assume that each $X_i^0 \sim \Ber(r_i)$, corresponding to each node $i$ failing independently according to their own inherent idiosyncratic risk score.  Since pure obligees have $r_i = 0$, $X_i^0 = 0$ for those nodes.  The dynamics of the stochastic process are:
\begin{equation}
\label{eq:sp_dynamics}
X_i^{t+1} \sim \Ber\Big((1 - \alpha_i) r_i + \alpha_i \sum_{j \in \din(i)} w_{ij} X_j^{t}\Big).
\end{equation}
In $X_i^0$, each node fails independently according to their inherent idiosyncratic failure probability.  Afterwards, conditional on $(X_j^{t})_{j \in \mathcal{V}}$, the failure probability is as follows.  First, $\alpha_i$ denotes the probability that node $i$'s failure is affected by its neighbors.  Hence, $\alpha_i \sum_{j \in \din(i)} w_{ij} X_j^{t}$ is the cumulative risk associated with their in-neighbors (principals) under this event. Otherwise, with probability $(1 - \alpha_i)$ the node fails according to its own idiosyncratic risk score $r_i$.
We further note that this defines a probability distribution over $\{0,1\}^n$ which we denote as $\Pr(\Xvec^t)$.  

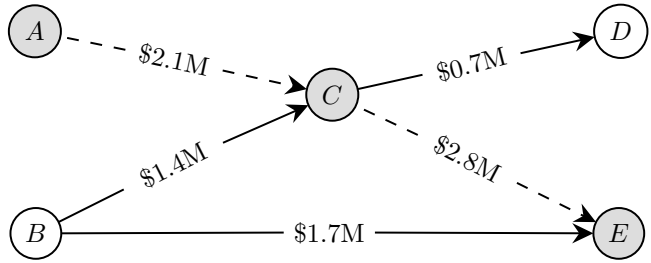
\begin{figure}
    \centering
    \tikzset{every picture/.style={line width=0.75pt}} 
        \begin{tikzpicture}[x=0.75pt,y=0.75pt,yscale=-1,xscale=1]
    \draw (150.69,63.35) node  [font=\small,rotate=-12.26] [align=left] {\$2.1M};
    \draw  [fill={rgb, 255:red, 155; green, 155; blue, 155 }  ,fill opacity=0.34 ]  (80.5, 49) circle [x radius= 13.12, y radius= 13.12]   ;
    \draw (80.5,49) node  [font=\small]  {$A$};
    \draw  [fill={rgb, 255:red, 155; green, 155; blue, 155 }  ,fill opacity=0.34 ]  (375, 151) circle [x radius= 12.81, y radius= 12.81]   ;
    \draw (375,151) node  [font=\small]  {$E$};
    \draw    (81, 151) circle [x radius= 12.81, y radius= 12.81]   ;
    \draw (81,151) node  [font=\small]  {$B$};
    \draw  [fill={rgb, 255:red, 155; green, 155; blue, 155 }  ,fill opacity=0.34 ]  (230.5, 81) circle [x radius= 13.12, y radius= 13.12]   ;
    \draw (230.5,81) node  [font=\small]  {$C$};
    \draw    (376, 49) circle [x radius= 13.45, y radius= 13.45]   ;
    \draw (376,49) node  [font=\small]  {$D$};
    \draw (150.35,116.29) node  [font=\small,rotate=-333.15] [align=left] {\$1.4M};
    \draw (300.72,65.48) node  [font=\small,rotate=-345.28] [align=left] {\$0.7M};
    \draw (298.62,113.1) node  [font=\small,rotate=-27.81] [align=left] {\$2.8M};
    \draw (229,150.5) node  [font=\small] [align=left] {\$1.7M};
    \draw [line width=0.75]  [dash pattern={on 4.5pt off 4.5pt}]  (175.22,68.78) -- (214.75,77.52) ;
    \draw [shift={(217.68,78.17)}, rotate = 192.47] [fill={rgb, 255:red, 0; green, 0; blue, 0 }  ][line width=0.08]  [draw opacity=0] (10.72,-5.15) -- (0,0) -- (10.72,5.15) -- (7.12,0) -- cycle    ;
    \draw [line width=0.75]  [dash pattern={on 4.5pt off 4.5pt}]  (126.15,58.33) -- (93.36,51.63) ;
    \draw [line width=0.75]    (125.33,128.81) -- (92.45,145.27) ;
    \draw [line width=0.75]    (175.38,105.27) -- (215.74,87.5) ;
    \draw [shift={(218.49,86.29)}, rotate = 156.24] [fill={rgb, 255:red, 0; green, 0; blue, 0 }  ][line width=0.08]  [draw opacity=0] (10.72,-5.15) -- (0,0) -- (10.72,5.15) -- (7.12,0) -- cycle    ;
    \draw [line width=0.75]    (206,150.58) -- (93.81,150.96) ;
    \draw [line width=0.75]    (252,150.58) -- (359.19,150.95) ;
    \draw [shift={(362.19,150.96)}, rotate = 180.2] [fill={rgb, 255:red, 0; green, 0; blue, 0 }  ][line width=0.08]  [draw opacity=0] (10.72,-5.15) -- (0,0) -- (10.72,5.15) -- (7.12,0) -- cycle    ;
    \draw [line width=0.75]  [dash pattern={on 4.5pt off 4.5pt}]  (273.61,101.31) -- (242.38,86.6) ;
    \draw [line width=0.75]  [dash pattern={on 4.5pt off 4.5pt}]  (323.63,125.51) -- (360.84,143.97) ;
    \draw [shift={(363.53,145.31)}, rotate = 206.39] [fill={rgb, 255:red, 0; green, 0; blue, 0 }  ][line width=0.08]  [draw opacity=0] (10.72,-5.15) -- (0,0) -- (10.72,5.15) -- (7.12,0) -- cycle    ;
    \draw [line width=0.75]    (325.45,60.06) -- (359.92,52.52) ;
    \draw [shift={(362.85,51.88)}, rotate = 167.65] [fill={rgb, 255:red, 0; green, 0; blue, 0 }  ][line width=0.08]  [draw opacity=0] (10.72,-5.15) -- (0,0) -- (10.72,5.15) -- (7.12,0) -- cycle    ;
    \draw [line width=0.75]    (276,70.94) -- (243.32,78.17) ;
    \end{tikzpicture}
    \caption{Sample contractor network (see \cref{fig:simple_diagram}).  Here we see that contractor $C$ is an obligee for both $A$ and $B$ (with contract value \$2.1M and \$1.4M respectively).  Solid (dashed) edges denote active (failed) obligations; dark-filled nodes are in default; light-filled nodes are solvent.  Hence, our model captures the effect of contractor $A$'s failure on both the intermediary $C$ but also the pure obligee $E$.}
    \label{fig:example}
\end{figure}

\begin{example}
\label{ex:risk_propagation}
Consider a simplified contracting network composed of five organizations (see \Cref{fig:example}).  Each edge in the network represents a bonded contractual obligation, with annotated edges indicating the associated financial exposure.  Edge weights are then calculated as the relative financial exposure to the obligee from each of its principals (e.g. $C$'s exposure to $A$ is $0.6$ since $A$ is responsible for $60\%$ of the work conducted to $C$).
Companies $A$ and $B$ are pure principals, since they do not subcontract work to any other organizations.  Company $C$ is an intermediary, since they are both subcontractors to Companies $D$ and $E$, but obligees to companies $A$ and $B$. Finally, Companies $D$ and $E$ are pure obligees.

This example illustrates how the failure of a pure principal (e.g., $A$ defaults) can propagate through the network in the stochastic process. If $C$ cannot complete its contractual obligations to $E$ because $A$ fails, then $E$ may incur losses, even though it never directly contracted with $A$. Such indirect dependencies are not captured in standard models assuming independent risk, but they are central in our network-aware framework.
This further underscores the role of intermediaries like $C$ in amplifying risk.  Even with moderate idiosyncratic risk levels for $A$ and $B$, the dependency structure means that failures can cascade through the network, elevating systemic risk beyond what node-level scores would suggest.
\end{example}

For each node $i \in \mathcal{V}$ we use $\beta_i$ to denote their {\bf financial loss}, i.e. the amount that the insurer needs to pay out in case of node $i$'s failure.  Inherent to this definition is that $\beta_i = 0$ for all nodes $i$ that are {pure obligees} (since they only receive bonded work).  Lastly, we denote the {\bf global financial loss} as:
\begin{equation}
\label{eq:gar}
\GAR(\Xvec^t) = \sum_{i \in \mathcal{V}} \beta_i X_i^{t} = \betavec^\top \Xvec^{t},
\end{equation}
which denotes the cumulative financial loss associated with all of the nodes in the network if their failures are dictated by $\Xvec^t$ and the financial loss per node is $\beta_i$.  This quantity captures how individual contractor defaults aggregate into broader network-wide loss for the surety organization.

Our main goal in the rest of this work centers on understanding the stochastic process $\Xvec^t$, its asymptotic behavior, and providing insights into how network structure influences systemic risk.  We then leverage this analysis to compare $\GAR(\Xvec^0)$ (the independent failure model) to the stationary behavior of $\GAR(\Xvec^t)$ to quantify the financial effect of cascading failures in surety networks.  We focus on the stationary behavior of the stochastic process primarily to serve as a measure of risk propagation in the network, and leave further studies on the transient behavior of the stochastic process for future work.

\subsection{Discussion of Modeling Assumptions}

We conclude the section with a discussion of our modeling assumptions.

\paragraph{Static Network Structure.} 
In reality, contractor networks are time-varying since edges are dictated by contracts with fixed terms. However, our model assumes a fixed contractor network over time, meaning we do not allow for the entry or exit of contracting organizations, nor do we model the formation or dissolution of contractual ties. This assumption enables a clean equilibrium analysis of systemic risk and allows us to characterize how risk distributes over the network in steady state. That said, incorporating a dynamic network formation remains an important practical direction for future work. In \cref{sec:mixing_time} and \cref{app:time_varying_mixing} we show that our result on the mixing time for the stochastic process applies under time-varying contractor networks.

\paragraph{Exogenous Risk Scores.} We treat each contractor $i$'s idiosyncratic risk score $r_i$ and network sensitivity parameter $\alpha_i$ as exogenously specified inputs to the model. In practice, these parameters are inferred from financial health indicators or historical default data, and likely take into account mild network risk indicators~\citep{kim2019default}. However, regardless of how $r_i$ are estimated, our model allows for the direct incorporation of network effects on risk in contractor networks.

\paragraph{Linear Risk Amplification.} Our stochastic process assumes that a contractor’s risk of failure increases linearly based on the impact of their neighbors via $w_{ij}$ (see \cref{eq:sp_dynamics}). This additive structure simplifies both analysis and simulation, but it may fail to capture important nonlinearities in real-world contagion effects. For instance, a contractor may be robust to isolated failures but vulnerable to risks beyond a certain threshold, such as in the threshold contagion model like \citet{watts2002simple}.

\paragraph{Risk Amplification Proportional to Financial Obligations.} Our work assumes that network-induced risk depends on the proportion of a contractor’s total subcontracted value attributed to each subcontractor, as encoded by the normalized edge weights $w_{ij}$. This formulation reflects a reasonable first-order approximation where risk exposure grows with financial dependence on risky neighbors. Although real-world contracts vary in risk beyond their dollar value, introducing such heterogeneity would significantly complicate the model without materially changing our core theoretical insights.

\paragraph{Inclusion of Pure Obligees.}
We emphasize that pure obligees $i$ in the network have $r_i = 0$, $\beta_i = 0$, and no outgoing edges. Consequently, they do not contribute directly to either the global financial loss $\GAR(\Xvec^t)$ or to risk propagation through the network. In practice, pure obligees typically correspond to project owners such as city municipalities or agencies that contract with a single primary contractor and do not perform bonded work themselves. While such organizations could, in principle, experience project disruptions for idiosyncratic reasons (e.g., funding or scheduling issues), these are exogenous to the surety relationship and thus outside the scope of our model. We nevertheless include pure obligees to quantify the likelihood that project owners receive incomplete work from their principals, and to evaluate how the position of these owners within the network affects their exposure. Their inclusion also enables the computation of our centrality measure, which captures differences in downstream vulnerability across obligees. We return to these points in our numerical simulations (\cref{sec:experiments}).

\section{Mean Field Analysis of Expected Risk}
\label{sec:mean_field}

We start off our analysis by considering the marginal expected failure probabilities of the stochastic process $\Xvec^t$ for each node $i$.  We will later see that this corresponds to a modified \FJ model in the opinion dynamics literature~\citep{FriedkinJohnsen}, and calculate a closed-form expression for the mean failure probabilities.  We also exploit this representation to describe an eigenvector-based centrality measure, assigning scores to each node in the graph corresponding to their risk-based centrality within the contractor network.  We close this section by providing a simple {\em monotonicity} condition under which the mean failure probabilities for each node increase due to network effects.

\subsection{Expected Failure Probabilities}
\label{sec:mean_failure_probabilities}

We start off by analyzing the {\em mean field dynamics} of our stochastic process $\Xvec^t$.  We introduce notation and set $\mean_i^t = \Exp{X_i^t}$ for all $i \in \mathcal{V}$ and $t \in \mathbb{N}$.  All proofs are deferred to \cref{sec:limit_mean_proofs}. Note that $\mean_i^t$ corresponds to the marginal failure probability of node $i$ in step $t$ of the stochastic process.  By definition in \cref{eq:sp_dynamics}, it is easy to see that $\mean_i^t$ satisfies the following recursive equation:

\begin{restatable}{lemma}{MeanFieldRecurrence}
\label{lem:mean_field_recurrence}
For all $i \in \mathcal{V}$ and $t \in \mathbb{N}$ we have that $\mean_i^0 = r_i$ and
\begin{equation}
\label{eq:mean_field_recurrence}
    \mean_i^{t  + 1} = (1 - \alpha_i) r_i + \alpha_i \sum_{j \in \din(i)} w_{ij} \mean_j^{t}.
\end{equation}
Equivalently in matrix notation,  $\meanvec^0 = \rvec$ and \( \meanvec^{t + 1} = (\I - \A) \rvec + \A \W \meanvec^{t}\).
\end{restatable}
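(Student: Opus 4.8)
The plan is to prove the scalar recurrence first and then simply translate it into matrix notation. The argument rests entirely on the tower property of conditional expectation together with the fact that the mean of a Bernoulli random variable equals its parameter, so I expect no serious obstacle; the main thing to be careful about is the measurability bookkeeping, namely that the (random) Bernoulli parameter driving $X_i^{t+1}$ is a deterministic function of $\Xvec^t$.

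First I would dispatch the base case. By the stated initialization $X_i^0 \sim \Ber(r_i)$, so $\mean_i^0 = \Exp{X_i^0} = r_i$ immediately, which also gives $\meanvec^0 = \rvec$. For the inductive/recursive step, I would condition on the full state vector $\Xvec^t$. From the dynamics in \cref{eq:sp_dynamics}, conditional on $\Xvec^t$ the variable $X_i^{t+1}$ is Bernoulli with a parameter that is a fixed (affine) function of the entries of $\Xvec^t$; hence
\[
\Exp{X_i^{t+1} \mid \Xvec^t} = (1 - \alpha_i) r_i + \alpha_i \sum_{j \in \din(i)} w_{ij} X_j^{t}.
\]
Taking the outer expectation of both sides and applying the tower property $\Exp{X_i^{t+1}} = \Exp{\Exp{X_i^{t+1} \mid \Xvec^t}}$ together with linearity of expectation (the terms $(1-\alpha_i) r_i$, $\alpha_i$, and $w_{ij}$ are all deterministic constants), I would pull the expectation through the finite sum to obtain
\[
\mean_i^{t+1} = (1 - \alpha_i) r_i + \alpha_i \sum_{j \in \din(i)} w_{ij} \mean_j^{t},
\]
which is exactly \cref{eq:mean_field_recurrence}.

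Finally, I would verify the matrix form by reading off entries. Since $\A$ is diagonal with $\A_{ii} = \alpha_i$, the $i$-th component of $(\I - \A)\rvec$ is $(1 - \alpha_i) r_i$. For the second term, note $(\W \meanvec^t)_i = \sum_{j} \W_{ij} \mean_j^t = \sum_{j \in \din(i)} w_{ij} \mean_j^t$ (the sum over $\din(i)$ because $w_{ij} = 0$ for $j \notin \din(i)$), and left-multiplying by the diagonal $\A$ scales this component by $\alpha_i$, giving $(\A \W \meanvec^t)_i = \alpha_i \sum_{j \in \din(i)} w_{ij} \mean_j^t$. Adding the two components reproduces the scalar recurrence coordinatewise, so $\meanvec^{t+1} = (\I - \A)\rvec + \A \W \meanvec^t$, completing the proof.
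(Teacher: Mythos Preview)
Your proof is correct and follows essentially the same route as the paper's: condition on $\Xvec^t$, use that the conditional mean of a Bernoulli variable equals its parameter, apply the tower property, and finish with linearity of expectation. Your explicit verification of the matrix form is slightly more detailed than the paper's, but the argument is otherwise identical.
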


\cref{lem:mean_field_recurrence} shows how the mean failure probabilities satisfy a similar dynamics equation to the original stochastic process in \cref{eq:sp_dynamics}.  The obvious next question is whether $\mean_i^t$ converges as $t \rightarrow \infty$, and whether we can characterize the rate of convergence by the underlying contractor network.  In the case that $\mean_i^t$ converges, we let:
\begin{equation}
\label{eq:mean_field_limit}
\mean_i {\triangleq} \lim_{t \rightarrow \infty} \mean_i^t = \lim_{t \rightarrow \infty} \Exp{X_i^t}
\end{equation}
denote the {\em limiting failure probability} of node $i  \in \mathcal{V}$. 

Our first main result of this section shows that the limiting failure probabilities indeed exist and satisfy a fixed point equation.
\begin{restatable}{proposition}{InverseFixedPoint}
\label{thm:inverse_fixed_point}
For any contractor network and any node $i \in \mathcal{V}$, the limiting failure probabilities $\mean_i$ exist and satisfy the following fixed point equation:
\begin{equation}
    \label{eq:mean_field_fixed_point}
    \mean_i = (1 - \alpha_i) r_i + \alpha_i \sum_{j \in \din(i)} w_{ij} \mean_j, \quad\quad \meanvec = (\I - \A) \rvec + \A \W \meanvec.
\end{equation}
Moreover, $(\I - \A\W)$ is invertible and so $\meanvec$ is unique and satisfies:
\begin{equation}
\label{eq:mean_field_solution}
    \meanvec = (\I - \A\W)^\inv(\I - \A)\rvec.    
\end{equation}
\end{restatable}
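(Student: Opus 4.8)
The plan is to read the recurrence of \cref{lem:mean_field_recurrence} as an affine iteration $\meanvec^{t+1} = \mathbf{b} + \M \meanvec^{t}$ with $\M = \A\W$ and $\mathbf{b} = (\I-\A)\rvec$, and to reduce every assertion in \cref{thm:inverse_fixed_point} to the single spectral fact that $\rho(\A\W) < 1$, where $\rho$ denotes spectral radius. Granting this, unrolling the recursion from $\meanvec^0 = \rvec$ gives $\meanvec^{t} = \M^{t}\rvec + \sum_{s=0}^{t-1}\M^{s}\mathbf{b}$; the bound $\rho(\M)<1$ forces $\M^{t}\to 0$ and makes the Neumann series $\sum_{s\ge 0}\M^{s}$ converge to $(\I-\M)^{\inv}$, so $\meanvec^{t}\to(\I-\A\W)^{\inv}(\I-\A)\rvec =: \meanvec$, establishing existence of the limit. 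The same bound shows $1$ is not an eigenvalue of $\A\W$, so $\I-\A\W$ is invertible; passing to the limit in the recurrence (equivalently, rearranging $(\I-\M)\meanvec = \mathbf{b}$) then yields both the fixed-point equation \eqref{eq:mean_field_fixed_point} and its uniqueness, hence the closed form \eqref{eq:mean_field_solution}. All of the proposition is therefore downstream of the spectral estimate.

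The difficulty is that $\A\W$ is only nonexpansive, not a contraction: in the $\ell_\infty$-induced norm one gets $\norm{\A\W}\le\norm{\A}\,\norm{\W}\le 1$, and this is tight. Indeed, row $i$ of $\A\W$ has absolute sum $\alpha_i\sum_{j\in\din(i)}w_{ij}$, which equals $\alpha_i$ whenever $\din(i)\neq\emptyset$ and is thus exactly $1$ at every pure obligee (where $\alpha_i=1$). So a naive Banach fixed-point argument fails and I cannot read $\rho(\A\W)<1$ off a norm. Instead I would exploit the graph structure by viewing $\A\W$ as a substochastic matrix, i.e.\ the transition matrix of a random walk on $\mathcal V$ with killing, where from node $i$ one moves to an in-neighbor $j\in\din(i)$ with probability $\alpha_i w_{ij}$ and is killed with the remaining probability $1-\alpha_i\Ind{\din(i)\neq\emptyset}$. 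Then $\big((\A\W)^{t}\mathbf{1}\big)_i$ is precisely the probability that the walk started at $i$ is still alive at time $t$, and showing this tends to $0$ geometrically (uniformly in $i$) is equivalent to $\rho(\A\W)<1$.

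The key structural observation is that every positive entry of $\A\W$ points to a \emph{deficient} node, meaning a node whose row sum is strictly below $1$. Row sums equal $1$ only at pure obligees; all intermediaries (where $\alpha_i<1$) and all pure principals (where $\din(i)=\emptyset$) are deficient. But a positive entry $(\A\W)_{ij}$ requires $j\in\din(i)$, so $j$ is a principal of $i$ and carries the outgoing edge $(j,i)$; hence $j$ is not a pure obligee and must be deficient. It follows that from time $1$ onward the killed walk always sits at a deficient node, and from any deficient node the one-step killing probability is at least $\delta := \min\{1,\ \min_{i\ \text{intermediary}}(1-\alpha_i)\}>0$ (strictly positive and attained, since there are finitely many intermediaries and each has $\alpha_i<1$). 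Consequently the survival probability satisfies $\big((\A\W)^{t}\mathbf{1}\big)_i \le (1-\delta)^{t-1}$ for $t\ge 1$, so $(\A\W)^{t}\to 0$ and $\rho(\A\W)<1$. I expect this step, pinning down that transitions always leave the pure-obligee set, so that neither pure-obligee rows nor cycles (which can only run through intermediaries, as every cycle node has nonempty in- and out-neighborhoods) can sustain a spectral radius of $1$, to be the main obstacle; the remaining linear algebra is routine.
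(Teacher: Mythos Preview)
Your proposal is correct and rests on the same structural observation as the paper: every positive entry of $\A\W$ points to a node that is not a pure obligee, hence has $\alpha<1$ and a deficient row sum. The paper packages this as the two-step norm bound $\norm{(\A\W)^2}<1$ (its \cref{lem:second_norm_bounded}), obtained by a direct row-sum computation on $(\A\W)^2$, and then invokes the Neumann series; you instead package it as a killed random walk whose survival probability decays geometrically, concluding $\rho(\A\W)<1$. For a nonnegative matrix these are equivalent---your survival bound $\big((\A\W)^t\mathbf 1\big)_i\le(1-\delta)^{t-1}$ \emph{is} the row-sum statement $\norm{(\A\W)^t}\le(1-\delta)^{t-1}$, and in particular your ``after one step the walk sits at a deficient node'' is exactly the paper's ``$w_{ik}>0\Rightarrow\alpha_k<1$'' inside the $(\A\W)^2$ row sum---so neither route buys more than the other. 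Your framing is more probabilistic and narrative; the paper's is a one-line matrix calculation that also feeds directly into its later rate-of-convergence and mixing-time results, which are stated in terms of $\norm{(\A\W)^2}$ rather than $\rho(\A\W)$.
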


\begin{figure}
    \centering
    \tikzset{every picture/.style={line width=0.75pt}} 
    \begin{tikzpicture}[x=0.75pt,y=0.75pt,yscale=-1,xscale=1]
\draw (150.69,63.35) node  [font=\small,rotate=-12.26] {0.6};
\draw (150.35,116.29) node  [font=\small,rotate=-333.15] {0.4};
\draw (300.72,65.48) node  [font=\small,rotate=-345.28] {1};
\draw (298.62,113.1) node  [font=\small,rotate=-27.81] {0.62};
\draw (229,150.5) node  [font=\small] {0.38};

\draw (80.5,49) circle [x radius=13.12, y radius=13.12]; 
\draw (80.5,49) node  [font=\small]  {$A$};

\draw (375,151) circle [x radius=12.81, y radius=12.81]; 
\draw (375,151) node  [font=\small]  {$E$};

\draw (81,151) circle [x radius=12.81, y radius=12.81]; 
\draw (81,151) node  [font=\small]  {$B$};

\draw (230.5,81) circle [x radius=13.12, y radius=13.12]; 
\draw (230.5,81) node  [font=\small]  {$C$};

\draw (376,49) circle [x radius=13.45, y radius=13.45]; 
\draw (376,49) node  [font=\small]  {$D$};

\draw [line width=0.75] (175.22,68.78) -- (214.75,77.52);
\draw [shift={(217.68,78.17)}, rotate=192.47] [fill=black][draw opacity=0] (10.72,-5.15) -- (0,0) -- (10.72,5.15) -- (7.12,0) -- cycle;

\draw [line width=0.75] (126.15,58.33) -- (93.36,51.63);
\draw [line width=0.75] (125.33,128.81) -- (92.45,145.27);
\draw [line width=0.75] (175.38,105.27) -- (215.74,87.5);
\draw [shift={(218.49,86.29)}, rotate=156.24] [fill=black][draw opacity=0] (10.72,-5.15) -- (0,0) -- (10.72,5.15) -- (7.12,0) -- cycle;

\draw [line width=0.75] (206,150.58) -- (93.81,150.96);
\draw [line width=0.75] (252,150.58) -- (359.19,150.95);
\draw [shift={(362.19,150.96)}, rotate=180.2] [fill=black][draw opacity=0] (10.72,-5.15) -- (0,0) -- (10.72,5.15) -- (7.12,0) -- cycle;

\draw [line width=0.75] (273.61,101.31) -- (242.38,86.6);
\draw [line width=0.75] (323.63,125.51) -- (360.84,143.97);
\draw [shift={(363.53,145.31)}, rotate=206.39] [fill=black][draw opacity=0] (10.72,-5.15) -- (0,0) -- (10.72,5.15) -- (7.12,0) -- cycle;

\draw [line width=0.75] (325.45,60.06) -- (359.92,52.52);
\draw [shift={(362.85,51.88)}, rotate=167.65] [fill=black][draw opacity=0] (10.72,-5.15) -- (0,0) -- (10.72,5.15) -- (7.12,0) -- cycle;

\draw [line width=0.75] (276,70.94) -- (243.32,78.17);
\end{tikzpicture}
    \caption{\cref{fig:example} but with normalized edge weights. If we set $\rvec = [.2,.1,.05,0,0]$ and $\mathbf{\alpha} = [0,0,0.25,1,1]$ then $\meanvec = [0.2,0.1,0.0775,0.0775,0.08605]$.  Thus, we see that contractor $C$'s risk score increases from $0.05$ to $0.0775$ due to their position within the network.  The pure obligee $D$ gets a risk score equal to its sole subcontractor $C$, while obligee $E$'s risk score is a weighted average of both its principals.}
    \label{fig:example_mean_failure}
\end{figure}
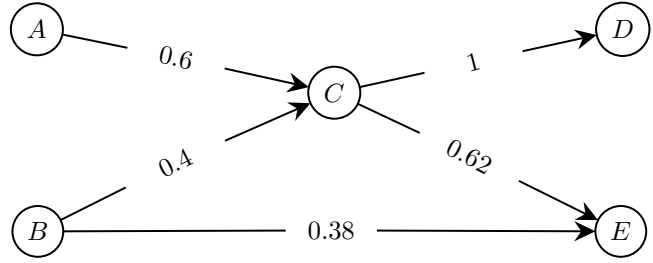

\cref{thm:inverse_fixed_point} establishes a closed-form expression for the limiting failure probabilities $\meanvec$ in terms of the adjacency matrix $\W$, idiosyncratic risk scores $\rvec$, and failure propagation probabilities $\A$.  In \cref{fig:example_mean_failure} we illustrate the average failure probabilities computed over the contractor network from \cref{ex:risk_propagation}.  First note that $\mean_i = r_i$ for any pure principals $i$, since they do not experience any network effects.  However, the intermediary $C$'s mean failure probability {\em increases} when taking into account network structure due to their position in the network (we provide conditions under which this occurs in \cref{sec:marginal_risk_increase}).

The operator $(\I - \A\W)^{-1}$ admits multiple interpretations across related literatures. In Markov chain theory, it parallels the fundamental matrix of absorbing chains, where each entry gives the expected number of visits to a state prior to absorption~\citep{kemeny1969finite}. From a graph-theoretic perspective, it plays a role similar to the pseudoinverse of the graph Laplacian, which encodes mean hitting times, commute times, and effective resistances~\citep{lovasz1993random,doyle1984random,chung1997spectral}. In our context, this highlights that $(\I - \A\W)^{-1}$ aggregates contributions from all higher-order paths in the contractor network, providing an interpretable link between local idiosyncratic risks and their amplified global effects.

We defer the proof of \cref{thm:inverse_fixed_point} to~\cref{sec:limit_mean_proofs}, and here we outline the key technical steps. Our approach begins by establishing in \cref{lem:second_norm_bounded} that the $\ell_\infty$-induced norm satisfies $\norm{(\A\W)^2} < 1$. This implies that the power sequence $(\A\W)^t$ vanishes as $t \to \infty$, ensuring the existence and uniqueness of the limiting mean field. Importantly, while the process may not be contractive in a single iteration (since pure obligees with $\alpha_i = 1$ can only be influenced by their principals), it is always contractive after two iterations. Intuitively, every obligee is connected to at least one intermediary or principal with $\alpha_i < 1$, so risk propagation cannot sustain itself indefinitely. This ``two-step contraction'' property is a structural feature of the surety process.


\begin{restatable}{lemma}{SecondNormBounded}
\label{lem:second_norm_bounded}
Let $\A$ be a diagonal matrix with $\A_{ii}=\alpha_i \text{ for all } i \in \mathcal{V}$ and let $\W$ be the \srsedit{row-substochastic} adjacency matrix of the contractor graph $G$. Then we have $\norm{(\A\W)^2} < 1$ and $(\A\W)^t \to 0$ as $t \to \infty$.
\end{restatable}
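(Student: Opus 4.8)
The plan is to work directly with the $\ell_\infty$-induced matrix norm, which for a nonnegative matrix equals its largest absolute row sum. The first thing I would observe is \emph{why} a single step fails to contract: since $\W$ is row-stochastic on every node that has an in-neighbor and $\A$ is diagonal, $\norm{\A\W} = \max_{i} \alpha_i \sum_{j \in \din(i)} w_{ij} = \max_i \alpha_i$, which equals $1$ because pure obligees have $\alpha_i = 1$. So the bound $\norm{\A\W} < 1$ is simply false, and the whole content of the lemma is that squaring restores strict contraction. This motivates computing the row sums of $(\A\W)^2$ explicitly.

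Writing $(\A\W)_{ij} = \alpha_i w_{ij}$ and expanding the product, the $i$-th row sum of $(\A\W)^2$ is
\begin{equation*}
R_i = \sum_{k} \big((\A\W)^2\big)_{ik} = \alpha_i \sum_{j \in \din(i)} w_{ij}\, \alpha_j \, s_j, \qquad s_j \triangleq \sum_{k} w_{jk},
\end{equation*}
where $s_j \in \{0,1\}$ is the row sum of $\W$ at $j$ (equal to $1$ when $j$ has at least one in-neighbor and $0$ when $j$ is a pure principal). The crucial structural observation, which is the heart of the argument, is that every $j \in \din(i)$ has an outgoing edge (namely to $i$), hence $\dout(j) \neq \emptyset$, so $j$ cannot be a pure obligee. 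Since pure obligees are the \emph{only} nodes with $\alpha_j = 1$, this forces $\alpha_j < 1$ for every in-neighbor $j$ of $i$. Combined with $s_j \le 1$ this gives $\alpha_j s_j \le \alpha_j < 1$ for each such $j$.

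It then remains to bound $R_i$. Using $\alpha_i \le 1$, $w_{ij} \ge 0$, and $\sum_{j \in \din(i)} w_{ij} = 1$ (or the empty sum when $i$ is a pure principal), I would conclude
\begin{equation*}
R_i \;\le\; \sum_{j \in \din(i)} w_{ij}\, \alpha_j s_j \;<\; \sum_{j \in \din(i)} w_{ij} \;=\; 1,
\end{equation*}
the strict inequality holding because each $\alpha_j s_j < 1$ and the weights form a genuine convex combination (with $R_i = 0 < 1$ trivially when $\din(i) = \emptyset$). Taking the maximum over the finitely many nodes yields $\norm{(\A\W)^2} = \max_i R_i < 1$. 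For the limiting claim, submultiplicativity gives $\norm{(\A\W)^{2k}} \le \norm{(\A\W)^2}^k \to 0$ and $\norm{(\A\W)^{2k+1}} \le \norm{\A\W}\,\norm{(\A\W)^2}^k \to 0$ (since $\norm{\A\W} \le 1$), so both subsequences vanish and $(\A\W)^t \to 0$; equivalently one may note $\rho(\A\W) \le \norm{(\A\W)^2}^{1/2} < 1$. The only genuine obstacle is the structural step identifying that the non-contractive nodes (pure obligees, with $\alpha = 1$) can never appear as in-neighbors, which is precisely what makes the contraction kick in after two steps rather than one.
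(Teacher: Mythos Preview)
Your proof is correct and follows essentially the same approach as the paper: compute the row sums of $(\A\W)^2$, observe that any $j$ with $w_{ij}>0$ has an outgoing edge and therefore cannot be a pure obligee (so $\alpha_j<1$), conclude each row sum is strictly less than $1$, and then use submultiplicativity to handle $(\A\W)^t\to 0$. Your treatment is slightly more explicit about the edge cases (pure principals with $R_i=0$, the factor $s_j\in\{0,1\}$) than the paper's, but the argument is the same.
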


{This bound is useful for showing that the Neumann series of $\A\W$ converges, which we formalize in the following corollary:}
\begin{restatable}{corollary}{Neumann}
\label{cor:neumann}
Let $\A$ be a diagonal matrix with $\A_{ii}=\alpha_i \text{ for all } i \in \mathcal{V}$ and let $\W$ be the \srsedit{row-substochastic} adjacency matrix of the contractor graph $G$. Then $(\I-\A\W)$ is invertible, and the Neumann series $\sum_{t=0}^{\infty} (\A\W)^t$ converges to $(\I-\A\W)^{\inv}$.
\end{restatable}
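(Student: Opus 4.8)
The plan is to reduce everything to the standard Neumann series theorem applied to the square $(\A\W)^2$, which is the matrix we actually control via \cref{lem:second_norm_bounded}. Write $\M = \A\W$ for brevity; the lemma gives $\norm{\M^2} < 1$. The one subtlety is that we do \emph{not} have $\norm{\M} < 1$ in general—pure obligees with $\alpha_i = 1$ prevent a one-step contraction—so the textbook argument cannot be applied to $\M$ directly, and I would route it through $\M^2$ instead.

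First I would invoke the classical Neumann result for $\M^2$: since $\norm{\M^2} < 1$, the matrix $(\I - \M^2)$ is invertible and $\sum_{k=0}^{\infty} (\M^2)^k = (\I - \M^2)^{\inv}$, with the partial sums converging geometrically in $\norm{\cdot}$. To obtain invertibility of $(\I - \M)$ itself, I would then use the factorization $\I - \M^2 = (\I - \M)(\I + \M)$. Since the left-hand side is invertible and all matrices are square, each factor on the right must be invertible; in particular $(\I - \A\W)$ is invertible, as claimed.

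It remains to identify $\sum_{t=0}^{\infty}\M^t$ with $(\I - \M)^{\inv}$. I would split the series into even- and odd-indexed terms and factor,
\[
\sum_{t=0}^{\infty} \M^t = \sum_{k=0}^{\infty} \M^{2k} + \sum_{k=0}^{\infty} \M^{2k+1} = (\I + \M)\sum_{k=0}^{\infty} (\M^2)^k = (\I + \M)(\I - \M^2)^{\inv},
\]
where the rearrangement is legitimate because both subseries converge absolutely by the geometric bound above. Substituting $(\I - \M^2)^{\inv} = (\I + \M)^{\inv}(\I - \M)^{\inv}$, which is valid now that both factors are known invertible, collapses the right-hand side to $(\I - \M)^{\inv}$. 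Alternatively, one can bypass the factorization bookkeeping via the telescoping identity $(\I - \M)\sum_{t=0}^{N}\M^t = \I - \M^{N+1}$: since $\M^{N+1} \to 0$ by \cref{lem:second_norm_bounded} and the partial sums are Cauchy (the even/odd grouping shows their tails decay geometrically), letting $N \to \infty$ yields $(\I - \M)\sum_{t=0}^{\infty}\M^t = \I$, and symmetrically on the right. The only real obstacle in either route is recognizing that the hypothesis controls $\M^2$ rather than $\M$; once the argument is funneled through $\M^2$, the remainder is routine.
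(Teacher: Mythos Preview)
Your proposal is correct and follows essentially the same strategy as the paper: both funnel the argument through the two-step contraction $\norm{(\A\W)^2}<1$ from \cref{lem:second_norm_bounded}, split the Neumann series into even and odd powers, and then close with the telescoping identity $(\I-\M)\sum_{t=0}^{N}\M^t=\I-\M^{N+1}\to\I$. Your factorization $\I-\M^2=(\I-\M)(\I+\M)$ is a small algebraic detour the paper does not take (it instead bounds the norm of the full series directly before telescoping), but the underlying mechanism is identical and your ``alternative'' route matches the paper's argument almost verbatim.
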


{Consequently, we conclude that the matrix $(\I - \A\W)$ is invertible, and its inverse is given by the Neumann series.  This guarantees both existence and uniqueness of the limiting vector $\meanvec$, which satisfies the fixed-point formula in \cref{eq:mean_field_solution}.  We complete the proof by expanding the recurrence relation in \cref{eq:mean_field_recurrence} and taking limits as $t$ goes to infinity of the mean failure probability vector.}

\cref{thm:inverse_fixed_point} highlights that, for all $i \in \mathcal{V}$, the failure probabilities $\mean_i^t$ converge to their limiting failure probabilities $\mean_i$, and offers a closed form expression for $\mean_i$.  This allows us to calculate the network-adjusted limiting failure probabilities for each node in the network for an arbitrary contractor graph.  In our next result we quantify the rate of convergence.
We establish that in directed acyclic graphs, the convergence occurs in finite time.  In particular, the process converges in just $d$ steps, where $d  > 0$ is the longest path length in the graph.

\begin{restatable}{proposition}{LimitingConvergenceRateAcyclic}
\label{thm:limiting_probabilities_rate_of_convergence}
For any contractor network and $t \geq 2$ we have that:
    \[
    \norm{\meanvec^t - \meanvec} \leq \left( 1 + \frac{2}{1 - \norm{(\A\W)^2}} \right) \norm{\rvec} \norm{(\A\W)^2}^{\lfloor t/2 \rfloor}.
    \]
Moreover, if $G$ is acyclic and we denote by $d > 0$
its maximum path length, then for all $t \geq d$ and all $i \in \mathcal{V}$ we have $\mean_i^t = \mean_i$.
\end{restatable}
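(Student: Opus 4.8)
The plan is to treat the two claims separately, since they call for different techniques: the convergence-rate bound is a Neumann-series estimate, while the finite-time statement is a combinatorial induction over the acyclic structure.

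For the rate bound, I would first derive an exact error recursion. Subtracting the fixed-point identity $\meanvec = (\I - \A)\rvec + \A\W\meanvec$ of \cref{thm:inverse_fixed_point} from the recurrence $\meanvec^{t+1} = (\I-\A)\rvec + \A\W\meanvec^t$ of \cref{lem:mean_field_recurrence} cancels the constant term and yields $\meanvec^{t+1} - \meanvec = \A\W(\meanvec^t - \meanvec)$. Iterating from $\meanvec^0 = \rvec$ gives the closed form $\meanvec^t - \meanvec = (\A\W)^t(\rvec - \meanvec)$, so that $\norm{\meanvec^t - \meanvec} \le \norm{(\A\W)^t}\,\norm{\rvec - \meanvec}$. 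It then remains to bound each factor.

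To bound $\norm{(\A\W)^t}$ I would split on the parity of $t$ and use submultiplicativity with \cref{lem:second_norm_bounded} ($\norm{(\A\W)^2} < 1$) together with the elementary fact $\norm{\A\W} \le \norm{\A}\,\norm{\W} \le 1$ (since $\W$ is row-stochastic and $\A$ is diagonal with entries in $[0,1]$, both have $\ell_\infty$-induced norm at most one): for even $t$ this gives $\norm{(\A\W)^t} \le \norm{(\A\W)^2}^{t/2}$, and for odd $t$ the leftover factor of $\A\W$ is absorbed by $\norm{\A\W}\le 1$, so in both cases $\norm{(\A\W)^t} \le \norm{(\A\W)^2}^{\lfloor t/2\rfloor}$. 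To bound $\norm{\rvec - \meanvec}$ I would write $\norm{\rvec - \meanvec} \le \norm{\rvec} + \norm{\meanvec}$ with $\meanvec = (\I-\A\W)^{-1}(\I-\A)\rvec$; regrouping the Neumann series from \cref{cor:neumann} as $\sum_{k\ge 0}(\A\W)^{2k}(\I + \A\W)$ and again invoking $\norm{\A\W}\le 1$ and $\norm{\I-\A}\le 1$ yields $\norm{(\I-\A\W)^{-1}} \le 2/(1 - \norm{(\A\W)^2})$ and hence $\norm{\meanvec} \le \frac{2}{1-\norm{(\A\W)^2}}\norm{\rvec}$. Combining the two bounds produces exactly the stated prefactor $1 + 2/(1-\norm{(\A\W)^2})$.

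For the acyclic case I would induct on a topological depth: assign each node $i$ the length in edges of the longest directed path ending at $i$, so pure principals have depth $0$ and the maximum depth equals $d$. The base case is immediate, since a pure principal has $\alpha_i = 0$ and thus $\mean_i^t = r_i = \mean_i$ for all $t$. For the inductive step, a node of depth $\ell$ has every in-neighbor of depth at most $\ell-1$, so by the inductive hypothesis each in-neighbor has reached its limit by time $\ell - 1$; substituting these stabilized values into the recurrence of \cref{lem:mean_field_recurrence} at any $t \ge \ell$ and comparing with \cref{eq:mean_field_fixed_point} gives $\mean_i^t = \mean_i$. Since every node has depth at most $d$, this yields $\mean_i^t = \mean_i$ for all $i$ and all $t \ge d$. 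The main obstacle lies in controlling constants precisely in the first part — the parity split for $\norm{(\A\W)^t}$ and the even/odd regrouping of the Neumann series must combine the two contractive factors into the single $\lfloor t/2\rfloor$ exponent and the clean prefactor. In the acyclic part, the subtlety is the sharp threshold $t \ge d$: a naive matrix argument only gives $(\A\W)^t = 0$ for $t \ge d+1$, since walks longer than the longest path vanish, so the depth-based induction — which settles the deepest node exactly at time $d$ — is what sharpens the bound by one step.
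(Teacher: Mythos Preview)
Your proposal is correct. For the rate bound, your argument is essentially the paper's: both arrive at $\meanvec^t-\meanvec=(\A\W)^t(\rvec-\meanvec)$ (the paper writes this out as the tail of the Neumann series plus $(\A\W)^t\rvec$, which is the same thing expanded), then bound $\norm{(\A\W)^t}$ by the parity split and the constant by $\norm{(\I-\A\W)^{-1}}\le 2/(1-\norm{(\A\W)^2})$. Your error-recursion framing is a bit tidier but the substance is identical.

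For the acyclic part the approaches diverge. The paper argues at the matrix level: entries of $(\A\W)^t$ count weighted length-$t$ walks, so $(\A\W)^t=0$ for $t>d$, which gives $\meanvec^t=\meanvec$ for $t\ge d+1$; the sharp case $t=d$ is handled by a separate observation that the only nonzero columns of $(\A\W)^d$ correspond to pure principals (start of a longest path), for which $(\I-\A)$ acts as the identity, so $(\A\W)^d\rvec=(\A\W)^d(\I-\A)\rvec$ and the extra term folds into the series. Your topological-depth induction is a genuinely different route that avoids this two-stage argument: by inducting over depth you stabilize each node exactly when its last in-neighbor has stabilized, which automatically lands the deepest node at time $d$ rather than $d+1$. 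Your approach is more elementary and gives the sharp threshold in one pass; the paper's matrix argument is closer in spirit to the operator-norm machinery used elsewhere and makes explicit why the $t=d$ case needs special care.
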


\paragraph{Relation to \citet{FriedkinJohnsen}.} 
Our model is closely related to the social influence model under static conditions introduced by \citet{FriedkinJohnsen}, which describes a process in which an individual's opinion at time $t + 1$ is represented as a real-valued linear function of both their innate opinion (determined by exogenous variables) and the opinions of their in-neighbors in the previous time step $t$.
Let $\meanvec^t$ denote the vector of opinions at time $t\geq 0$, where $\rvec=\meanvec^0$ denotes the vector of inherent opinions that each individual holds at time $t=0$.
Then opinions evolve according to the following dynamics for all $t \in \mathbb{N}$:
\begin{equation}\label{eq:fj_dynamic}
    \meanvec^{t+1} = \gamma \rvec + \alpha \W \meanvec^{t},
\end{equation}
where $\W$ is an influence matrix in which $w_{ij}$ corresponds to the extent to which $i$ takes $j$'s opinion into account, and $\alpha,\gamma$ are scalar bias parameters that represent the importance nodes place on the weighted sum of their neighbors' opinions and on their inherent opinion, respectively.
Then if $\alpha<1$ and $\alpha^{-1}$ is not an eigenvalue of $\W$, the process converges to
\[
    \meanvec \triangleq \lim_{t\rightarrow\infty} \meanvec^t = \lim_{t\rightarrow\infty} \left( \I + \alpha\W + \cdots + \alpha^t\W^t \right)\gamma\rvec = (\I-\alpha\W)^{-1}\gamma\rvec.
\]
The stationary distribution then reflects each individual's final opinion as a scalar on some spectrum.

Adapting their model to our context of risk, we interpret a node's {\em opinion} as its probability of failure, which is dependent on failure probabilities of subcontractors according to the transition matrix $\W$ in the same manner that individuals in the social influence model take their neighbors' opinions into account.
Thus in our context we require that ``opinions" take values in $[0,1]$.
Moreover, to ensure that $\meanvec^t$ maintains its interpretation as a vector of probabilities for all $t\in\mathbb{N}$, we also require $\W$ to be \srsedit{row-substochastic} and that $\alpha,\gamma>0$ satisfy $\alpha+\gamma=1$.
Then we can rewrite \Cref{eq:fj_dynamic} as
\(
    \meanvec^{t+1} = (1-\alpha)\rvec + \alpha\W\meanvec^{t}.
\)

Comparing with our dynamics in \Cref{eq:mean_field_recurrence}, the main difference from \citet{FriedkinJohnsen} is that they assume a common bias parameter $\alpha$, whereas our model allows node-specific bias parameters by replacing $\alpha$ and $\gamma$ with diagonal matrices $\A$ and $(\I-\A)$. Note that in contrast to the Friedkin-Johnsen model, we consider a directed network, so to handle nodes $i \in \V$ that lack in- or out-edges we fix their $\alpha_i$ as described in \Cref{sec:preliminary}. More broadly, their framework models the deterministic evolution of average opinions, while ours defines a discrete stochastic process over binary outcomes, yielding the full distribution of failures across the network rather than only mean behavior.

\subsection{Identifying the Impact of Network Structure}
\label{sec:page_rank}

The previous discussion highlights that the mean failure probabilities $\meanvec$ satisfy an equilibrium-type dynamic. This suggests that the underlying risk propagation in the network stabilizes to a fixed distribution of failure probabilities across nodes. Here, we exploit the closed-form definition of the failure probabilities to systematically identify ``risky’’ nodes within the network --- those whose individual or structural properties contribute significantly to overall systemic risk.

From \Cref{thm:inverse_fixed_point} we established that, for all $i \in \mathcal{V}$ the failure probabilities \(\mean_i^t\) converge to their steady-state values \(\mean_i\), which satisfy the fixed-point equation:
\[
\meanvec = (\I-\A\W)^\inv(\I-\A)\rvec.
\]
This expression captures how the individual risk factors \(\rvec\) interact with the network structure, where \(\A\) and \(\W\) determine the interplay between direct risks and dependencies between nodes.

To better understand the network-wide risk contribution, we consider the average limiting failure probability across all nodes, given by (where we denote $\mathbf{1}$ as the vector of all $1$s):
\begin{equation}\label{eqn:gar_x}
\frac{\mathbf{1}^\top}{n} \meanvec = \underbrace{ \frac{\mathbf{1}^\top}{n}(\I-\A\W)^{-1}(\I-\A) }_{\uvec^\top} \rvec,
\end{equation}
where $n$ is the total number of nodes in the network. This expression shows that the overall failure risk can be rewritten as a {\em network-adjusted re-weighting} of the individual idiosyncratic risk scores \(\rvec\). The term $(\I - \A\W)^{-1}$ highlights that risk exposure is not limited to direct neighbors since it aggregates contributions from all paths in the network, with longer paths down-weighted by successive products of exposure probabilities.

\begin{definition}
\label{def:page_rank}
    We set
    \begin{equation}
    \label{eq:network_page_rank}
    \uvec^\top = \frac{\mathbf{1}^\top}{n}(\I-\A\W)^{-1}(\I-\A)
    \end{equation}
    to denote the {\bf risk-based centrality vector}.
\end{definition}
This definition is analogous to traditional PageRank in web search algorithms~\citep{BRIN1998107,page1999pagerank,kleinberg1999authoritative}, where importance is assigned based on structural connectivity.  Here, however, the risk-based centrality captures how individual nodes influence the system-wide failure probability. It accounts for both direct contributions from individual risk levels (through $\A$) and the indirect propagation of risk through network interactions (through $\W$).
Note that by construction, $u_i = 0$ for all pure obligees $i$, i.e. nodes with $\alpha_i = 1$.  This follows from the structure of $\I - \A$, where nodes corresponding to obligees contain only zero entries.  Intuitively, pure obligees absorb risk but do not emit risk, and so their risk-based centrality measure is zero.

More generally, the value of $u_i$ for $i \in \V$ provides an interpretation of how risk is \emph{structurally} amplified within the network.\footnote{Here we are primarily interested in how the \emph{structure} amplifies individual risks, so we focus on the average failure probability over all nodes.  If we instead consider the expected aggregate loss $\GAR(\meanvec)$, we get an alternate centrality measure $\tilde{\uvec}^\top = \betavec^\top(\I-\A\W)^{-1}(\I-\A)$ that weights failure probabilities to describe the expected contribution of each node to aggregate loss.}
Nodes with higher $u_i$ contribute more significantly to the overall risk amplification.  These nodes not only possess inherent risk but also transmit risk to others, affecting the network-wide failure probability.

\subsection{Impact of Risk Propagation on Mean Field}
\label{sec:marginal_risk_increase}
We close out this section with a central concern for contractor networks, the potential for risk amplification due to a contractor's location within the network.  While our model allows for flexible dynamics, it remains unclear how individual risk exposures respond to changes in the network structure.  In particular, we seek to understand whether, for each $i \in \mathcal{V}$, a firm's limiting failure probability $\mean_i$ is amplified (larger than $r_i$) when it interacts with riskier principals.  To facilitate this analysis, we introduce a structural assumption on contracting organizations, which posits that contractors engage with organizations with higher risk. While this may not hold universally, we show that it is a sufficient condition for $\meanvec \geq \rvec$ across the entire network.

\begin{assumption}
\label{ass:larger_neighbors}
    For all intermediaries $i$ \( \in \mathcal{V}\), the average risk of their principals is greater than or equal to their own inherent idiosyncratic risk score, i.e.
    \[
    \sum_{j \in \din(i)} w_{ij} r_j \geq r_i.
    \]
\end{assumption}

While \Cref{ass:larger_neighbors} may appear restrictive, it aligns with patterns observed in real-world contractor networks~\citep{bci2021supplychain}. In practice, principals that subcontract work are often larger and hence less risky, whereas the subcontractors they engage with are typically smaller and more exposed to risk, which is in part due to a less rigorous vetting process \citep{dietz2018mitigating}.  Although this assumption does not hold universally in our empirical setting in \cref{sec:experiments}, we demonstrate that similar results emerge in our application.

We now present our main result for this section (see \cref{app:mean_increase_proofs} for the proof).

\begin{restatable}{theorem}{AssumptionMeanIncrease}
\label{thm:average_increase}
Under \Cref{ass:larger_neighbors}, the mean failure probability vector evolves monotonically: for all \(t\) \( \in \mathbb{N}\),
\[
    \meanvec^t \leq \meanvec^{t+1}, 
    \qquad 
    \Exp{\GAR(\Xvec^t)} \leq \Exp{\GAR(\Xvec^{t+1})}.
\]
Hence,
\[
    \meanvec \geq \rvec, 
    \qquad 
    \Exp{\GAR(\Xvec^\infty)} \geq \Exp{\GAR(\Xvec^0)}.
\]
Moreover, increasing the exposure parameter \(\alpha_i\) of any intermediary \(i\) can only raise the entries of \(\meanvec\):
\[
    \frac{\partial \meanvec}{\partial \alpha_i}
    = (\I-\A\W)^{-1} \frac{\partial \A}{\partial \alpha_i} \left[ \W\meanvec - \rvec \right]
    \;\geq\; \mathbf{0}.
\]
Under the opposite of \Cref{ass:larger_neighbors}, all inequalities hold in the reverse direction \srsedit{for principals and intermediaries}.
\end{restatable}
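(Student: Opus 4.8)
The plan is to prove the three claims (entrywise monotonicity of the sequence $\meanvec^t$, the consequent bound $\meanvec \geq \rvec$, and the sign of $\partial\meanvec/\partial\alpha_i$) in order, since each builds on the previous one. The engine throughout is the \emph{nonnegativity} of the matrix $\A\W$ (a product of a nonnegative diagonal matrix and a nonnegative row-stochastic matrix) together with the Neumann-series representation $(\I-\A\W)^{-1}=\sum_{t\geq 0}(\A\W)^t\geq \mathbf{0}$ from \cref{cor:neumann}.

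First I would establish $\meanvec^t \leq \meanvec^{t+1}$ by induction on $t$ using the recurrence $\meanvec^{t+1}=(\I-\A)\rvec+\A\W\meanvec^t$ from \cref{lem:mean_field_recurrence}. For the base case, I compute entrywise that $m_i^1 - m_i^0 = \alpha_i\big(\sum_{j\in\din(i)} w_{ij} r_j - r_i\big)$, and verify this is $\geq 0$ for every node type: for intermediaries it is exactly \cref{ass:larger_neighbors}; for pure principals $\alpha_i=0$ makes it vanish; and for pure obligees $r_i=0$ makes the bracket a nonnegative weighted sum. For the inductive step, subtracting consecutive recurrences telescopes the idiosyncratic term and gives $\meanvec^{t+1}-\meanvec^{t} = \A\W(\meanvec^{t}-\meanvec^{t-1})$, so a nonnegative matrix applied to a nonnegative vector preserves the sign. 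Since $\meanvec^t$ is entrywise monotone and converges to $\meanvec$ by \cref{thm:inverse_fixed_point}, taking $t\to\infty$ yields $\meanvec \geq \meanvec^0 = \rvec$. The loss statements then follow immediately because $\Exp{\GAR(\Xvec^t)} = \betavec^\top\Exp{\Xvec^t} = \betavec^\top\meanvec^t$ with $\betavec\geq\mathbf{0}$, so monotonicity of $\meanvec^t$ in each coordinate carries through the inner product, and the $t=0$ versus $t=\infty$ comparison gives $\Exp{\GAR(\Xvec^\infty)} = \betavec^\top\meanvec \geq \betavec^\top\rvec = \Exp{\GAR(\Xvec^0)}$.

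For the derivative, I would differentiate the fixed-point identity $(\I-\A\W)\meanvec = (\I-\A)\rvec$ implicitly in $\alpha_i$. Writing $\partial\A/\partial\alpha_i$ for the single-entry matrix with a $1$ in position $(i,i)$, the product rule gives $-\tfrac{\partial\A}{\partial\alpha_i}\W\meanvec + (\I-\A\W)\tfrac{\partial\meanvec}{\partial\alpha_i} = -\tfrac{\partial\A}{\partial\alpha_i}\rvec$, and rearranging produces exactly the stated formula $\tfrac{\partial\meanvec}{\partial\alpha_i}=(\I-\A\W)^{-1}\tfrac{\partial\A}{\partial\alpha_i}[\W\meanvec-\rvec]$. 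To sign this, note $\tfrac{\partial\A}{\partial\alpha_i}[\W\meanvec-\rvec]$ is supported only on coordinate $i$, with value $(\W\meanvec)_i - r_i = \sum_{j}w_{ij}m_j - r_i$. Using $\meanvec\geq\rvec$ (just proved) and $\W\geq\mathbf{0}$ gives $(\W\meanvec)_i \geq (\W\rvec)_i = \sum_j w_{ij} r_j \geq r_i$ for intermediary $i$ by \cref{ass:larger_neighbors}, so this vector is nonnegative; multiplying by the nonnegative matrix $(\I-\A\W)^{-1}$ preserves nonnegativity, giving $\partial\meanvec/\partial\alpha_i\geq\mathbf{0}$.

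The main subtlety — and the step I would be most careful about — is the base case of the induction, where the assumption on intermediaries alone is not obviously enough: I must confirm the inequality separately for pure principals and pure obligees, which is where the boundary conventions $\alpha_i=0$ and $r_i=0$ do the work. The reverse-direction claim requires essentially no new argument: under $\sum_j w_{ij} r_j \leq r_i$ the base-case difference is nonpositive, the same telescoping shows $\meanvec^t$ is monotone \emph{decreasing}, and the coordinate-$i$ term $(\W\meanvec)_i - r_i$ flips sign by the analogous chain of inequalities, so every conclusion reverses.
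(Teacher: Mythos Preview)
Your proposal is correct and follows essentially the same approach as the paper: induction on $t$ for the monotonicity of $\meanvec^t$ (with the same base case and telescoping step), linearity for the loss statements, and differentiation of the fixed-point relation followed by the chain $(\W\meanvec)_i \geq (\W\rvec)_i \geq r_i$ to sign the derivative. The only cosmetic difference is that the paper differentiates the explicit inverse formula $\meanvec=(\I-\A\W)^{-1}(\I-\A)\rvec$ via the matrix-inverse derivative rule rather than differentiating the fixed-point identity implicitly, and your handling of the base case across node types is in fact slightly more explicit than the paper's.
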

\begin{remark}
The monotonicity in \cref{thm:average_increase} can be viewed through the lens of potential theory.  Under \Cref{ass:larger_neighbors} the mean failure probabilities evolve as a subharmonic function on the contractor graph, so the limiting vector $\meanvec$ is the smallest subharmonic majorant of the initial risks $\rvec$~\citep{chung1997spectral}. Reversing the assumption yields the superharmonic analogue.
\end{remark}

The first part of \Cref{thm:average_increase} shows that, under \Cref{ass:larger_neighbors}, the mean failure probability for each node $\mean_i$ evolves monotonically over time. Consequently, the expected global average loss $\Exp{\GAR(\Xvec^t)}$ is also monotone with respect to $t$ and converges to the global average risk associated with the stationary process.
The intuition is straightforward: risk propagates through the network because intermediaries rely on principals who, on average, are at least as risky as themselves. Over time, this accumulation amplifies failure probabilities throughout the network.  A central insight here is that under \Cref{ass:larger_neighbors}, {\em incorporating network structure consistently yields a higher expected global loss than models that assume independent failures.}
The second part of \Cref{thm:average_increase} examines monotonicity with respect to the network-associated failure propagation rates $\alpha_i$ for all $i \in \V$. This highlights the robustness of \Cref{thm:average_increase}, since its conclusions hold across a wide range of $\alpha$ values and underscore how systemic risk is shaped jointly by idiosyncratic contractor risk and network position.

Our next assumption strengthens \Cref{ass:larger_neighbors} by requiring that a contractor’s neighbors, on average, have not just a higher inherent risk but a risk level that exceeds the contractor’s by a fixed margin.  We will then show the propagation of the margin in terms of its downstream cumulative risk measures.
\begin{assumption}
\label{ass:strictly_larger_neighbors}
For all intermediaries $i$, the average risk of their neighbors is strictly greater than their own inherent idiosyncratic risk score, i.e. there exists a $\delta > 0$ such that
\[\sum_{j \in \din(i)} w_{ij} r_j - r_i \geq \delta r_i.\]
\end{assumption}
Using this, we highlight the impact of {risk propagation in the network} by presenting lower bounds on the increase in failure probabilities.  Indeed, under \Cref{ass:strictly_larger_neighbors} we can establish the following result (proof deferred to \cref{app:mean_increase_proofs}).

\begin{restatable}{proposition}{MeanFieldGap}\label{thm:gap}
Under \Cref{ass:strictly_larger_neighbors} we have that:
\begin{align*}
    \meanvec - \rvec &\geq \delta (\I-\A\W)^\inv \A \rvec, \\
    \Exp{\GAR(\Xvec^\infty)} - \Exp{\GAR(\Xvec^0)} &\geq \delta \beta^\top (\I-\A\W)^\inv \A \rvec.
\end{align*}
\end{restatable}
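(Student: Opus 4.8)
The plan is to work directly from the closed-form solution $\meanvec = (\I-\A\W)^{-1}(\I-\A)\rvec$ established in \cref{thm:inverse_fixed_point}, reduce the claim to an entrywise inequality on a single vector, and then exploit the nonnegativity of the resolvent $(\I-\A\W)^{-1}$ to conclude.

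First I would rewrite the gap $\meanvec - \rvec$ in a convenient form. Inserting $\rvec = (\I-\A\W)^{-1}(\I-\A\W)\rvec$ into the closed form gives
\[
\meanvec - \rvec = (\I-\A\W)^{-1}\big[(\I-\A)\rvec - (\I-\A\W)\rvec\big] = (\I-\A\W)^{-1}\A(\W\rvec - \rvec),
\]
so the entire gap is the resolvent applied to the vector $\A(\W\rvec-\rvec)$. Equivalently, this follows by subtracting $\rvec$ from the fixed-point equation $\meanvec = (\I-\A)\rvec + \A\W\meanvec$ and collecting terms. This identity is attractive because it separates the network structure (the resolvent) from the data-dependent driving term $\A(\W\rvec-\rvec)$ that \Cref{ass:strictly_larger_neighbors} controls.

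The key step is then to establish the entrywise bound $\A(\W\rvec - \rvec) \geq \delta\,\A\rvec$, which I would verify coordinate by coordinate across the three node classes. For an intermediary $i$ we have $\alpha_i \in (0,1)$, and \Cref{ass:strictly_larger_neighbors} gives $\sum_{j\in\din(i)} w_{ij} r_j - r_i \geq \delta r_i$; multiplying by $\alpha_i \geq 0$ yields $\alpha_i\big((\W\rvec)_i - r_i\big) \geq \delta \alpha_i r_i$. For a pure principal $\alpha_i = 0$, so both sides vanish. For a pure obligee $\alpha_i = 1$ and $r_i = 0$, so the left side is $(\W\rvec)_i \geq 0$ while the right side is $0$. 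Hence the inequality holds in every coordinate. The only genuine care here is in the two boundary classes, where \Cref{ass:strictly_larger_neighbors} is \emph{not} assumed: the inequality survives purely by virtue of the modeling conventions $\alpha_i \in \{0,1\}$ for pure principals/obligees and $r_i = 0$ for pure obligees. This is the step I expect to be the main obstacle, since it requires matching the assumption (stated only for intermediaries) against the degenerate boundary coordinates.

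Finally I would invoke \cref{cor:neumann}: because $(\I-\A\W)^{-1} = \sum_{t\geq 0}(\A\W)^t$ and $\A\W$ has nonnegative entries, the resolvent is entrywise nonnegative, so applying it preserves entrywise inequalities. Combined with the previous step this yields $\meanvec - \rvec = (\I-\A\W)^{-1}\A(\W\rvec - \rvec) \geq \delta(\I-\A\W)^{-1}\A\rvec$, which is the first claim. The loss bound then follows immediately by left-multiplying with $\betavec^\top$ and using $\betavec \geq \mathbf{0}$, together with $\Exp{\GAR(\Xvec^\infty)} - \Exp{\GAR(\Xvec^0)} = \betavec^\top(\meanvec - \rvec)$ from \cref{eq:gar} and the initialization $\meanvec^0 = \rvec$.
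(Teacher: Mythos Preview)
Your proof is correct, and it takes a somewhat different route from the paper's. The paper proves an auxiliary finite-time lemma (\cref{lem:mean_risk_increase}) by induction on $t$: it first verifies the same coordinate-wise inequality $\A\W\rvec \geq (1+\delta)\A\rvec$ that you check, then shows $\meanvec^{t+1}-\meanvec^t \geq \delta(\A\W)^t\A\rvec$ via the recurrence, telescopes to get $\meanvec^t \geq \rvec + \delta\sum_{k=0}^{t-1}(\A\W)^k\A\rvec$, and only then passes to the limit using the Neumann series. Your argument bypasses the induction entirely by working directly with the closed-form identity $\meanvec - \rvec = (\I-\A\W)^{-1}\A(\W\rvec-\rvec)$ and the entrywise nonnegativity of the resolvent. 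The two approaches share the same core ingredient (the node-type case analysis showing $\A(\W\rvec-\rvec)\geq\delta\A\rvec$), but yours is more direct for the stated proposition; the paper's route has the side benefit of yielding explicit finite-$t$ lower bounds on $\meanvec^t-\rvec$, which your approach does not immediately provide.
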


This result highlights how the limiting failure probabilities increase in proportion by a factor of $\delta$, highlighting how seemingly modest shifts in network composition can escalate systemic exposure.
\section{Asymptotic Behavior of Stochastic Risk Process}
\label{sec:stochastic_process_analysis}
     \begin{figure}[!t]
        \centering
        \includegraphics[width=\linewidth]{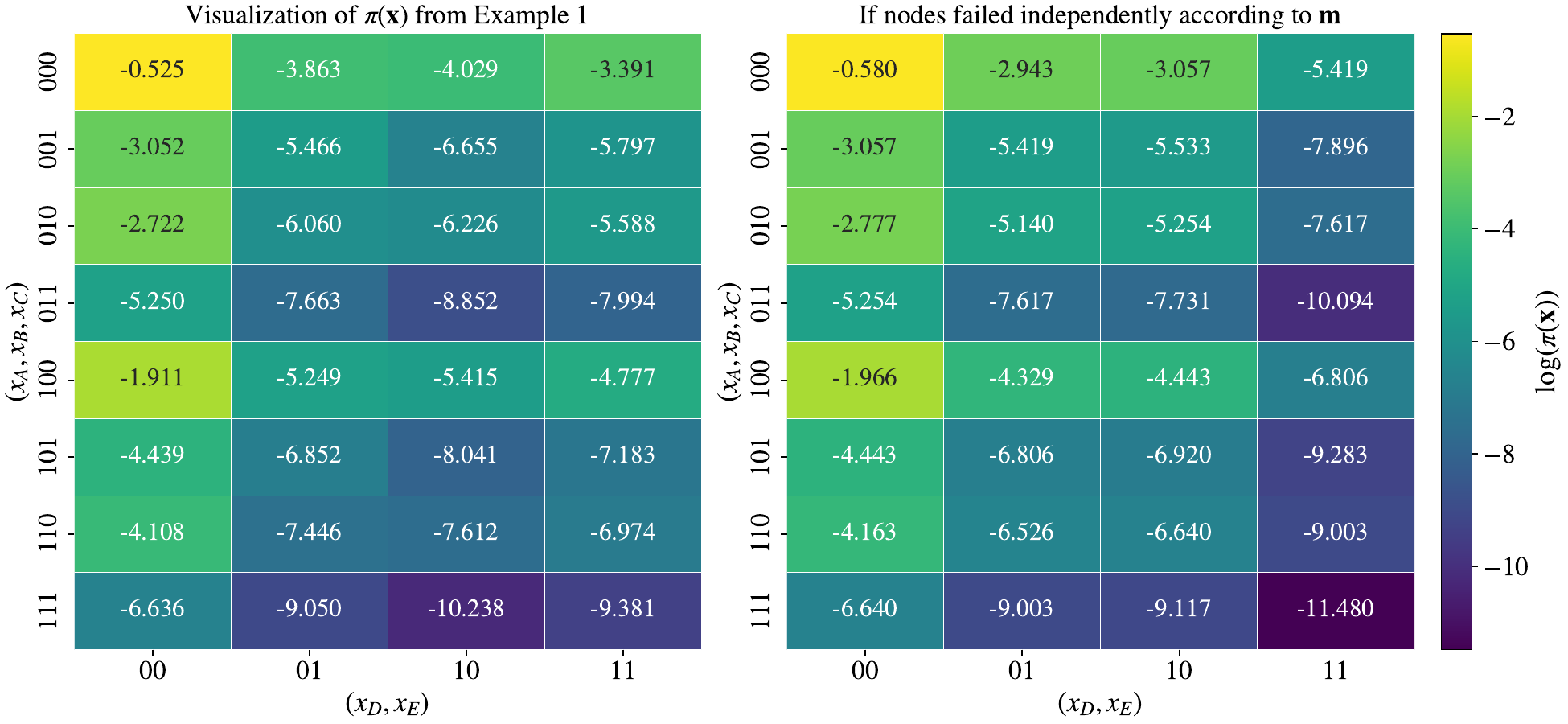}
        \caption{\emph{(Left)} Visualization of the stationary distribution over all possible states $\xvec \in\{0,1\}^5$ in \Cref{ex:risk_propagation} computed from \Cref{thm:mix_dag}.  Rows enumerate the possible states of principals $A,B,C$, while columns enumerate possible states of pure obligees $D,E$.  The heatmap entry in row $(x_A,x_B,x_C)$ and column $(x_D,x_E)$ then gives the log probability of $\xvec=(x_A,x_B,x_C,x_D,x_E)$ in the stationary distribution.
        \emph{(Right)} Visualization of the joint distribution if node failures were instead sampled independently from the mean-field marginals.  We point out that the probabilities of all joint default events in which $D$ and $E$ fail increase when we account for network effects, reflecting how the defaults of $D$ and $E$ become correlated through their shared principal $C$.}
        \label{fig:toy_stationary_dist}
    \end{figure}
    
The previous section analyzed the asymptotic behavior of the marginal average failure probabilities $\mean_i$ for all $i \in \V$.  However, this approach ignores details about the network structure and the main feature that distinguishes this model from independent failure models, the potential for the limiting behavior of nodes to be correlated.  In this section, we turn our attention to the full stochastic process $\Xvec^t$.  We first establish that $\Xvec^t$ is a Markov chain that converges to a unique stationary distribution (proof deferred to \cref{app:stochastic_process_proofs}), and then we quantify the rate of convergence to its stationary distribution in terms of the mixing time.  

\begin{restatable}{lemma}{MarkovChainStationary}
\label{lem:markov_chain_stationary}
$(\Xvec^{t})_{t \in \NN}$ is a Markov chain over the state space $\{0,1\}^n$ that converges to a unique stationary distribution with probability mass function $\pvec(\xvec)=\lim_{t\rightarrow\infty} \Pr(\Xvec^t=\xvec)$ for all $\xvec\in\{0,1\}^n$.
\end{restatable}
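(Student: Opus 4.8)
The plan is to first verify the Markov and time-homogeneity properties directly from the dynamics, and then obtain convergence by isolating the sub-network of nodes that actually drive the process. From \eqref{eq:sp_dynamics}, conditional on the entire history $\Xvec^0,\dots,\Xvec^t$, the coordinates $X_i^{t+1}$ are independent Bernoullis whose parameters depend only on $\Xvec^t$. Hence the one-step kernel $P(\xvec,\yvec)=\prod_{i\in\V} p_i(\xvec)^{y_i}(1-p_i(\xvec))^{1-y_i}$, with $p_i(\xvec)=(1-\alpha_i)r_i+\alpha_i\sum_{j\in\din(i)}w_{ij}x_j$, depends only on the current state $\xvec$ and not on $t$. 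Since $\{0,1\}^n$ is finite, $(\Xvec^t)_{t\in\NN}$ is a finite, time-homogeneous Markov chain, which is the first assertion.

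For convergence I would partition $\V$ into the pure obligees $O=\{i:\alpha_i=1\}$ and the remaining nodes $P=\V\setminus O$ (pure principals and intermediaries). The key structural observation is that pure obligees have no outgoing edges, so they are never in-neighbors of any node; thus $\din(i)\subseteq P$ for every $i\in\V$, and the restricted process $(\Xvec^t_P)$ is itself a Markov chain whose transitions depend only on the $P$-coordinates. Moreover, for $i\in P$ one has $p_i(\xvec)\in(0,1)$ strictly: pure principals satisfy $p_i=r_i\in(0,1)$, while for intermediaries $0<(1-\alpha_i)r_i\le p_i\le(1-\alpha_i)r_i+\alpha_i<1$, using $r_i<1$, $\alpha_i<1$, and row-stochasticity of $\W$. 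Consequently every entry of the transition matrix of the $P$-chain is strictly positive, so this sub-chain is irreducible and aperiodic, and the standard Perron--Frobenius theory for finite chains~\citep{resnick2013adventures} yields a unique stationary distribution $\pi_P$ to which $\Xvec^t_P$ converges.

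It then remains to propagate this convergence to the obligee coordinates. Conditional on $\Xvec^t_P$, the coordinates $(X_i^{t+1})_{i\in O}$ are independent Bernoullis with parameters $\sum_{j\in\din(i)}w_{ij}X_j^t$ that depend only on $\Xvec^t_P$, and are conditionally independent of $\Xvec^{t+1}_P$. Because the state space is finite, convergence in law of $\Xvec^t_P$ to $\pi_P$ is just convergence of a finite probability vector, and applying the fixed one-step obligee kernel shows that $\Pr(\Xvec^t=\xvec)$ converges for every $\xvec\in\{0,1\}^n$ to a limit $\pvec$ obtained by drawing the $P$-coordinates from $\pi_P$ and then sampling both the subsequent $P$-step and the obligee coordinates. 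A short check that $\pvec$ is invariant and that any stationary law must have $P$-marginal $\pi_P$ (which then forces the obligee conditionals) establishes both stationarity and uniqueness.

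The main obstacle is that the full chain on $\{0,1\}^n$ is generally \emph{not} irreducible: an obligee all of whose principals are solvent is deterministically forced to remain solvent (Bernoulli parameter $0$), so some states are unreachable and the finite-chain ergodic theorem cannot be invoked directly on $\{0,1\}^n$. The decomposition above circumvents this by confining the strict positivity---and hence ergodicity---to the sub-chain on $P$, where all Bernoulli parameters stay in $(0,1)$, and then treating the obligee coordinates as a fixed random readout of the $P$-state. Alternatively, convergence could be obtained through the coupling developed later for \cref{thm:mixing_time}, but the principal/obligee decomposition is the most direct route to existence and uniqueness.
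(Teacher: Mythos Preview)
Your argument is correct and takes a genuinely different, arguably cleaner, route than the paper. The paper works directly on the full chain restricted to the set of \emph{feasible} states and proves irreducibility and aperiodicity there: it first verifies (as you do) that all principals have Bernoulli parameters strictly in $(0,1)$, then characterizes exactly which states are reachable in one step via a condition on pure obligees (\(\{j\in\din(i):x_j^t=x_i^{t+1}\}\neq\varnothing\) for every pure obligee $i$), exhibits the all-zero state as a feasible state with a self-loop (aperiodicity), and finally shows that any two feasible states communicate in two steps by constructing an intermediate state whose principal coordinates are chosen to make the target reachable and whose obligee coordinates are chosen to be reachable from the start.

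Your principal/obligee decomposition sidesteps this casework entirely. Because pure obligees have no out-edges, the restricted process $(\Xvec_P^t)$ is autonomous and every entry of its transition matrix is strictly positive, so ergodicity and convergence to $\pi_P$ are immediate; the obligee coordinates are then a fixed one-step readout of the previous $P$-state, and convergence and uniqueness lift as you outline. What your approach buys is brevity and conceptual transparency (strict positivity on $P$ replaces the two-step reachability construction); what the paper's approach buys is an explicit statement that any two feasible full states communicate in exactly two steps, which is slightly more structural information but is not used elsewhere in the paper. Your observation that the full chain on $\{0,1\}^n$ is not irreducible---the very reason the paper restricts to feasible states---is exactly the right diagnosis.
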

\noindent Before continuing, we note that by the continuous mapping theorem and the linearity of $\GAR(\cdot)$, it follows from \Cref{lem:markov_chain_stationary} that the sequence $\GAR(\Xvec^t)$ converges almost surely as $t \to \infty$ to $\GAR(\Xvec^\infty)$.

    The previous lemma establishes that the stochastic process $\Xvec^t$ has a unique stationary distribution, which we will denote as $\pi$.  By analyzing $\pi$, we can understand the limiting behavior of risk propagation in our contractor network. In \cref{fig:toy_stationary_dist} we provide a visualization of the stationary distribution under \cref{ex:risk_propagation}.  We point out that the probabilities of all joint default events in which $D$ and $E$ fail increase when we account for network effects, reflecting how the defaults of $D$ and $E$ become correlated through their shared principal $C$.

    We will abuse notation slightly and let $\Xvec^\infty$ denote a random variable whose distribution is sampled according to $\pi$.  We begin by focusing on quantifying the rate of convergence of the stochastic process $\Xvec^{t}$ for $t \in \NN$ to its stationary distribution $\pvec$, measured in terms of the mixing time:
    \[
    \tmix(\epsilon) = \inf\{t \geq 0: d_{TV}(\pi, \Pr(\Xvec^t)) \leq \epsilon\},
    \]
    where $\varepsilon > 0$ and we treat the distributions as vectors over $\mathbb{R}^{2^n}$.  This directly impacts our ability to simulate network risk efficiently, as a slow-mixing process would require extensive time steps to approximate the long-run behavior accurately. By quantifying the mixing time, we establish bounds on how many iterations are needed before simulations yield reliable estimates of systemic risk (leveraged in our numerical simulations in \cref{sec:experiments}). Additionally, knowing the convergence rate allows us to assess how network structure influences the speed of risk propagation. 

\subsection{Mixing Time of the Stochastic Process}
\label{sec:mixing_time}

Recall in \cref{sec:mean_failure_probabilities} we showed that the convergence rate of the mean failure probabilities in our stochastic model depends on the rate of decay of $\A\W$.  Here, we extend this insight to the full stochastic process, showing that the distribution over network failures also mixes depending on the rate of decay of $\A\W$.  These results are, in a sense, quite surprising.
Classical Markov chain arguments that operate directly on our state space $\{0,1\}^n$ would give bounds on the mixing time that scale with the size of the state space, i.e. exponentially in $n$~\citep{resnick2013adventures}. Such bounds are useless for the networks of interest here. Our analysis instead shows that the mixing time actually increases at a \emph{logarithmic} rate in $n$ (for fixed accuracy $\epsilon$). Moreover, in the special but important case in which the contractor graph $G$ is a directed acyclic graph (DAG), the chain mixes in a \emph{finite} number of steps that equals the depth of the DAG. Moreover, for DAGs the stationary distribution can be written in closed form by propagating probabilities along a topological order; see Fig.~\ref{fig:dag_level_sets}.

 The proof relies on a synchronous coupling of two copies of the process, $(\Xvec^{t},\Yvec^{t})_{t \in \NN}$, driven by the same randomness. A single step of the dynamics need not be contractive in general because of the pure obligees, but we show that the evolution is contractive every two steps (similar to \cref{lem:second_norm_bounded}). Consequently, the decay of the discrepancy can be controlled by $\|(\A\W)^2\|$, which was shown to be strictly less than one in \cref{lem:second_norm_bounded}.  A version of this result also holds under time-varying graphs under the assumption that principals and obligees remain principals and obligees across all time steps.  See \cref{app:time_varying_mixing} for more details.

\begin{restatable}{theorem}{MixingGeneral}
\label{thm:mixing_time}
Let $G$ be an arbitrary contractor network.  Then for all $t \in \mathbb{N}$ we have that,
\begin{equation}
d_{TV}(\pi, \Pr(\Xvec^t)) \leq n \norm{(\A\W)^t}.
\end{equation}
As a result, if $G$ is a directed acyclic graph and \(d > 0\) denotes its maximum path length, then for all $\epsilon > 0$ we have \srsedit{$\tmix(\epsilon) \leq d+1$}.  Similarly, if $G$ is a general graph then
\begin{equation}
\label{eq:tmix_bound}
\tmix(\epsilon) \leq 2 + \frac{2}{1-\|(\A\W)^2\|}\,\log\!\left(\frac{n}{\epsilon}\right).
\end{equation}
\end{restatable}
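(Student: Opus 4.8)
The plan is to bound the total-variation distance by the probability that two \emph{synchronously coupled} copies of the chain disagree, and then to show that this disagreement contracts geometrically every two steps using $\norm{(\A\W)^2} < 1$ from \cref{lem:second_norm_bounded}. Concretely, I would construct a coupling $(\Xvec^t, \Yvec^t)_{t \in \NN}$ in which $\Xvec^t$ is the true process (with $X_i^0 \sim \Ber(r_i)$) and $\Yvec^0 \sim \pi$ is started from the stationary distribution, so that $\Yvec^t \sim \pi$ for all $t$ by \cref{lem:markov_chain_stationary}. Both chains are driven by the same randomness: at each step draw independent uniforms $U_i^t \sim \mathrm{Unif}[0,1]$ and set $X_i^{t+1} = \Ind{U_i^t \le p_i(\Xvec^t)}$ and $Y_i^{t+1} = \Ind{U_i^t \le p_i(\Yvec^t)}$, where $p_i(\xvec) = (1-\alpha_i) r_i + \alpha_i \sum_{j \in \din(i)} w_{ij} x_j$ is the conditional failure probability from \cref{eq:sp_dynamics}. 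Each marginal then follows the correct transition kernel, so the coupling is valid and the coupling inequality gives $d_{TV}(\pi, \Pr(\Xvec^t)) \le \Pr(\Xvec^t \ne \Yvec^t) \le \sum_{i} \Pr(X_i^t \ne Y_i^t)$ by a union bound.

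The key step is to track the per-node disagreement $q_i^t \triangleq \Pr(X_i^t \ne Y_i^t) = \Exp{\abs{X_i^t - Y_i^t}}$. Under the shared-uniform coupling, $X_i^{t+1} \ne Y_i^{t+1}$ occurs exactly when $U_i^t$ lies between $p_i(\Xvec^t)$ and $p_i(\Yvec^t)$, an event of conditional probability $\abs{p_i(\Xvec^t) - p_i(\Yvec^t)} = \alpha_i \abs{\sum_{j} w_{ij}(X_j^t - Y_j^t)} \le \alpha_i \sum_j w_{ij} \abs{X_j^t - Y_j^t}$. Taking expectations yields the entrywise recursion $\mathbf{q}^{t+1} \le \A\W\,\mathbf{q}^t$ on the nonnegative vector $\mathbf{q}^t = (q_i^t)_i$, and iterating gives $\mathbf{q}^t \le (\A\W)^t \mathbf{q}^0 \le (\A\W)^t \mathbf{1}$, since $q_i^0 \le 1$ and $(\A\W)^t$ has nonnegative entries. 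Because the $\ell_\infty$-induced matrix norm equals the maximum absolute row sum, each entry of $(\A\W)^t \mathbf{1}$ is at most $\norm{(\A\W)^t}$, so summing over the $n$ nodes delivers the master inequality $d_{TV}(\pi, \Pr(\Xvec^t)) \le n\norm{(\A\W)^t}$.

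The two consequences then follow by estimating $\norm{(\A\W)^t}$. In the DAG case, $(\A\W)^t_{ij}$ is a nonnegative weighted count of length-$t$ directed walks from $j$ to $i$; acyclicity caps the length of any walk, so $(\A\W)^t$ vanishes once $t \ge d$, making the master bound identically zero and giving $\tmix(\epsilon) \le d$ (consistent with \cref{thm:limiting_probabilities_rate_of_convergence}). For a general graph I would use submultiplicativity together with $\norm{\A\W} \le 1$ (each row of $\A\W$ sums to $\alpha_i \le 1$) to get $\norm{(\A\W)^t} \le \norm{(\A\W)^2}^{\lfloor t/2\rfloor} = \rho^{\lfloor t/2\rfloor}$, writing $\rho \triangleq \norm{(\A\W)^2} < 1$. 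Imposing $n\rho^{\lfloor t/2\rfloor} \le \epsilon$ and using $\log(1/\rho) \ge 1-\rho$ to weaken the denominator yields $\tmix(\epsilon) \le 2 + \tfrac{2}{1-\rho}\log(n/\epsilon)$, where the additive constant $2$ absorbs the floor.

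I anticipate the main obstacle is precisely the failure of one-step contraction: when some $\alpha_i = 1$ (pure obligees), $\norm{\A\W}$ can equal $1$, so naively chaining the disagreement recursion need not decay. Overcoming this is exactly where \cref{lem:second_norm_bounded} enters, since the two-step operator satisfies $\norm{(\A\W)^2} < 1$; this is why the argument must be grouped in pairs and track $\lfloor t/2\rfloor$ throughout. A secondary technical point is verifying that the synchronous coupling preserves both marginals so the coupling inequality applies, and confirming the DAG nilpotency claim under the paper's convention for the maximum path length $d$.
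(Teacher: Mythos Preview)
Your proposal is correct and follows essentially the same approach as the paper: a synchronous shared-uniform coupling with $\Yvec^0\sim\pi$, the per-node disagreement recursion $\mathbf{q}^{t+1}\le \A\W\,\mathbf{q}^t$ obtained exactly as you describe, the resulting master bound $d_{TV}\le n\norm{(\A\W)^t}$, and the same two-step grouping via \cref{lem:second_norm_bounded} to derive the DAG and general-graph mixing-time consequences. The paper's write-up differs only cosmetically (it phrases the union bound as Markov's inequality on $\norm{\Xvec^t-\Yvec^t}_1$ and passes through $\norm{\Dvec^t}_\infty$ rather than summing entries of $(\A\W)^t\mathbf{1}$ directly), so there is nothing substantive to add.
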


\begin{proof}
At a high level, we will show for arbitrary starting states $\xvec$ and $\yvec$ that there exists a coupling of the stochastic process $\Xvec^t$ and $\Yvec^t$ (which are initialized at $\xvec$ and $\yvec$ respectively) such that for all $t \in \NN$:
\[
\Exp{\norm{\Xvec^t - \Yvec^t}_1} \leq n \norm{(\A\W)^t}.
\]
As a result, this then implies by letting $\yvec$ follow the law of $\pi$ that:
\begin{equation}
\label{eq:tv_bound_l1_norm}
d_{TV}(\pi, \Pr(\Xvec^t)) \leq \Pr(\Xvec^t \neq \Yvec^t) = \Pr(\norm{\Xvec^t - \Yvec^t}_1 \geq 1) \leq \Exp{\norm{\Xvec^t - \Yvec^t}_1} \leq n \norm{(\A\W)^t},
\end{equation}
as claimed.

Hence, we start by rewriting our stochastic process as $\Xvec^{t+1} = f_{\thetavec^{t+1}}(\Xvec^t)$ where $\thetavec^{t} = (\thetavec_i^{t})_{i \in [N]}$ and each $\theta_i^{t}$ are independent random variables.  We will use $(\thetavec^t)_{t \in \NN}$ to determine our coupling.  Note that we can rewrite our stochastic process as
\[
    X_i^0 = \mathbbm{1} \left[ \theta_i^0 \leq r_i \right], \quad
    X_i^{t+1} = \mathbbm{1} \left[ \theta_i^{t+1} \leq (1 - \alpha_i) r_i + \alpha_i \sum_{j \in \din(i)} w_{ij} X_j^{t} \right],
\]
where \( \theta_i^t \overset{\text{iid}}{\sim} \text{Uniform}[0,1] \). We set $h_i(\xvec) = (1 - \alpha_i)r_i + \alpha_i \sum_{j \in \din(i)} w_{ij} \xvec_j$. 

Now consider the \emph{synchronous coupling} of two copies $(\Xvec^{t},\Yvec^{t})$ of the stochastic process driven by the same randomness (dictated by $(\thetavec^t)_{t \in \NN}$), where $\Yvec^{0} = \yvec \sim\pi$ so that $\Pr(\Yvec^{t} = \cdot)=\pi(\cdot)$ for all $t \in \NN$.
Define for $i=1,\ldots,n$,
\[
D_i^{t} \;\triangleq\; \Exp{\abs{X_i^t - Y_i^t}},
\qquad
\Dvec^{t} \; \triangleq \; \bigl(D_1^{t},\ldots,D_n^{t}\bigr)^{\top},
\]
to be the expected difference between $X_i^t$ and $Y_i^t$.  
We start by showing the following lemma:
\begin{lemma}
\label{lem:one_step_expected_difference}
For all $t \geq 1$ we have that
\(
\Dvec^{t+1} \leq \A\W\Dvec^t.
\)
\end{lemma}
\begin{proof}
Consider an arbitrary index $i \in \mathcal{V}$.  Then by definition of the stochastic process and the coupling, we have that:
\[
X_i^{t+1} = \Ind{\theta_i^{t+1} \leq h_i(\Xvec^t)} \quad\quad Y_i^{t+1} = \Ind{\theta_i^{t+1} \leq h_i(\Yvec^t)}.
\]
Hence, $X_i^{t+1} \neq Y_i^{t+1}$ if and only if $\min\{h_i(\Xvec^t), h_i(\Yvec^t)\} < \theta_i^{t+1} \leq \max\{h_i(\Xvec^t), h_i(\Yvec^t)\}$.  Thus
\begin{align*}
    \Exp{\abs{X_i^{t+1} - Y_i^{t+1}} \mid \Xvec^t, \Yvec^t} & = \Exp{\Ind{\min\{h_i(\Xvec^t), h_i(\Yvec^t)\} < \theta_i^{t+1} \leq \max\{h_i(\Xvec^t), h_i(\Yvec^t)\}} \mid \Xvec^t, \Yvec^t} \\
    & = \Pr(\theta_i^{t+1} \in (\min\{h_i(\Xvec^t), h_i(\Yvec^t)\}, \max\{h_i(\Xvec^t), h_i(\Yvec^t)\}) \mid \Xvec^t, \Yvec^t) \\
    & = \abs{h_i(\Xvec^t) - h_i(\Yvec^t)} \\
    & = \abs{\alpha_i w_i^\top(\Xvec^t - \Yvec^t)} \leq \alpha_i \sum_{j \in \din(i)} w_{ij} |X_j^t - Y_j^t|,
\end{align*}
where the third equality follows from the fact that $\theta_i^{t+1}$ is distributed uniformly. Taking the expectation with respect to $\Xvec^t$ and $\Yvec^t$ over the coupling yields $\Dvec^{t+1} \leq \A\W\Dvec^t$.
\end{proof}

With the previous lemma in hand, we are now ready to show the result.  Indeed, by \cref{eq:tv_bound_l1_norm} we have that:
\begin{align*}
    d_{TV}(\pi, \Pr(\Xvec^t)) & \leq \Exp{\norm{\Xvec^t - \Yvec^t}} = \norm{\Dvec^t}_1 \underset{(\text{C.S.})}{\leq}n \norm{\Dvec^t}_\infty\\
    & \leq n \norm{(\A\W)^t \Dvec^0}_\infty \\
    & \leq n \norm{(\A\W)^t} \norm{\Dvec^0}_{\infty} \leq n \norm{(\A\W)^t}.
\end{align*}

\srsedit{Now we note that if $G$ is a directed acyclic graph, then for any $t > d$ we have $(\A\W)^t = 0$. Hence, we get that $\tmix(\epsilon) \leq d+1$ for any $\epsilon > 0$.}  For the general case, we use the fact that $\norm{(\A\W)^2} < 1$ via \cref{lem:second_norm_bounded}.  Plugging this into the above bound yields:
\begin{equation}
\label{eq:tv_bound_intermed}
    d_{TV}(\pi, \Pr(\Xvec^t)) \leq n (\norm{(\A\W)^2})^{\lfloor t/2 \rfloor}.
\end{equation}

Setting the right hand side of \cref{eq:tv_bound_intermed} $\leq \epsilon$, using the fact that $1 - \gamma \leq - \log(\gamma)$ for $\gamma \in (0,1)$ and solving for $t$ gives the desired bound on $\tmix(\epsilon)$. Indeed, let $\tilde{t} \triangleq \lfloor t/2\rfloor$.  We want to solve $n \norm{(\A\W)^2}^{\tilde{t}} \leq \epsilon$.  Taking the logarithms (and using that $\log(\norm{(\A\W)^2}) < 0$) gives:
\[
    \tilde{t} \ge \frac{\log(\varepsilon/n)}{\log(\norm{(\A\W)^2})}
    \;=\;
    \frac{\log(n/\varepsilon)}{-\log(\norm{(\A\W)^2})}.
\]
Hence it suffices to choose
\[
    \tilde{t} \;:=\; \Bigl\lceil \frac{\log(n/\epsilon)}{-\log(\norm{(\A\W)^2})} \Bigr\rceil,
    \qquad
    t = 2\tilde{t}.
\]
Therefore,
\[
    \;t_{\mathrm{mix}}(\epsilon)
    \;\le\;
    2 \left\lceil \frac{\log(n/\epsilon)}{-\log (\|(\A\W)^2\|)} \right\rceil 
    \;\le\;
    \frac{2}{1-\|(\A\W)^2\|}\,\log\!\left(\frac{n}{\epsilon}\right) + 2,
\]
which trades the logarithm in the denominator for the spectral gap–like term $1-\|(\A\W)^2\|$.
\end{proof}

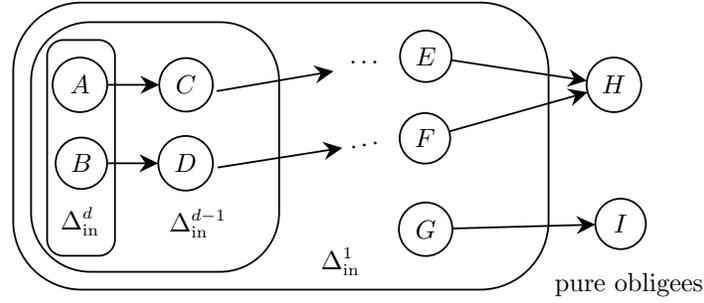
\begin{figure}[!t]
\centering
\scalebox{0.9}{
\tikzset{every picture/.style={line width=0.75pt}} 
\tikzset{every picture/.style={line width=0.75pt}} 

\begin{tikzpicture}[x=0.75pt,y=0.75pt,yscale=-1,xscale=1]

\draw   (10,47.43) .. controls (10,26.21) and (27.21,9) .. (48.43,9) -- (271.57,9) .. controls (292.79,9) and (310,26.21) .. (310,47.43) -- (310,131.57) .. controls (310,152.79) and (292.79,170) .. (271.57,170) -- (48.43,170) .. controls (27.21,170) and (10,152.79) .. (10,131.57) -- cycle ;
\draw   (20,53.18) .. controls (20,34.86) and (34.86,20) .. (53.18,20) -- (125.82,20) .. controls (144.14,20) and (159,34.86) .. (159,53.18) -- (159,126.82) .. controls (159,145.14) and (144.14,160) .. (125.82,160) -- (53.18,160) .. controls (34.86,160) and (20,145.14) .. (20,126.82) -- cycle ;
\draw   (29,39.07) .. controls (29,34.06) and (33.06,30) .. (38.07,30) -- (57.93,30) .. controls (62.94,30) and (67,34.06) .. (67,39.07) -- (67,141.93) .. controls (67,146.94) and (62.94,151) .. (57.93,151) -- (38.07,151) .. controls (33.06,151) and (29,146.94) .. (29,141.93) -- cycle ;
\draw    (124.5,101.5) -- (191.54,90.97) ;
\draw [shift={(194.5,90.5)}, rotate = 171.07] [fill={rgb, 255:red, 0; green, 0; blue, 0 }  ][line width=0.08]  [draw opacity=0] (10.72,-5.15) -- (0,0) -- (10.72,5.15) -- (7.12,0) -- cycle    ;
\draw    (124,58.68) -- (186.54,48) ;
\draw [shift={(189.5,47.5)}, rotate = 170.31] [fill={rgb, 255:red, 0; green, 0; blue, 0 }  ][line width=0.08]  [draw opacity=0] (10.72,-5.15) -- (0,0) -- (10.72,5.15) -- (7.12,0) -- cycle    ;

\draw    (47.38, 55.13) circle [x radius= 15.31, y radius= 15.31]   ;
\draw (47.38,55.13) node   [align=left] {$\displaystyle A$};
\draw    (48.19, 99.13) circle [x radius= 14.42, y radius= 14.42]   ;
\draw (48.19,99.13) node   [align=left] {$\displaystyle B$};
\draw    (106.84, 55.13) circle [x radius= 15, y radius= 15]   ;
\draw (106.84,55.13) node   [align=left] {$\displaystyle C$};
\draw    (241.19, 39.13) circle [x radius= 14.42, y radius= 14.42]   ;
\draw (241.19,39.13) node   [align=left] {$\displaystyle E$};
\draw    (346.61, 55.13) circle [x radius= 15.31, y radius= 15.31]   ;
\draw (346.61,55.13) node   [align=left] {$\displaystyle H$};
\draw    (240.72, 85.13) circle [x radius= 14.15, y radius= 14.15]   ;
\draw (240.72,85.13) node   [align=left] {$\displaystyle F$};
\draw    (241.13, 136.13) circle [x radius= 15, y radius= 15]   ;
\draw (241.13,136.13) node   [align=left] {$\displaystyle G$};
\draw    (350.38, 133.13) circle [x radius= 14.15, y radius= 14.15]   ;
\draw (350.38,133.13) node   [align=left] {$\displaystyle I$};
\draw    (106.61, 99.13) circle [x radius= 15.31, y radius= 15.31]   ;
\draw (106.61,99.13) node   [align=left] {$\displaystyle D$};
\draw (35,122) node [anchor=north west][inner sep=0.75pt]   [align=left] {$\displaystyle \Deltain^{d}$};
\draw (96,123) node [anchor=north west][inner sep=0.75pt]   [align=left] {$\displaystyle \Deltain^{d-1}$};
\draw (181,145) node [anchor=north west][inner sep=0.75pt]   [align=left] {$\displaystyle \Deltain^{1}$};
\draw (312,159) node [anchor=north west][inner sep=0.75pt]   [align=left] {pure obligees};
\draw (197,40) node [anchor=north west][inner sep=0.75pt]   [align=left] {$\displaystyle \dotsc $};
\draw (197,88.4) node [anchor=north west][inner sep=0.75pt]  [rotate=-349.96] [align=left] {$\displaystyle \dotsc $};
\draw    (62.69,55.13) -- (88.84,55.13) ;
\draw [shift={(91.84,55.13)}, rotate = 180] [fill={rgb, 255:red, 0; green, 0; blue, 0 }  ][line width=0.08]  [draw opacity=0] (10.72,-5.15) -- (0,0) -- (10.72,5.15) -- (7.12,0) -- cycle    ;
\draw    (255.45,41.29) -- (328.51,52.38) ;
\draw [shift={(331.47,52.83)}, rotate = 188.63] [fill={rgb, 255:red, 0; green, 0; blue, 0 }  ][line width=0.08]  [draw opacity=0] (10.72,-5.15) -- (0,0) -- (10.72,5.15) -- (7.12,0) -- cycle    ;
\draw    (254.34,81.27) -- (328.99,60.12) ;
\draw [shift={(331.88,59.3)}, rotate = 164.18] [fill={rgb, 255:red, 0; green, 0; blue, 0 }  ][line width=0.08]  [draw opacity=0] (10.72,-5.15) -- (0,0) -- (10.72,5.15) -- (7.12,0) -- cycle    ;
\draw    (256.12,135.71) -- (333.23,133.6) ;
\draw [shift={(336.23,133.51)}, rotate = 178.43] [fill={rgb, 255:red, 0; green, 0; blue, 0 }  ][line width=0.08]  [draw opacity=0] (10.72,-5.15) -- (0,0) -- (10.72,5.15) -- (7.12,0) -- cycle    ;
\draw    (62.61,99.13) -- (88.3,99.13) ;
\draw [shift={(91.3,99.13)}, rotate = 180] [fill={rgb, 255:red, 0; green, 0; blue, 0 }  ][line width=0.08]  [draw opacity=0] (10.72,-5.15) -- (0,0) -- (10.72,5.15) -- (7.12,0) -- cycle    ;

\end{tikzpicture}}
\caption{
Representation of the ``levels" $\Delta^k_\text{in}$ in an acyclic graph such that $\Delta^1_\text{in}\supset\cdots\supset\Delta^d_\text{in}$.
Nodes $A$ and $B$ each have one $d$-length path to $H$, so they belong to $\Delta^d_\text{in}$.  They also have an edge to the next node in paths $(A,C,\dots,H)$ and $(B,D,\dots,H)$, so they are also in $\Delta^1_\text{in}$. $E$ and $F$ only have paths of length 1, so they are only contained in $\Delta^1_\text{in}$.
Note that pure principals can belong to any level, not just $\Delta^d_\text{in}$ (e.g. $G$).
This represents that the states of obligees in $\Delta^k_\text{in}$ at any time $t$ depend only on the previous states of principals in $\Delta^{k+1}_\text{in}$ at $t-1$.
}
\label{fig:dag_level_sets}
\end{figure}

\srsedit{We close out our discussion here with the case when the graph is a DAG.  Because there are no directed cycles, the linear operator $(\A\W)$ is nilpotent with index at most the depth $d+1$, and therefore the coupled chains coalesce after at most $d+1$ steps. This immediately implies that $\tmix(\epsilon) \leq d+1$ for all $\epsilon>0$, as described in \cref{thm:mixing_time}.  In the same spirit, the stationary distribution is computable in closed form by propagating probabilities layer by layer along a (reverse) topological ordering of the contractor graph $G$.}

In particular, we leverage the Markov property and observe the transition to any time $t \geq 1$ depends only on the state of in-neighbors in the previous timestep.  Thus, we can ignore the states of pure obligees at time $t-1$ because they are not in-neighbors to any other nodes, and we only need to consider the set of nodes that act as principals. To formalize this, we introduce the following definition.

\begin{definition}[In-neighbor layers]
\label{def:in-layers}
For $k=1$, define
\[
\Deltain^1 \triangleq \bigcup_{i \in \mathcal V} \din(i),
\]
to be the set of nodes that act as principals of another node.  
For $k>1$, define recursively
\[
\Deltain^k \triangleq \bigcup_{i \in \Deltain^{k-1}} \din(i),
\]
which equivalently consists of all nodes $i$ such that there exists a directed path of length $k$ from $i$ to some node in $\mathcal V$.
\end{definition}

With this notation, the states of nodes in $\Deltain^1$ at time $t$ depend only on the states of nodes in $\Deltain^2$ at time $t-1$, and so on. This reasoning can be applied recursively until we reach time $t-d$, where $d$ is the maximum tree depth of the graph. The set $\Deltain^d$ contains only pure principals $i$, whose failures occur independently and according to fixed $r_i$ for all time steps. Thus, we do not need to consider earlier times past $t-d$. In other words, we only need to consider a finite number of earlier time steps to obtain an exact form of the full distribution. \Cref{fig:dag_level_sets} gives a visualization of these topological layers.  Formalizing this we have the following representation for the distribution of the stochastic process.
\begin{restatable}{theorem}{MixingDAG}
\label{thm:mix_dag}
    If $G$ is a directed acyclic graph, then for any $t \geq d$ where $d > 0$ is the maximum tree depth of the graph, and using the in-neighbor layers $\Deltain^k$ from \cref{def:in-layers}, the stationary distribution admits the following closed form: 
    \begin{equation*}
        \Pr(\Xvec^t=\xvec)
        = \sum_{\xvec^{t-1}_{\Deltain^1}} \cdots \sum_{\xvec^{t-d}_{\Deltain^d}}
\Pr(\Xvec^t=\xvec \mid \Xvec^{t-1}_{\Deltain^1}=\xvec^{t-1}_{\Deltain^1})
\prod_{k=1}^{d-1} \Pr(\Xvec^{t-k}_{\Deltain^k}=\xvec^{t-k}_{\Deltain^k} \mid \Xvec^{t-k-1}_{\Deltain^{k+1}}=\xvec^{t-k-1}_{\Deltain^{k+1}})
\Pr(\Xvec^{t-d}_{\Deltain^d}=\xvec_{\Deltain^d}^{t-d}).
    \end{equation*}
\end{restatable}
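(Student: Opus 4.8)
The plan is to unfold the one-step Markov transition exactly $d$ times, using at each step the \emph{locality} of the dynamics to discard the coordinates that cannot influence the next layer. The engine of the argument is a single observation about the transition kernel: conditioned on $\Xvec^{t-1}$, the coordinates $(X_i^t)_{i \in \mathcal{V}}$ are mutually independent, since each is drawn from an independent Bernoulli in \cref{eq:sp_dynamics}, and each $X_i^t$ depends on $\Xvec^{t-1}$ only through the in-neighbor coordinates $(X_j^{t-1})_{j \in \din(i)}$. Hence for any set $S \subseteq \mathcal{V}$ the conditional law of $\Xvec^t_S$ factors as $\prod_{i \in S} \Pr(X_i^t = \cdot \mid (X_j^{t-1})_{j \in \din(i)})$, and in particular depends on $\Xvec^{t-1}$ only through its restriction to $\bigcup_{i \in S} \din(i)$. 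I would isolate this as a short lemma, as it drives the entire factorization.

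First I would apply the Markov property and marginalize, grouping the sum over $\Xvec^{t-1}$ by its value on $\bigcup_{i \in \mathcal{V}} \din(i) = \Deltain^1$; because the full transition depends only on those coordinates, the inner sum over the remaining coordinates collapses the joint law of $\Xvec^{t-1}$ into the marginal of $\Xvec^{t-1}_{\Deltain^1}$, giving
\[
\Pr(\Xvec^t = \xvec) = \sum_{\xvec^{t-1}_{\Deltain^1}} \Pr(\Xvec^t = \xvec \mid \Xvec^{t-1}_{\Deltain^1} = \xvec^{t-1}_{\Deltain^1})\, \Pr(\Xvec^{t-1}_{\Deltain^1} = \xvec^{t-1}_{\Deltain^1}).
\]
This is the base case of an induction on the number $m$ of peeled time steps. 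For the inductive step I would expand the trailing marginal $\Pr(\Xvec^{t-m}_{\Deltain^m} = \cdot)$ by the same device applied with $S = \Deltain^m$: its in-neighbor set is exactly $\bigcup_{i \in \Deltain^m} \din(i) = \Deltain^{m+1}$, which is precisely the recursion defining the layers in \cref{def:in-layers}, so one step of marginalization produces the factor $\Pr(\Xvec^{t-m}_{\Deltain^m} \mid \Xvec^{t-m-1}_{\Deltain^{m+1}})$ together with the new trailing marginal $\Pr(\Xvec^{t-m-1}_{\Deltain^{m+1}})$. Iterating for $m = 1, \dots, d-1$ assembles the stated product; pure principals appearing at intermediate layers cause no trouble, since their empty in-neighbor set makes their conditional factor a constant $\Ber(r_i)$ that the locality lemma already accommodates.

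It remains to justify terminating at level $d$. Using that $G$ is acyclic with maximum path length $d$, any $i \in \Deltain^d$ admits a directed path of length $d$ starting at $i$, and it can have no in-neighbor, for prepending one would yield a path of length $d+1$. Thus $\Deltain^d$ consists solely of pure principals, $\Deltain^{d+1} = \emptyset$, and the trailing marginal $\Pr(\Xvec^{t-d}_{\Deltain^d} = \cdot)$ is a fixed product of independent $\Ber(r_i)$ factors, identical at every time index with $t - d \geq 0$; this is the base factor in the statement, and no further layers remain to peel. The hypothesis $t \geq d$ ensures all time indices are nonnegative, and since \cref{thm:mixing_time} gives $\tmix(\epsilon) \leq d$ for every $\epsilon > 0$, we have $\Pr(\Xvec^t) = \pi$ for $t \geq d$, so the displayed factorization is exactly the closed form of the stationary distribution.

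The routine ingredients are the Markov property and the bookkeeping of the shrinking time and layer indices. The step that deserves genuine care is the locality/marginalization lemma, as it is what licenses replacing the full previous-state distribution by the marginal over a single layer and collapsing the sum to that layer's coordinates; everything else is a disciplined iteration of this identity. The only non-mechanical structural input is the acyclicity argument pinning $\Deltain^d$ to the pure principals, which simultaneously terminates the recursion and renders the base factor an explicit product of Bernoullis.
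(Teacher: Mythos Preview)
Your proposal is correct and follows essentially the same route as the paper: both isolate the locality of the transition kernel (each $X_i^t$ depends on $\Xvec^{t-1}$ only through $\din(i)$, and coordinates are conditionally independent), then peel back one time step at a time, using the recursive definition of the layers $\Deltain^k$ to shrink the relevant coordinate set, and terminate at depth $d$ where acyclicity forces $\Deltain^d$ to consist of pure principals with time-invariant product-Bernoulli law. The only cosmetic difference is that you invoke \cref{thm:mixing_time} to certify $\Pr(\Xvec^t)=\pi$ for $t\ge d$, whereas the paper reads this off directly from the derived formula by noting the base factor $\Pr(\Xvec^{t-d}_{\Deltain^d}=\cdot)$ is the same for all $t\ge d$; either route is fine and non-circular.
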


    While the full derivation is technical and omitted from the discussion here (see \cref{app:mixing_time}), we give an explicit representation of the stationary distribution for the Markov chain when the underlying contractor graph is acyclic.
    Then for size $n$ acyclic graphs with directed diameter $d$, computing the exact stationary distribution can be done in $\bigO(d\cdot2^n)$ time.
    In most cases, this computation can actually be done in much less time, because the sets $\Deltain^k$ exclude pure obligees and are strictly decreasing in size.

\section{Numerical Results}
\label{sec:experiments}

To complement our theoretical guarantees, we conduct extensive computational experiments to evaluate risk propagation in contractor networks. Our experiments are based on a contractor network constructed from empirical data provided by a partnering surety organization. Across all simulations, we evaluate the impact of network effects on global loss and the joint distribution of possible failure events.\ifdefined\arxiv\footnote{See \url{https://github.com/seanrsinclair/Network-Risk-Analysis-Surety-Bounds} for the code base.}\else\fi  The main questions we seek to answer through our experiments are:
\begin{itemize}
    \item {\em Structure of real-world surety networks} (\cref{sec:sims_network_description}): We begin with a descriptive overview of an empirical surety network, highlighting its scale, connectivity, and summary statistics of the loss values $\beta_i$ and idiosyncratic risk scores $r_i$ across nodes $i \in \V$ in the network.
    \item {\em Case study} (\cref{sec:experiment_case_study}): We then present a detailed case study on a representative intermediary, illustrating how local network position affects systemic risk contributions.
    \item {\em Conditions for higher expected loss and tail behavior} (\cref{sec:sims_global_loss}): We analyze when expected losses are amplified and derive empirical tail bounds, emphasizing the increase in global average risk even in networks where \Cref{ass:larger_neighbors} fails.
    \item {\em Robustness to network exposure $\alpha_i$} (\cref{sec:simulations_robustness_alpha}): Finally, we test the robustness of our findings under alternative specifications of the exposure parameters $\alpha_i$ for intermediaries $i$ in the network.
\end{itemize}

\paragraph{Network Construction} We build an anonymized contractor network from empirical surety bond data of a partnering surety organization, preserving key structural and statistical properties while protecting sensitive information. Original node identities are replaced with generic indices; contract values, risk scores, and loss amounts are rescaled and perturbed with Laplace noise. The network topology is reconstructed via an edge–rewiring procedure that retains degree distributions, node roles (pure principals, intermediaries, pure obligees), and depth in the hierarchy. Edge weights are recalibrated from the anonymized bond amounts to ensure each obligee’s in-degree sums to one. This produces a synthetic but structurally faithful replica of the real network for simulation and analysis.  See~\cref{app:simulations_network_construction} for further details.

\paragraph{Accounting for Unobserved Edges} Since the surety organization only observes bonded contracts with known principals, some obligees may have additional, unobserved contracting activity. We detect such cases by comparing an obligee’s reported revenue to the total value of observed bonded work; any excess implies unobserved principals. To incorporate the potential risk from these missing relationships, we introduce a synthetic “dummy” principal connected to the obligee with weight equal to the fraction of revenue not explained by observed principals. The dummy’s baseline risk is estimated from the observed mix of contractor types the obligee engages with, assuming the same type distribution holds for unobserved contractors. This approach preserves network structure while accounting for external risk exposure not directly visible in the data.  Further details on this methodology are included in \cref{app:simulations_unobserved_edges}.

\paragraph{Experimental Setup.} We approximate the stationary distribution $\pi$ of the stochastic process $\Xvec^t$ via Monte-Carlo simulation.  We set $\alpha_i = 0.25$ for all intermediaries $i$ in the network, as decided during discussion from our partnering organization. However, later in \cref{sec:simulations_robustness_alpha} we test the robustness of our results to this selection.  Since our contractor graph is a directed acyclic graph with maximum depth $d = 7$, the chain mixes in finite time (see \cref{thm:mixing_time}).  Thus, we report the empirical distribution of $\Xvec^7$ after $t = 7$ time steps over $100,000$ times to form empirical estimates.  All metrics are reported as the average over these replications, and confidence intervals are computed with a significance level of $\delta = 0.05$ when presented.

\subsection{Description of Surety Network}
\label{sec:sims_network_description}

\begin{figure}
    \centering
    \includegraphics[width = 0.7\textwidth]{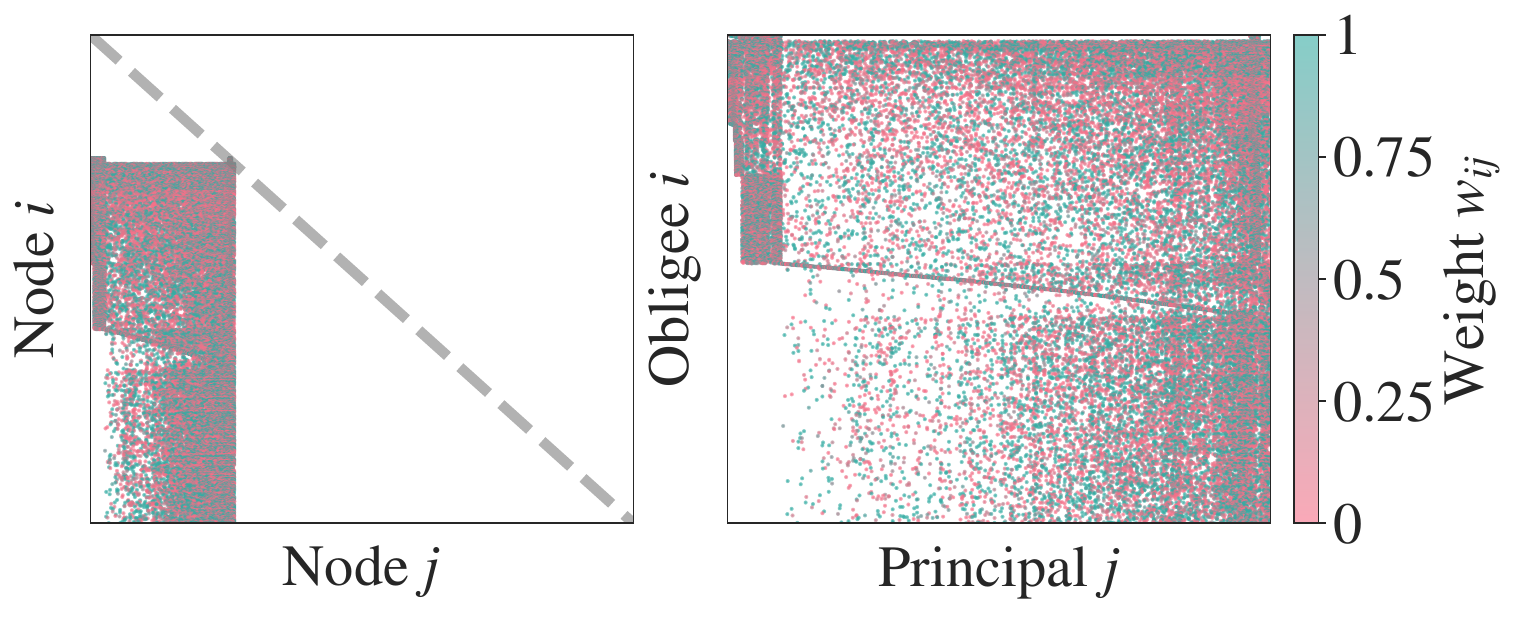}
    \caption{Visualization of the giant component in the anonymized empirical network.  The left figure corresponds to the adjacency matrix of the weakly connected component of the network, where the $x$ and $y$ axes corresponds to different node indices. In the right figure we show the sub adjacency matrix between principals and obligees.  Colors correspond to the edge weights $w_{ij}$.}
    \label{fig:graph_visualization}
\end{figure}

We start off our discussion by examining the contractor network obtained from our partnering surety organization.  This network represents the contractual obligations between contractors and project owners insured by the surety, with each edge corresponding to a surety bond over a one-year period in 2018.

The full contractor graph contains 40,457 nodes (contracting organizations).  The majority of these nodes belong to a single (weakly-connected) component, which accounts for roughly 87.7\% of the graph and contains 35,483 nodes.  We focus on this weakly connected component for the remainder of our simulations.  Within this component, there are 8,984 pure principals (contractors who never act as obligees), making up roughly 25\% of the nodes; 26,137 are pure obligees (project owners who never act as principals), accounting for 74\%; and the remaining 362 nodes (about 1\%) are intermediaries that appear as both principals and obligees in different contracts.  This composition reflects the predominantly bipartite nature of the network, with only a small fraction of nodes serving the dual roles of both principal and obligee.  However, we will see that these intermediaries are the key vehicle for risk to propagate through the network.

\begin{figure}
    \centering
    \includegraphics[scale=0.35]{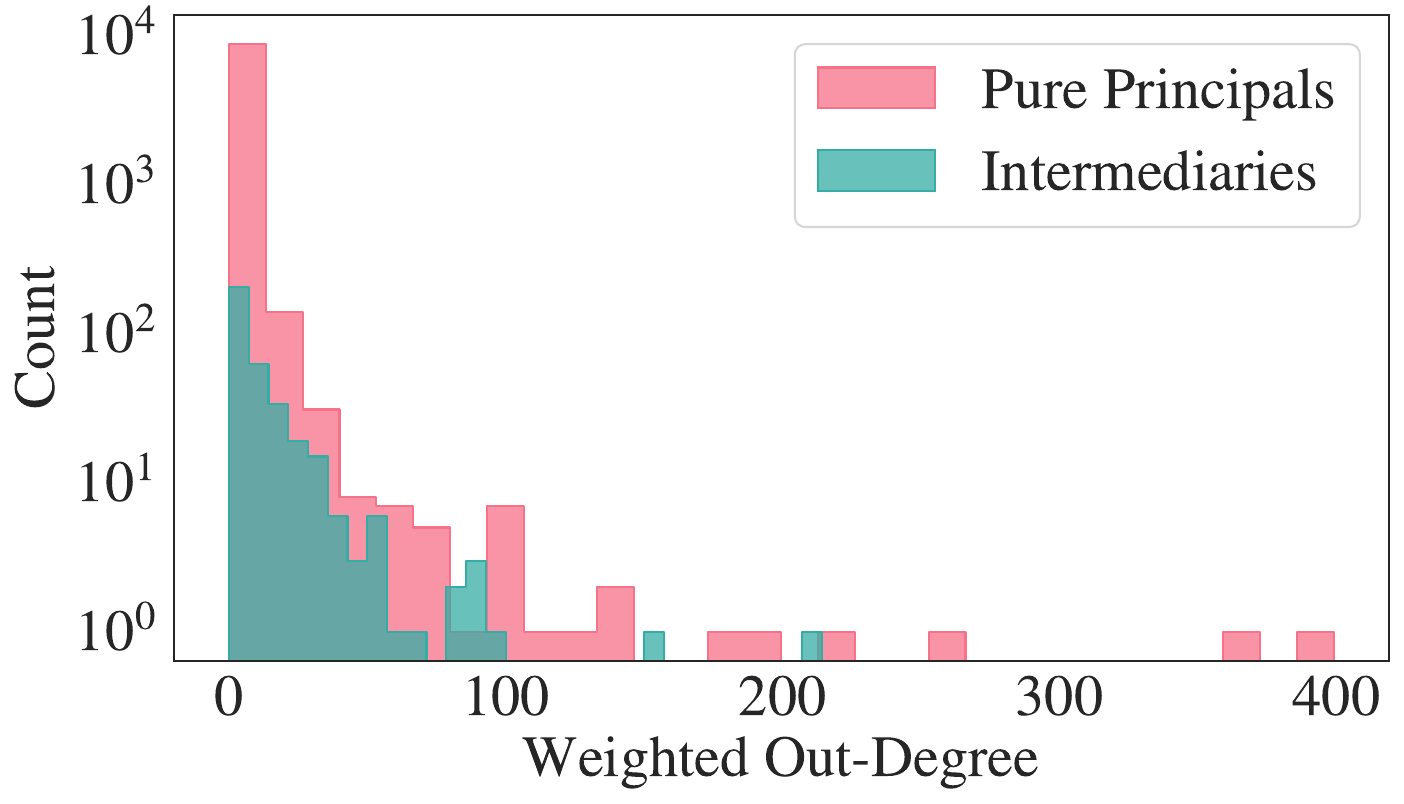}
    \caption{Histogram of the weighted out degree distribution for pure principals and intermediaries.  Counts are shown on a logarithmic scale.  Pure obligees are not included since their out-degree is zero by design.}
    \label{fig:degree_distribution}
\end{figure}

See~\cref{fig:graph_visualization} for a visualization of the adjacency matrix of our surety network.  We emphasize that the graph is mostly bipartite, with the exception of a small number of intermediaries. We further note that the connectivity pattern in the graph is sparse: among the 35,483 nodes in the giant connected component the edge density is around 0.009\%, representing 56,707 contracts.  The nature of the graph also reveals interesting structural properties, it is acyclic with a directed diameter of seven.  This acyclic structure has important implications for our modeling since our stochastic failure-propagation process converges to its stationary distribution in at most seven time steps (see \Cref{thm:mix_dag}).  Lastly, we note some degree of heterogeneity in degree distribution across nodes, witnessed in \cref{fig:degree_distribution}.

In addition to network topology, we measure the idiosyncratic risk scores $r_i$ and loss-given-default values $\beta_i$ for each principal $i$.
These features, shown in \Cref{fig:node_feature_dists}, are heavy-tailed and right-skewed, with notable heterogeneity across node types.  In particular, we observe that the $\beta_i$ values for intermediaries are, on average, larger than that of pure principals.  However, for the risk scores $r_i$, pure principals have larger idiosyncratic risk.  We emphasize that pure obligees are excluded from these plots because their risk and loss values are set to zero by definition.
The loss-given-default distribution also contains a single extreme outlier of approximately \$31 billion; while rare, such high values can occur in large infrastructure projects underwritten by the surety organization.
Later in our analysis, we will explore how these attributes vary between pure principals and intermediaries, and how their heterogeneity influences network-wide risk propagation.

\begin{figure}
    \centering
    \includegraphics[width=0.49\linewidth]{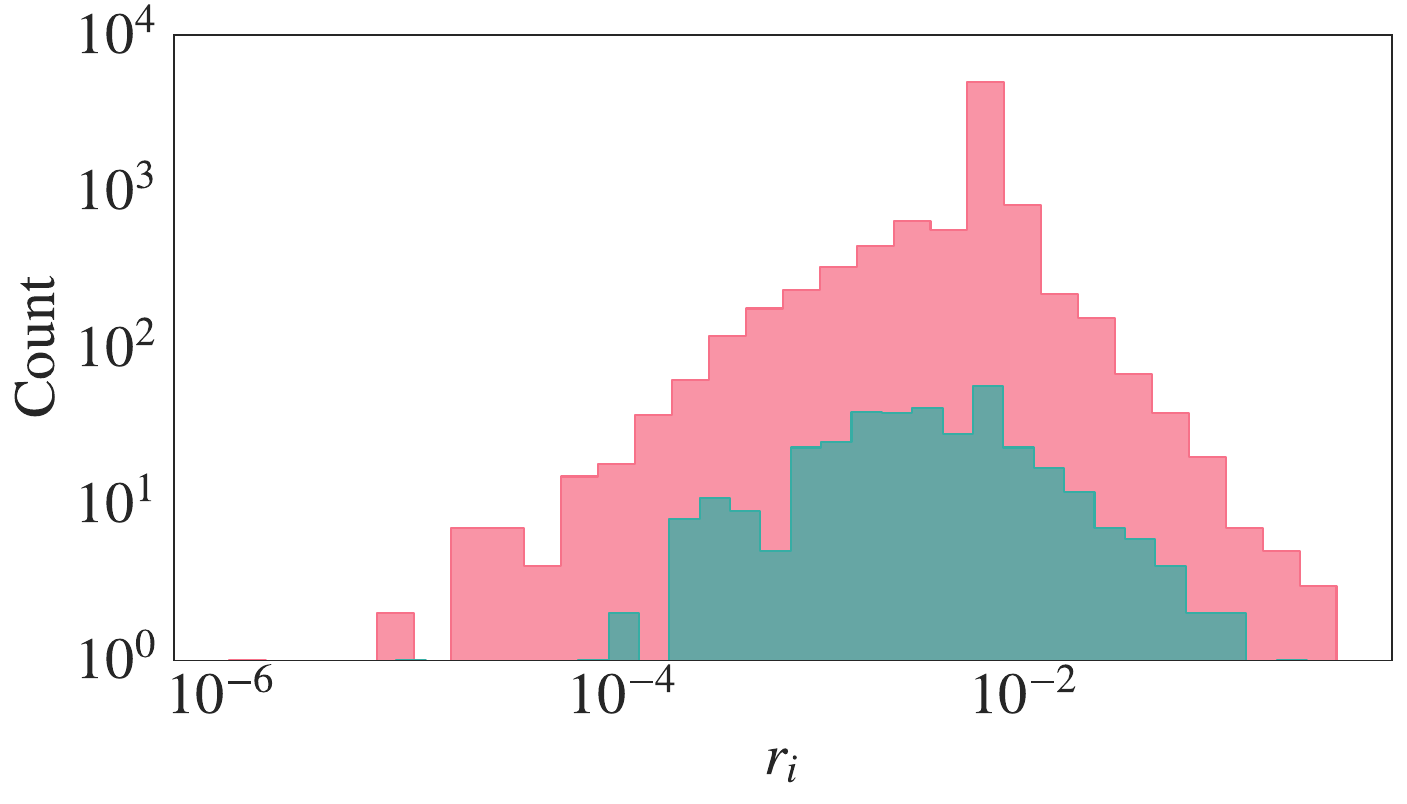}
    \includegraphics[width=0.49\linewidth]{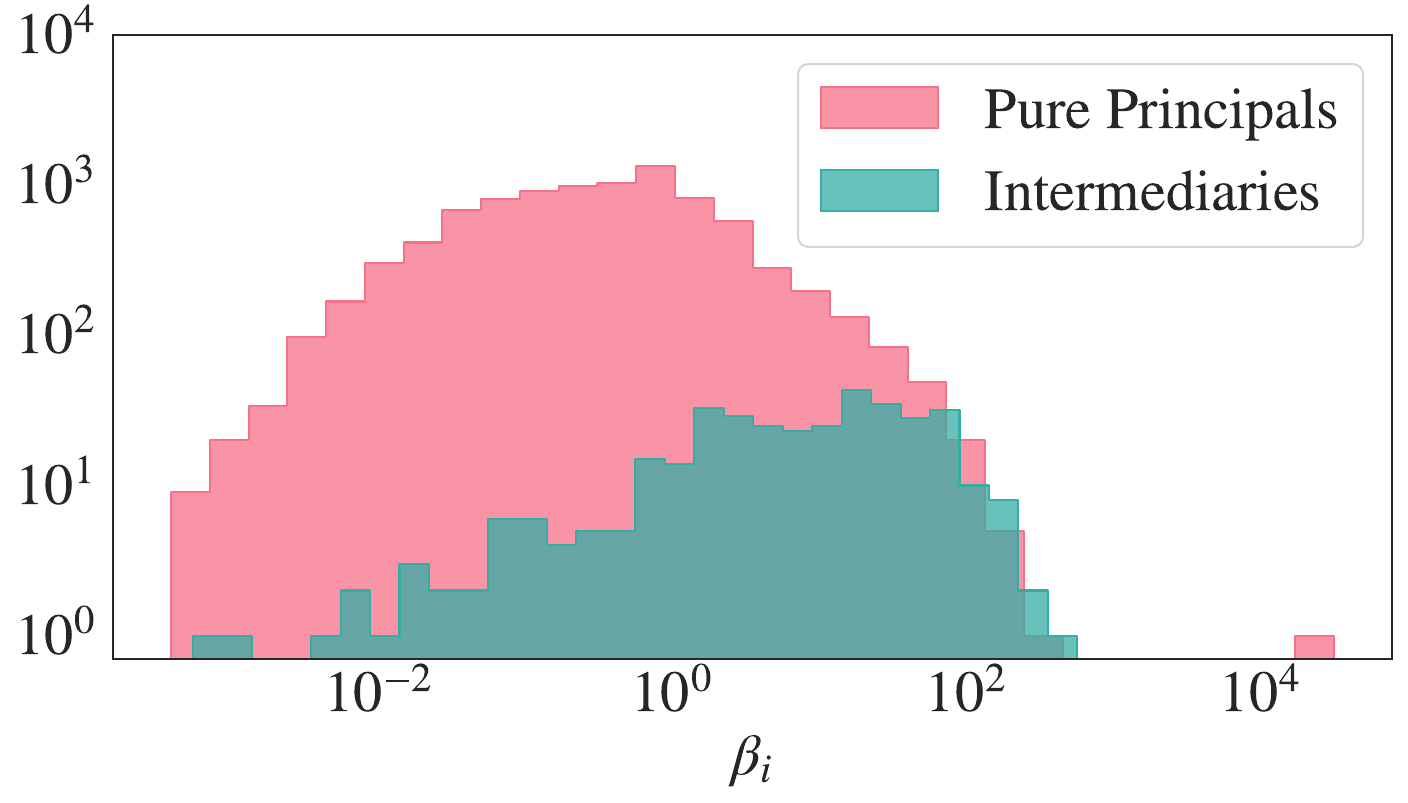}
    \caption{Histograms of idiosyncratic default probabilities $r_i$ (left) and loss given default values $\beta_i$ in millions of USD (right).  Both counts and values are shown on a logarithmic scale.  Pure obligees are excluded here because their values for both attributes are set to zero by definition.}
    \label{fig:node_feature_dists}
\end{figure}

\subsection{Case Study}
\label{sec:experiment_case_study}

\begin{figure}
    \centering
    \begin{minipage}[b]{0.49\linewidth}
        \centering
        \includegraphics[width=\linewidth]{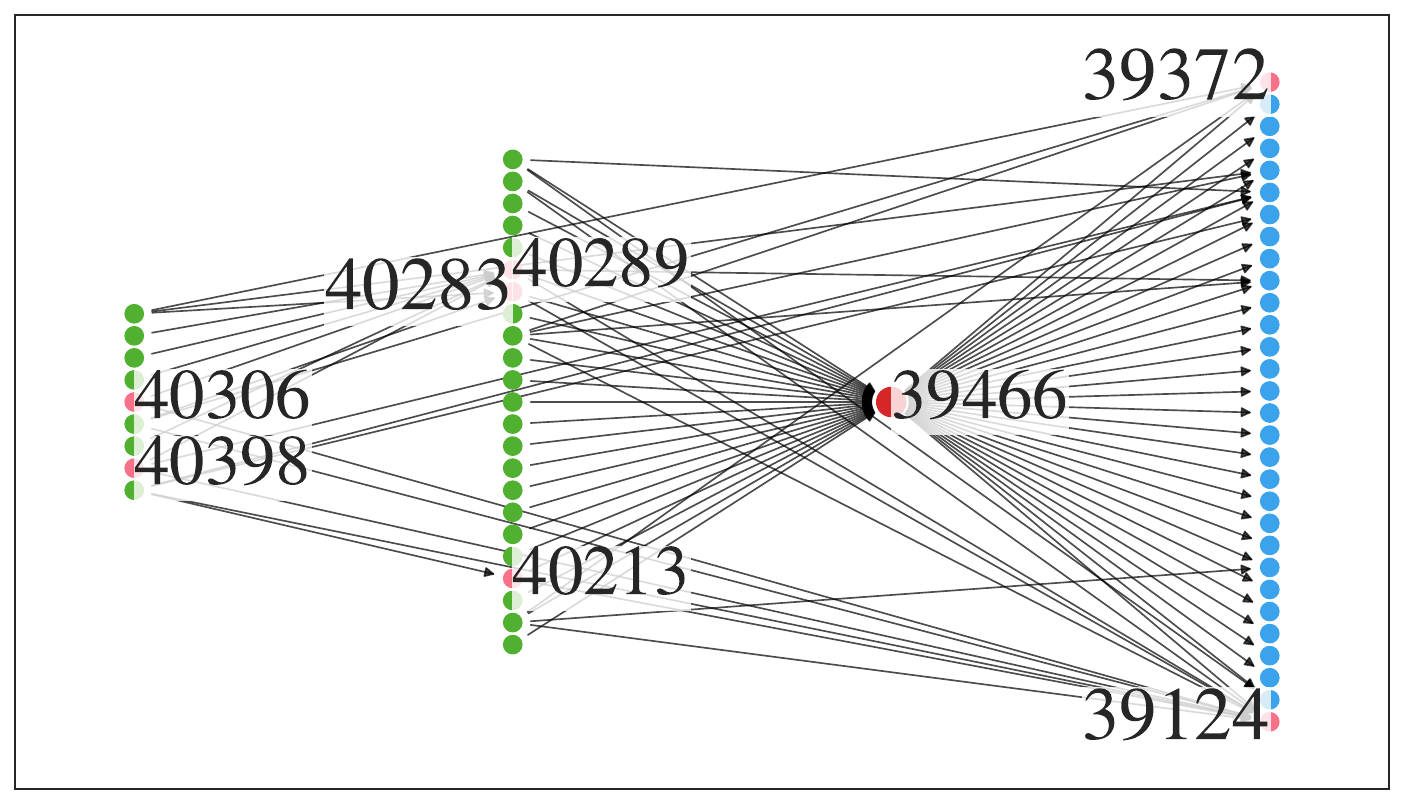}
    \end{minipage}%
    \hfill
    \begin{minipage}[b]{0.49\linewidth}
        \centering
        \includegraphics[width=\linewidth]{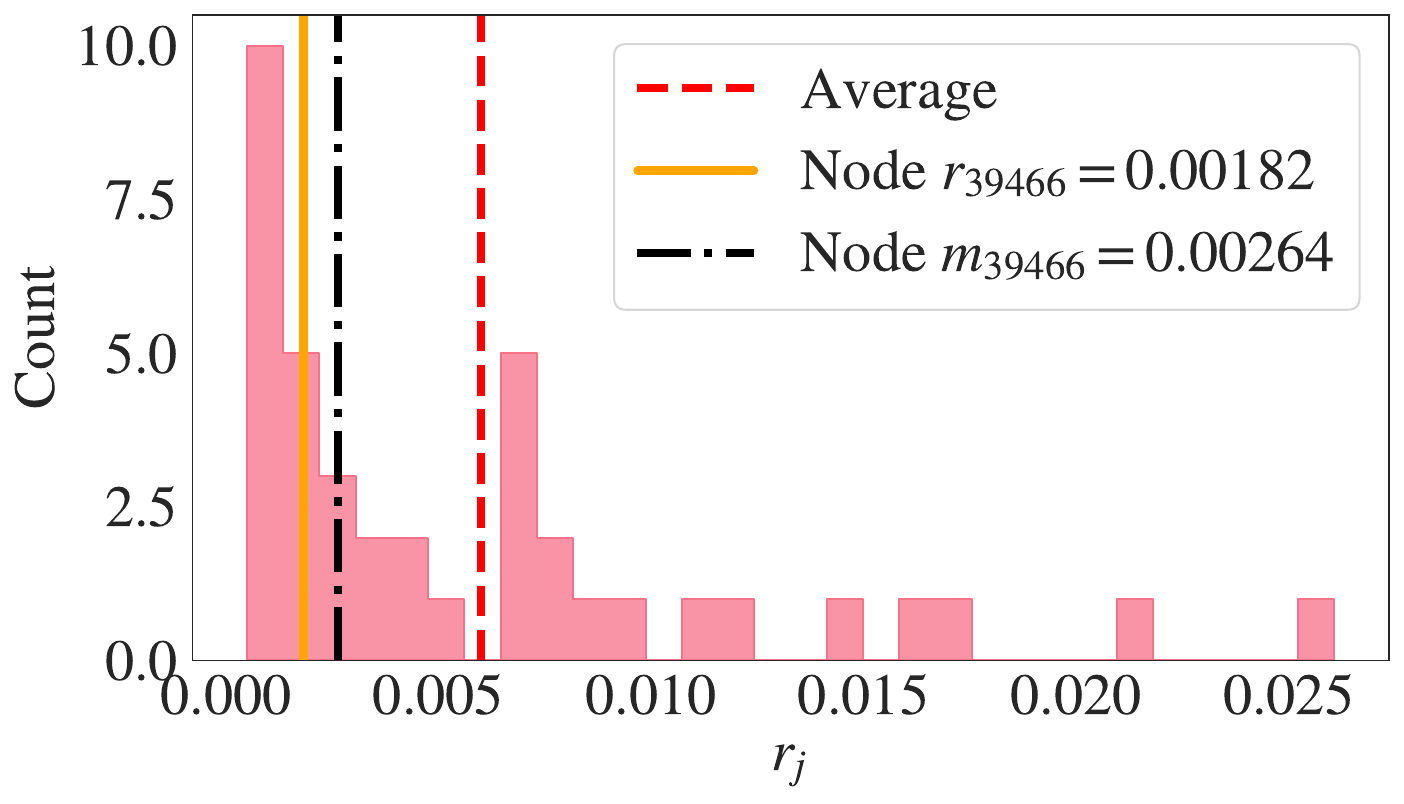}
    \end{minipage}
    \caption{Case study for node 39466. On the left we show its induced sub-graph of the contractor network, where green nodes correspond to principals, pink to other intermediaries, and blue to pure obligees. On the right we plot a histogram of the risk scores $r_i$ (x-axis) against counts for the upstream neighbors. The red dotted line corresponds to the average risk score of the upstream neighbors, the orange to the risk score of node 39466, and the black line to the induced limiting failure probability $\mean_{39466}$ of node 39466.}
    \label{fig:case_study}
\end{figure}

Before diving into our simulations on the weakly connected component of the network, we begin with an illustrative case study. While we emphasize that the nodes here do not refer to specific contractors, in our collaboration with the partnering insurance company we performed a similar methodological analysis that they leveraged for identifying and monitoring risk in key contractors.

We focus on node 39466, which contracts as an \emph{intermediary} (has both incoming and outgoing edges). According to our risk-based centrality measure $u_i$ (cf.\ \cref{def:page_rank}), this node lies in the 75th percentile of the distribution among intermediaries, marking it as structurally ``risky'' due to its prominent position in the network. Its induced subgraph (\cref{fig:case_study}, left) shows both upstream principals and downstream obligees, highlighting its central role in bridging multiple tiers of the contracting system. This centrality measure thus provides a systematic way of flagging such intermediaries as candidates for closer attention.

Turning to the upstream neighbors’ risk distribution (\cref{fig:case_study}, right), we observe substantial heterogeneity.  Some neighbors have risk scores lower than node 39466’s idiosyncratic value $r_{39466} = 0.00182$, while others are considerably higher. While some neighbors are less risky, when incorporating these exposures, the induced limiting failure probability increases to $\mean_{39466} = 0.00264$, corresponding to a relative increase of roughly \(\,45\%\). This amplification illustrates how even a moderately risky intermediary, once identified through the $u_{39466}$ centrality metric, can see its effective default probability substantially elevated due to network position and contracting relationships.

\subsection{Impact of network effects on global financial loss}
\label{sec:sims_global_loss}

Next we begin to illustrate the impact of network risk on the global average financial loss in the surety network.  We start by recalling in \cref{sec:marginal_risk_increase} that under \Cref{ass:larger_neighbors} the average risk increases.
However, this assumption is not always satisfied in our surety network (see more discussion on this in \cref{app:sims_not_satisfying_assumption}).

\begin{figure}[!t]
    \centering
    \begin{subfigure}[t]{0.4\linewidth}
        \centering
        \includegraphics[width=\linewidth]{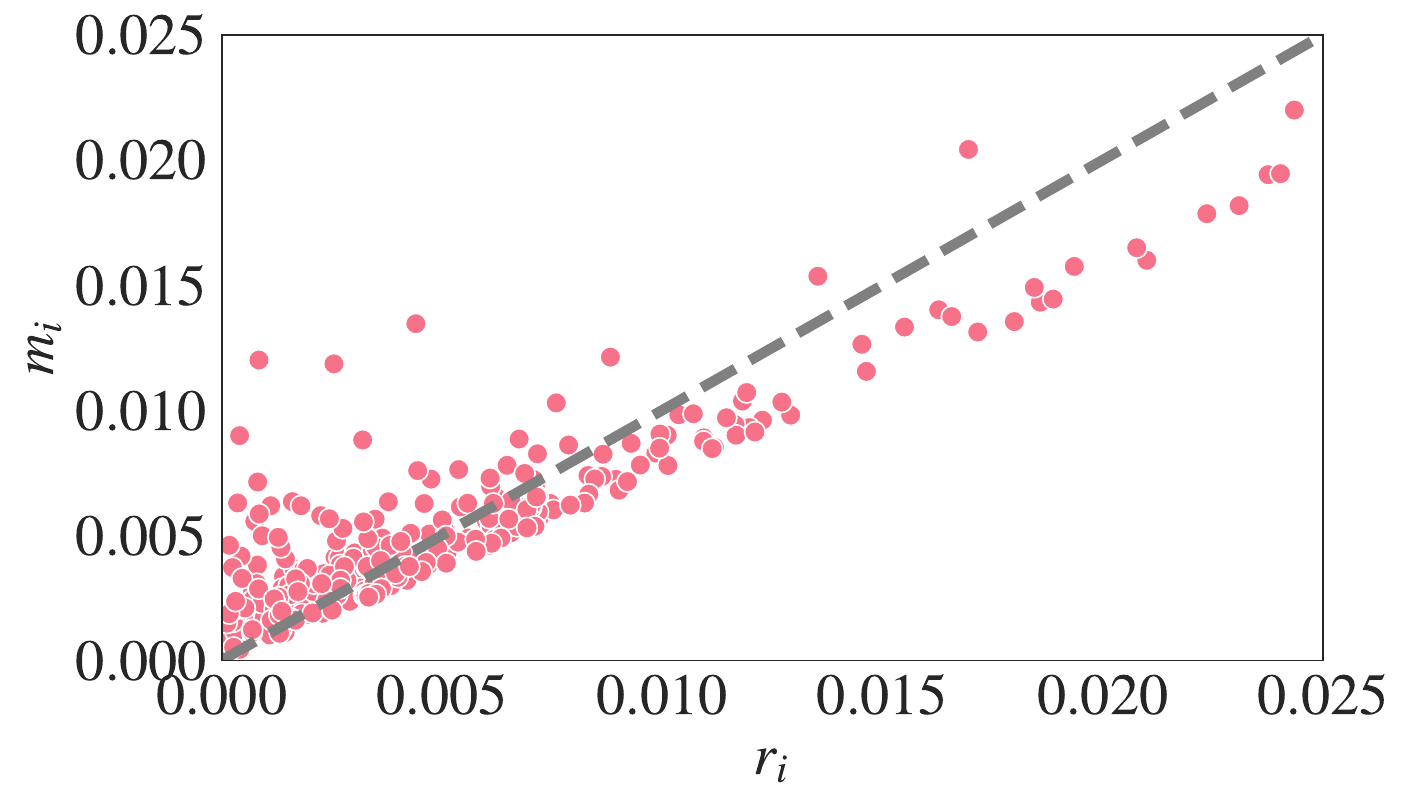}
        \label{fig:r_vs_m}
    \end{subfigure}
    \hfill
    \begin{subfigure}[t]{0.58\linewidth}
        \centering
        \includegraphics[width=\linewidth]{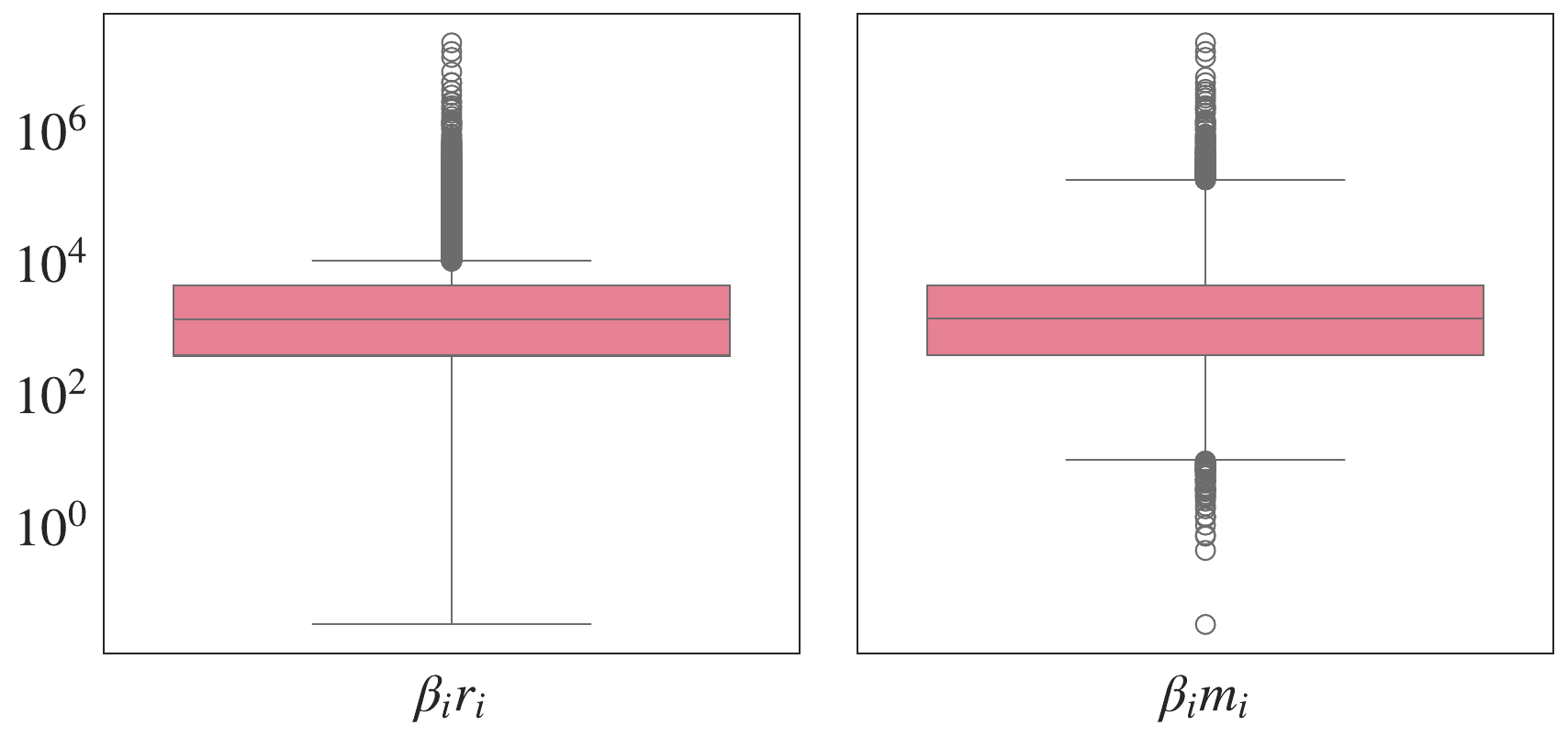}
        \label{fig:loss_box_plot}
    \end{subfigure}
    \caption{Comparison of node-level risk metrics. (Left) Idiosyncratic default probabilities $r_i$ compared to limiting failure probabilities $m_i$ for intermediaries $i$. (Right) Box plots of $\beta_i r_i$ (left) vs.\ $\beta_i \mean_i$ (right), where $\beta_i$ is in USD (log scale).}
    \label{fig:combined_risk_figures}
\end{figure}

Despite this, we observe that for a nontrivial subset of nodes $i \in \V$, the expected marginal loss probability $\mean_i$ exceeds their idiosyncratic risk scores $r_i$.  This effect is visible in \cref{fig:combined_risk_figures} (left), which compares $r_i$ (x-axis) and $\mean_i$ (y-axis), with numerous points above the $y = x$ line.  Moreover, as a result of this, the expected aggregate loss $\Exp{\GAR(\Xvec^{\infty})} = \beta^\top \meanvec$ increases from the expected aggregate loss under the independent failure model $\Exp{\GAR(\Xvec^0)} = \beta^\top \rvec$ by 1.89\%.  This emphasizes that {\em failing to account for network interference causes downstream risk to be under-estimated.}  While 1.89\% might seem like a mild value, we emphasize that the units for these are on the order of hundreds of millions of dollars, so this increase is roughly on the order of 2 million dollars of underestimated risk.  See~\cref{fig:combined_risk_figures} (right) where we also include a box plot of $\beta_i \mean_i$ vs $\beta_i r_i$ across nodes $i \in \V$. Here we observe not only does the average increase, but $\beta_i \mean_i$ is more right-skewed.  This again emphasizes the ability for network risk to destabilize the network.

Lastly, we look at the distribution of $\GAR(\Xvec^\infty)$ versus $\GAR(\Xvec^0)$. 
Table~\ref{tab:loss_quantiles} reports several quantiles of the two distributions, including the 50th, 90th, 95th, 99th, and 99.5th percentiles. In every case, the quantile under \(\GAR(\Xvec^\infty)\) exceeds the corresponding quantile under \(\GAR(\Xvec^0)\). 
These plots reveal a clear shift in the distribution.  Not only is the right tail substantially heavier, indicating a higher probability of extreme loss realizations, but the central behavior is also affected; the median aggregate loss under $\GAR(\Xvec^\infty)$ is noticeably larger than that under $\GAR(\Xvec^0)$, in addition to the mean being higher.  This combination of a heavier tail and an upward shift in the bulk of the distribution highlights that the amplification of network interactions are not just confined to rare catastrophic events, but also manifest across the entire distribution as well.

\begin{table}
    \centering
    \begin{tabular}{lcc}
\toprule
 & $\GAR(\Xvec^0)$ & $\GAR(\Xvec^\infty)$ \\
\midrule
0.5 & $188.71 \pm 0.33$ & ${\bf 193.24^\star \pm 0.33}$ \\
0.9 & $344.92 \pm 0.80$ & ${\bf 349.16^\star \pm 0.74}$ \\
0.95 & $404.75 \pm 0.99$ & ${\bf 410.03^\star \pm 1.02}$ \\
0.99 & $536.35 \pm 2.40$ & ${\bf 559.99 \pm 3.08^\star}$ \\
0.995 & $597.16 \pm 4.33$ & ${\bf 652.92 \pm 5.68^\star}$ \\
\bottomrule
\end{tabular}
\caption{Comparison of the quantiles (in millions of dollars) of $\GAR(\Xvec^\infty)$ to $\GAR(\Xvec^0)$.  Confidence intervals computed with a two-sided binomial quantile test. Bold indicates larger for the same quantile, and $\star$ that the increase is significant from the two-sided binomial quantile test.}
\label{tab:loss_quantiles}
\end{table}

To summarize, these results suggest that only considering the independent failure model without accounting for how the interconnectedness of contractors creates correlations in defaults leads to underestimating the potential losses incurred by failures.  This can have significant repercussions in practice, such as from failing to set aside enough capital to cover large tail probability losses due to this underestimation.

\subsection{Robustness to choice of $\alpha_i$}
\label{sec:simulations_robustness_alpha}

\begin{figure}
    \centering
    \includegraphics[width=0.5\linewidth]{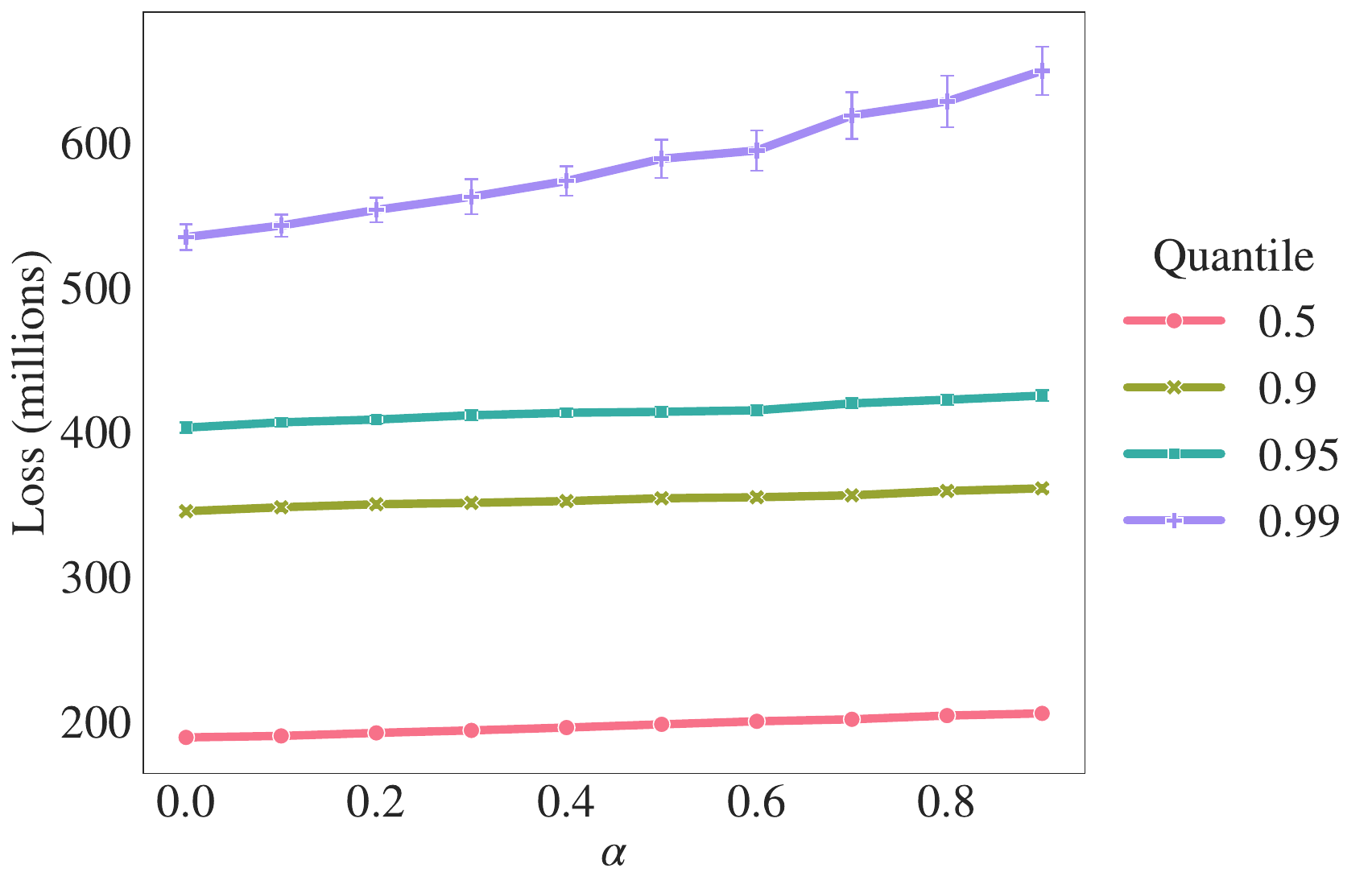}
    \caption{Quantiles of $\GAR(\Xvec^\infty)$ under different values of $\alpha$ in ten equal spaces between $[0,1]$.  Confidence intervals computed with a two-sided binomial quantile test.
    }
    \label{fig:alpha_robustness}
\end{figure}

We close out by testing the robustness of our earlier empirical insights with respect to the choice of $\alpha$, where the baseline specification set $\alpha_i = 0.25$ for all intermediaries $i$ in the network. In \cref{fig:alpha_robustness}, we report the quantiles of \(\GAR(\Xvec^\infty)\) as a function of $\alpha \in [0,1]$, applied uniformly across all intermediaries. Across all specifications, the outcomes remain monotone in $\alpha$ as greater exposure consistently increases systemic risk. This finding is noteworthy because the formal monotonicity condition (\Cref{ass:larger_neighbors}) does not hold in our empirical network. 

More broadly, the extent of amplification is not uniform across the loss distribution. While moderate quantiles exhibit only gradual upward shifts as $\alpha$ increases, the tail behavior is dramatically more sensitive. For example, at the $0.995$ quantile we observe a sharp and disproportionate increase, indicating that the risk of catastrophic losses escalates much faster than median losses. This widening gap across quantiles highlights how network contagion disproportionately impacts the extreme right tail. Taken together, \cref{fig:alpha_robustness} conveys two critical insights: (i) our monotonicity results are robust in practice, even when their sufficient conditions are partially violated, and (ii) the most severe consequences of increasing $\alpha$ manifest in the extremes of the distribution, where insurers and policymakers are most vulnerable.

\section{Conclusion}
\label{sec:conclusion}

In this work, we introduced a network-based approach to analyzing risk propagation in surety-backed contractor networks. By modeling contractual dependencies as a directed stochastic process, we demonstrated that network effects systematically amplify failure probabilities and increase expected loss beyond what traditional independent risk models predict. Our theoretical results establish conditions under which systemic risk accumulates over time, and our empirical analysis using real-world surety data validates these findings, showing that accounting for network dependencies leads to a higher estimated risk exposure than independent models. Additionally, we identified key intermediary nodes that disproportionately influence network-wide stability, highlighting their role in amplifying or mitigating failures.

Several future directions emerge from our work. First, while our analysis focused on risk propagation in a static network, real-world contractor networks evolve over time as firms form new contracts or exit the market. Extending our model to a dynamic setting, where network structure evolves alongside risk accumulation, is a promising avenue for further research. Second, our framework assumes full network observability by the surety provider, yet in practice, some contractual relationships may be hidden due to the presence of multiple insurers or informal agreements. Developing robust risk estimation techniques that account for missing or latent network information would enhance the applicability of our approach. Finally, while we focused on financial surety networks, similar risk propagation dynamics arise in other interdependent systems, such as supply chains, infrastructure networks, and research collaborations. Extending our methods to these domains could provide new insights into systemic vulnerabilities and optimal risk mitigation strategies.

\medskip

\noindent\textbf{Acknowledgments.} Part of this work was done while Sean R. Sinclair was a Postdoctoral Associate at the Massachusetts Institute of Technology, advised by Ali Jadbabaie and Devavrat Shah. The authors would also like to thank Alireza Tahbaz-Salehi for preliminary feedback on the network surety model, and Kelton Busby for valuable assistance with the model and simulation studies. \srsedit{We are also grateful to Runhuan Wang for  identifying several errors in the proofs.} Icons for \cref{fig:simple_diagram} are provided by \citet{flaticon2025}. Finally, we gratefully acknowledge the partnership of an anonymous surety organization for providing data and insights that made this research possible.

\newpage

\bibliographystyle{informs2014} 
\bibliography{references} 


%
%
%

\newpage

\AtBeginEnvironment{APPENDICES}{%
  \renewcommand{\theHsection}{appendix.\Alph{section}}%
  \renewcommand{\theHsubsection}{\theHsection.\arabic{subsection}}%
  \renewcommand{\theHsubsubsection}{\theHsubsection.\arabic{subsubsection}}%
  \renewcommand{\theHfigure}{\theHsection.\arabic{figure}}%
  \renewcommand{\theHtable}{\theHsection.\arabic{table}}%
  \renewcommand{\theHequation}{\theHsection.\arabic{equation}}%
  \renewcommand{\theHtheorem}{\theHsection.\arabic{theorem}}%

  \crefalias{section}{appendix}%
  \crefalias{subsection}{appendix}%
  \crefalias{subsubsection}{appendix}%
}

\crefalias{section}{appendix}
\begin{APPENDICES}

\section{Table of Notation}
\label{app:notation}

\renewcommand{\arraystretch}{1.1}

\begin{table*}[h!]
\begin{tabular}{>{\color{edits}}l | >{\color{edits}}l}
\textbf{Symbol} & \textbf{Definition} \\ \hline
\multicolumn{2}{c}{Problem setting specifications}\\
\hline
$\mathcal{G} = (\mathcal{V},\mathcal{E})$ & Network surety graph \\
$i,j, k \in\mathcal{V}$ & Index over nodes in the graph \\
$(j,i) \in \mathcal{E}$ & Directed edge from principal (subcontractor) $j$ to obligee (project owner) $i$ \\
$\din(i)$ & \srsedit{Principals (in-neighbors) of obligee $i$, $\din(i) = \{ j : (j,i) \in \mathcal{E} \}$} \\
$\dout(j)$ & Obligees (out-neighbors) of principal $j$, $\dout(j) = \{i : (i,j) \in \mathcal{E}\}$ \\
$w_{ij}\in[0,1]$ & Fraction of obligee $i$'s projects contracted to principal $j$, where $\sum_{j\in\din(i)} w_{ij} \leq 1$ \\ 
$\W\in\mathbb{R}^{|\mathcal{V}|\times|\mathcal{V}|}$ & Weighted adjacency matrix with entries $w_{ij}$ \\
pure obligee & Node $i$ such that $\dout(i) = 0$ \\
pure principal & Node $j$ such that $\din(j) = 0$ \\
$r_i$ & Risk score for node $i$ ($r_i=0$ if $i$ is a pure obligee) \\
$\rvec \in\mathbb{R}^{|\mathcal{V}|}$ & Column vector of idiosyncratic risk scores with entries $r_i$ \\
$\alpha_i\in[0,1]$ & Probability of network effects ($\alpha_j=0, \alpha_i=1$ for pure principals $j$ and obligees $i$) \\
$\beta_i$ & Financial loss for each node $i \in \mathcal{V}$ \\
$\A \in\mathbb{R}^{|\mathcal{V}|\times|\mathcal{V}|}$ & Diagonal matrix of $\alpha_i$ \\
$\Xvec^t$ & Stochastic process for node failures at step $t$ \\
$\pi$ & Stationary distribution for $\Xvec^{t}$ \\
$\meanvec^t$ & Vector of failure probabilities with entries $\mean_i^t = \Exp{X_i^t}$ \\ 
$\meanvec$ & $\lim_{t \rightarrow \infty} \meanvec^t$, the limiting failure probabilities for each node \\
${\mathbf{1}}$ & Column vector of all ones, dimensions may vary based on context \\
$\tau_1,\dots,\tau_T$ & Product segment IDs indicating the type of work performed \\
$\type(i)$ & Product segment ID of node $i$, $\type(i)\in\{\tau_1,\dots,\tau_T\}$ \\
$\rbar_i$ & Median risk score in $i$'s product segment, $\rbar_i=\text{median}(\{r_j\,:\,\type(j)=\type(i)\})$ \\
$\uvec \in\mathbb{R}^{|\mathcal{V}|}$ & risk-based centrality in equilibrium model, $\uvec = (\I-\A)(\I - \W^\top \A)^\inv \frac{\mathbf{1}}{n}$ \\
$\GAR(\mathbf{x})$ & Weighted average risk: $\sum_{v} \beta_v x_v$ \\
\hline
\end{tabular}
\caption{Common notation}
\label{table:notation}
\end{table*}

\section{\cref{sec:mean_field} Omitted Proofs}
\phantomsection
\label{app:mean_field_proofs}

\subsection{\cref{sec:mean_failure_probabilities} Omitted Proofs}
\label{sec:limit_mean_proofs}

We start off by analyzing the mean field dynamics of our stochastic process and showing the mean failure probabilities satisfy a similar dynamics to the original process.

\MeanFieldRecurrence*
\begin{proof}
The case when $t = 0$ immediately follows since $X_i^0 \sim \Ber(r_i)$.  For the step case we use the law of total probability and the Markov property of the Markov chain to have:
\begin{align*}
    m_i^{t+1} = \Exp{X_i^{t+1}} & = \Exp{\Exp{X_i^{t+1} \mid \Xvec^{t}}} = \Exp{(1 - \alpha_i) r_i + \alpha_i \sum_{j \in \din(i)} w_{ij} X_j^{t}} \\
    & = (1 - \alpha_i) r_i + \alpha_i \sum_{j \in \din(i)} w_{ij} \Exp{X_j^{t}} = (1 - \alpha_i) r_i + \alpha_i \sum_{j \in \din(i)} w_{ij} \mean_j^{t},
\end{align*}
where the second equality follows by \cref{eq:sp_dynamics}, and the final equality by an inductive argument.
\end{proof}
Our first main result for the mean field shows that the limiting failure probabilities indeed exist and satisfy a fixed point equation.

\InverseFixedPoint*
Before presenting the proof of \cref{thm:inverse_fixed_point}, we start off with the following technical lemma.

\SecondNormBounded*
\begin{proof}
    Since $\A\W$ is row sub-stochastic, we have that $\norm{\A\W} \leq 1$.  Additionally, for all pure obligees $i$ we have set $\alpha_i=1$, so their corresponding row sums satisfy $\sum_j (\A\W)_{ij} = \alpha_i \sum_j w_{ij} = 1$.  Thus by construction, $\|\A\W\|=1$, which does not necessarily imply that powers of $\A\W$ are decreasing in size.
    We can, however, show that $\norm{(\A\W)^2}<1$.  In particular, the row sums of $(\A\W)^2$ are given by:
    \begin{equation*}
        \sum_{j\in\mathcal{V}} \srsedit{((\A\W)^2)_{ij}} = \sum_{j\in\mathcal{V}} \sum_{k\in\mathcal{V}} \alpha_i \alpha_k w_{ik} w_{kj} = \alpha_i \sum_{k\in\mathcal{V}} \alpha_k w_{ik} \sum_{j\in\mathcal{V}} w_{kj} \leq \sum_{k\in\mathcal{V}} \alpha_k w_{ik}.
    \end{equation*}
    If $w_{ik}>0$, node $k$ cannot be a pure obligee and $\alpha_k$ must be strictly less than 1.
    Therefore, the row sums are strictly less than 1:
    \begin{equation*}
        \sum_{j\in\mathcal{V}} (\A\W)^2_{ij} \leq \sum_{k\in\mathcal{V}} \alpha_k w_{ik} < \sum_{k\in\mathcal{V}} w_{ik} \leq 1,
    \end{equation*}
    and $\|(\A\W)^2\|<1$.
    Then sub-multiplicativity of the norm implies that higher powers of $\A\W$ are decaying in size.
    Even powers are bounded $\|(\A\W)^{2k}\| \leq \|(\A\W)^2\|^k < 1$, and odd powers can be bounded by even powers: $\|(\A\W)^{2k+1}\| \leq \|(\A\W)^{2k}\|\|\A\W\| \leq \|(\A\W)^{2k}\|.$  Using that $\norm{(\A\W)^2} < 1$ we get that $(\A\W)^t \rightarrow 0$ as required.
\end{proof}
Using \cref{lem:second_norm_bounded} we can show the following corollary.

\Neumann*
\begin{proof}
\srsedit{Note that $\rho((\A\W)^2) \leq \norm{(\A\W)^2} < 1$ implies that $\rho(\A\W)^2=\rho((\A\W)^2)<1$, hence $\rho(\A\W)<1$.
Therefore $\I-\A\W$ is invertible and the Neumann series converges, with
\[
(\I-\A\W)^{-1}=\sum_{t=0}^\infty (\A\W)^t,
\]
by the standard result that $\rho(\mathbf{M})<1$ implies $\sum_{t\ge 0} \mathbf{M}^t$ converges (in any induced matrix norm) and equals $(\I-\mathbf{M})^{-1}$.
}
\end{proof}

With the previous results in hand, we are finally ready to show \cref{thm:inverse_fixed_point}.
\begin{proof}[Proof of \cref{thm:inverse_fixed_point}]
First via \cref{eq:mean_field_recurrence} we have:
\begin{equation}
\label{eq:mean_field_matrix_recurrency}
\meanvec^{t+1} = (\I - \A)\rvec + \A\W\meanvec^{t}.
\end{equation}
By expanding out the previous equation we have that:
\[
\meanvec^{t+1} = \sum_{k < t+1} (\A\W)^k (\I - \A)\rvec + (\A\W)^{t+1} \rvec.
\]
However, in \cref{cor:neumann} we showed that the Neumann series $\sum_{t=0}^\infty (\A\W)^t$ converges to $(\I - \A\W)^\inv$.  Hence, it follows that
    \begin{equation*}
        \lim_{t\rightarrow\infty} \meanvec^{t+1} = \lim_{t\rightarrow\infty}\sum_{k<t+1}(\A\W)^k(\I - \A)\rvec + (\A\W)^{t+1}\rvec = (\I - \A\W)^{-1}(\I - \A)\rvec = \meanvec.
    \end{equation*}
\end{proof}

\LimitingConvergenceRateAcyclic*
\begin{proof}
    Using the fact that $\norm{(\A\W)^2}<1$ by \cref{lem:second_norm_bounded} we look at the vector $\meanvec^t - \meanvec$ to have:
        \[
    (\meanvec^t-\meanvec) 
    = -\sum_{k=t}^\infty (\A\W)^k (\I - \A)\rvec + (\A\W)^t\rvec.
    \]
    By the triangle inequality
    \begin{align*}
        \norm{\meanvec^t-\meanvec} &\leq \norm{\textstyle\sum_{k\geq t} (\A\W)^k (\I - \A)\rvec}_\infty + \norm{(\A\W)^t\rvec}_\infty, \\
        &\leq \norm{(\A\W)^t} \norm{\rvec}_\infty \Big(\norm{\textstyle\sum_{k\geq0} (\A\W)^k} \norm{\I - \A} + 1\Big), \\
        &\leq \Big(1+\textstyle\sum_{k\geq0} \norm{(\A\W)^k}\Big) \norm{\rvec}_\infty \norm{(\A\W)^t}.
    \end{align*}
    Using that
    \[
        \sum_{k\geq0} \norm{(\A\W)^k}
        \leq \sum_{k\geq0} \norm{(\A\W)^2}^{\lfloor k/2 \rfloor}
        = 2 \sum_{k\geq0} \norm{(\A\W)^2}^k
        = \frac{2}{1-\norm{(\A\W)^2}},
    \]
    we then get:
    \begin{equation*}
        \norm{\meanvec^t - \meanvec}_\infty \leq \left( 1 + \frac{2}{1-\norm{(\A\W)^2}} \right) \norm{\rvec}_\infty \norm{(\A\W)^2}^{\lfloor t/2 \rfloor}.
    \end{equation*}

    Next suppose that $G$ is a directed acyclic graph.  Then the last property can be shown from the fact that $(\A\W)^t=0$ for all $t>d$.
    \srsedit{To see why this is, we observe that each entry is $(\A\W)^t_{ij} = \sum_{(v_t=j,\dots,v_t=i)\in\mathcal{V}^t} \prod_{k=0}^{t-1} \alpha_{v_k}w_{v_k,v_{k+1}} > 0$ if and only if $(v_0,\dots,v_t)$ is a length $t$ path from $v_0$ to $v_t$.
    Since the longest path contains $d$ edges, it follows that all entries of $(\A\W)^t$ are zero and $\meanvec^t = \sum_{k \leq d} (\A\W)^k (\I - \A)\rvec = \meanvec$ for $t>d$.
    In fact, we also have that $\meanvec^d=\meanvec$ because $(\A\W)^d$ is only nonzero in columns corresponding to pure principals with $\alpha_i=0$, so $(\A\W)^{d} = (\A\W)^{d}(\I-\A)$.
    Then $\meanvec^d = \sum_{k < d} (\A\W)^k (\I - \A)\rvec + (\A\W)^{d}(\I-\A) \rvec =\meanvec$.}
\end{proof}


\subsection{\cref{sec:marginal_risk_increase} Omitted Proofs}
\label{app:mean_increase_proofs}

\AssumptionMeanIncrease*

We show the result through the combination of the following lemmas.

\begin{lemma}
\label{lem:monotonicity}
Under Assumption~\ref{ass:larger_neighbors}, for all \(i \in \mathcal{V}\) and \(t \in \mathbb{N}\) the mean failure probabilities $\mean_i^t$ are monotone with respect to $t$, i.e.
\[
    \mean_i^{t+1} \geq \mean_i^t.
\]
Hence, for all \(t\) and \(i\), $\mean_i \geq \mean_i^t$.
\end{lemma}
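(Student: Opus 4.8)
The plan is to prove the statement by induction on $t$, using the matrix recurrence $\meanvec^{t+1} = (\I - \A)\rvec + \A\W\meanvec^{t}$ from \cref{lem:mean_field_recurrence}. Subtracting consecutive iterates yields the key identity $\meanvec^{t+1} - \meanvec^{t} = \A\W(\meanvec^{t} - \meanvec^{t-1})$ for $t \geq 1$, which reduces the monotonicity claim to showing that a single nonnegative difference is propagated forward by the nonnegative operator $\A\W$. Thus the entire argument hinges on establishing the base case $\meanvec^{1} \geq \meanvec^{0}$, after which the inductive step is automatic.

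For the base case, I would compute directly
\[
\meanvec^{1} - \meanvec^{0} = (\I - \A)\rvec + \A\W\rvec - \rvec = \A(\W\rvec - \rvec),
\]
so that entrywise $(\meanvec^1 - \meanvec^0)_i = \alpha_i\big(\sum_{j \in \din(i)} w_{ij} r_j - r_i\big)$. I would then verify nonnegativity by splitting on node type. For pure principals $\alpha_i = 0$, so the entry is zero. For pure obligees $r_i = 0$, and since weights and risk scores are nonnegative the term $\sum_{j \in \din(i)} w_{ij} r_j - r_i \geq 0$. For intermediaries the inequality $\sum_{j \in \din(i)} w_{ij} r_j \geq r_i$ is exactly the content of \cref{ass:larger_neighbors}. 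In all three cases the entry is nonnegative, so $\meanvec^{1} \geq \meanvec^{0}$.

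For the inductive step, assume $\meanvec^{t} \geq \meanvec^{t-1}$ entrywise. Since $\A$ is diagonal with nonnegative entries and $\W$ is a nonnegative (row-substochastic) matrix, the product $\A\W$ has nonnegative entries and hence maps nonnegative vectors to nonnegative vectors. Applying this to the identity above gives $\meanvec^{t+1} - \meanvec^{t} = \A\W(\meanvec^{t} - \meanvec^{t-1}) \geq \mathbf{0}$, completing the induction. Finally, since $(\meanvec^{t})_{t \in \NN}$ is entrywise nondecreasing and converges to $\meanvec$ by \cref{thm:inverse_fixed_point}, the limit dominates every iterate, giving $\meanvec \geq \meanvec^{t}$ for all $t$.

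The only real subtlety, and thus the main obstacle, is the base case: \cref{ass:larger_neighbors} is stated only for intermediaries, so one must separately confirm the inequality for the boundary node types (pure principals and pure obligees) using the modeling conventions $\alpha_i = 0$ and $r_i = 0$ respectively. Everything downstream is a routine consequence of the nonnegativity of $\A\W$.
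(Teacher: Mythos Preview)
Your proof is correct and follows essentially the same induction as the paper: establish $\meanvec^1 \geq \meanvec^0$ using \cref{ass:larger_neighbors}, then propagate via $\meanvec^{t+1} - \meanvec^t = \A\W(\meanvec^t - \meanvec^{t-1})$. If anything, your base case is more carefully argued, since you explicitly handle pure principals and pure obligees separately rather than tacitly folding them into the assumption.
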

\begin{proof}
We will show the property by induction for each node $i \in \mathcal{V}$.

\noindent \textit{Base Case $(t = 0)$}: 
\begin{align*}
    m_i^1 &= (1 - \alpha_i) r_i + \alpha_i \sum_{j \in \din(i)} w_{ij} \Exp{X_j^0} = (1 - \alpha_i) r_i + \alpha_i \sum_{j \in \din(i)} w_{ij} r_j \\
          &\underset{A\ref{ass:larger_neighbors}}{\geq} (1 - \alpha_i) r_i + \alpha_i r_i = r_i = m_i^0.
\end{align*}

\noindent \textit{Step Case $(t \rightarrow t+1)$}:  By definition of the mean failure probabilities,
\[
    \begin{cases}
        m_i^{t+1} = (1 - \alpha_i) r_i + \alpha_i \sum_{j \in \din(i)} w_{ij} m_j^t, \\
        m_i^t = (1 - \alpha_i) r_i + \alpha_i \sum_{j \in \din(i)} w_{ij} m_j^{t-1}.
    \end{cases}
\]
Thus,
\[    
    m_i^{t+1} - m_i^t = \alpha_i \sum_{j=1}^{n} w_{ij} (m_j^t - m_j^{t-1}) \underset{(\text{I.H.})}{\geq} 0 \implies m_i^{t+1} \geq m_i^t, \quad \text{for all } i \in \mathcal{V}.
\]
Therefore, $\mean_i^{t+1} \geq \mean_i^t$ as required.
\end{proof}

\begin{lemma}
\label{lem:alpha_monotonicity}
    Under \Cref{ass:larger_neighbors}, increasing the exposure to risk from the network $\alpha_i$ of any intermediary $i \in \mathcal{V}$ can only increase default probabilities in $\meanvec$.
    That is, all entries of $\meanvec$ are monotonically non-decreasing with respect to $\alpha_i$:
    \[
        \frac{\partial \meanvec}{\partial \alpha_i}
        = (\I-\A\W)^{-1} \frac{\partial \A}{\partial \alpha_i} \left[ \W\meanvec - \rvec \right]
        \;\geq\; \mathbf{0}.
    \]
\end{lemma}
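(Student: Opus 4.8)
The plan is to obtain the stated formula by implicit differentiation of the fixed-point equation from \cref{thm:inverse_fixed_point}, and then verify nonnegativity factor by factor. Starting from $\meanvec = (\I - \A)\rvec + \A\W\meanvec$, I would differentiate both sides with respect to $\alpha_i$, treating $\meanvec$ as a function of $\alpha_i$. Since $\A = \mathrm{diag}(\alpha_1,\dots,\alpha_n)$, we have $\partial \A/\partial \alpha_i = e_i e_i^\top$, the matrix with a single $1$ in the $(i,i)$ entry. Expanding the derivative of each term gives $\partial \meanvec/\partial \alpha_i = -(\partial \A/\partial \alpha_i)\rvec + (\partial \A/\partial \alpha_i)\W\meanvec + \A\W\,\partial \meanvec/\partial \alpha_i$; collecting the terms involving $\partial \meanvec/\partial \alpha_i$ on the left yields $(\I - \A\W)\,\partial \meanvec/\partial \alpha_i = (\partial \A/\partial \alpha_i)(\W\meanvec - \rvec)$, and multiplying by $(\I - \A\W)^{-1}$ (which exists by \cref{cor:neumann}) gives exactly the claimed expression. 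The derivative is well defined because $\meanvec = (\I-\A\W)^{-1}(\I-\A)\rvec$ is a smooth function of $\alpha_i$ wherever $(\I-\A\W)$ is invertible, which by \cref{cor:neumann} holds throughout the parameter domain, and because intermediaries satisfy $\alpha_i \in (0,1)$, so we differentiate in the interior.

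It then remains to show the right-hand side is entrywise nonnegative, which I would argue as a product of a nonnegative matrix and a nonnegative vector. First, the Neumann series from \cref{cor:neumann} gives $(\I-\A\W)^{-1} = \sum_{t \geq 0}(\A\W)^t$; since $\A$ and $\W$ have nonnegative entries, every power $(\A\W)^t$ is entrywise nonnegative, so $(\I-\A\W)^{-1} \geq 0$. Second, the vector $(\partial \A/\partial \alpha_i)(\W\meanvec - \rvec) = (\W\meanvec - \rvec)_i\, e_i$ is supported on coordinate $i$ alone, so its nonnegativity reduces to verifying the single scalar $(\W\meanvec)_i - r_i = \sum_{j \in \din(i)} w_{ij}\mean_j - r_i \geq 0$.

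To establish this scalar inequality, I would invoke the monotonicity conclusion $\meanvec \geq \rvec$ proved in the earlier portion of \cref{thm:average_increase}. Combining this with \Cref{ass:larger_neighbors}, for any intermediary $i$ we obtain $\sum_{j \in \din(i)} w_{ij}\mean_j \geq \sum_{j \in \din(i)} w_{ij} r_j \geq r_i$, where the first inequality uses $\mean_j \geq r_j$ together with $w_{ij} \geq 0$ and the second is the assumption. Hence $(\W\meanvec - \rvec)_i \geq 0$, so the right-hand side is a nonnegative matrix applied to a nonnegative vector, giving $\partial \meanvec/\partial \alpha_i \geq \mathbf{0}$.

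The main obstacle, and the only non-mechanical step, is the logical dependency in the last paragraph: the scalar inequality $(\W\meanvec)_i \geq r_i$ relies on $\meanvec \geq \rvec$, which is itself a conclusion of \cref{thm:average_increase}. Thus this lemma must be sequenced after the monotonicity-in-$t$ argument that yields $\meanvec \geq \rvec$, so as to avoid using the $\alpha$-derivative to establish $\meanvec \geq \rvec$ circularly. Everything else — the implicit differentiation and the sign-tracking of the two factors — is routine once that ordering is respected.
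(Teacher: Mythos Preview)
Your proposal is correct and takes essentially the same approach as the paper. The only cosmetic difference is that you derive the derivative formula by implicit differentiation of the fixed-point equation, whereas the paper differentiates the explicit closed form $(\I-\A\W)^{-1}(\I-\A)\rvec$ via the matrix-inverse derivative identity; the nonnegativity argument (Neumann series for $(\I-\A\W)^{-1}\ge 0$, then $\meanvec\ge\rvec$ combined with \Cref{ass:larger_neighbors} to get $(\W\meanvec-\rvec)_i\ge 0$) and the sequencing with respect to the $\meanvec\ge\rvec$ result are identical.
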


\begin{proof} 
    The vector derivative of $\meanvec = (\I-\A\W)^{-1}(\I-\A)\rvec$ with respect to $\alpha_i$ is
    \begin{align*}
        \frac{\partial \meanvec}{\partial \alpha_i}
        &= \frac{\partial}{\partial \alpha_i} \left[(\I-\A\W)^{-1}\times(\I-\A)\rvec\right] \\
        &= \frac{\partial (\I-\A\W)^{-1}}{\partial \alpha_i}(\I-\A)\rvec + (\I-\A\W)^{-1} \frac{\partial (\I-\A)\rvec}{\partial \alpha_i} \\
        &= -(\I-\A\W)^{-1}\frac{\partial (\I-\A\W)}{\partial \alpha_i}(\I-\A\W)^{-1}(\I-\A)\rvec + (\I-\A\W)^{-1} \frac{\partial (\I-\A)}{\partial \alpha_i}\rvec \\
        &= (\I-\A\W)^{-1}\frac{\partial \A}{\partial \alpha_i} \W \underbrace{(\I-\A\W)^{-1}(\I-\A)\rvec}_{\meanvec} - (\I-\A\W)^{-1} \frac{\partial \A}{\partial \alpha_i}\rvec \\
        &= (\I-\A\W)^{-1}\frac{\partial \A}{\partial \alpha_i}(\W\meanvec-\rvec).
    \end{align*}
    Because $\A\W$ is row sub-stochastic and non-negative, $(\I-\A\W)^{-1} = \sum_{t=0}^\infty (\A\W)^t \succeq \mathbf{0}$ is also non-negative.
    Additionally, $\frac{\partial \A}{\partial \alpha_i}\succeq\mathbf{0}$ is a diagonal matrix with zero entries for all rows corresponding to pure principals and pure obligees, since their entries are fixed as 0 and 1 in $\A$.
    We will multiply $(\W\meanvec-\rvec)$ by $\frac{\partial \A}{\partial \alpha_i}$ from the left, which will give a vector $\frac{\partial \A}{\partial \alpha_i} (\W\meanvec-\rvec)$ whose only nonzero entries are those corresponding to intermediaries.
    \srsedit{From \Cref{lem:monotonicity} and \Cref{ass:larger_neighbors}, we have that for all intermediaries $i$,}
    \[
        (\W\meanvec-\rvec)_i
        = \Big( \sum_{j\in\din(i)} w_{ij} \mean_j \Big) - r_i
        \geq \Big( \sum_{j\in\din(i)} w_{ij}r_j \Big) - r_i
        \geq 0.
    \]
    Then $\frac{\partial \A}{\partial \alpha_i} (\W\meanvec-\rvec)\geq$, and the vector derivative $\frac{\partial \meanvec}{\partial \alpha_i}\geq\mathbf{0}$ is non-negative.
\end{proof}

\begin{corollary}
\label{cor:monotonicity_loss}
Under \Cref{ass:larger_neighbors}, we have that for all $t  \in \NN$,
\begin{equation}
    \label{eq:montone_loss}
    \Exp{\mathcal{L}(\Xvec^{t+1})} \geq \Exp{\mathcal{L}(\Xvec^{t})}.
\end{equation}
Hence, we have that:
\begin{equation}
\label{eq:montone_loss_stat}
    \Exp{\mathcal{L}(\Xvec^\infty)} \geq \Exp{\mathcal{L}(\Xvec^{0})}.
\end{equation}
\end{corollary}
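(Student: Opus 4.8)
The plan is to reduce the statement to a fact about the mean-field vector $\meanvec^t$, which has already been controlled in \Cref{lem:monotonicity}. The crucial observation is that the expected global loss is linear in the marginal failure probabilities. Indeed, by linearity of expectation and the definition $\mathcal{L}(\Xvec^t) = \betavec^\top \Xvec^t$ in \cref{eq:gar}, I would first write
\[
\Exp{\mathcal{L}(\Xvec^t)} = \Exp{\sum_{i \in \mathcal{V}} \beta_i X_i^t} = \sum_{i \in \mathcal{V}} \beta_i \, \Exp{X_i^t} = \sum_{i \in \mathcal{V}} \beta_i \mean_i^t = \betavec^\top \meanvec^t,
\]
where the interchange of sum and expectation is immediate since $\mathcal{V}$ is finite and each $X_i^t \in \{0,1\}$ is bounded.

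With this identity in hand, the per-step inequality \cref{eq:montone_loss} follows directly. From \Cref{lem:monotonicity} we know that under \Cref{ass:larger_neighbors} the marginals are monotone, $\mean_i^{t+1} \geq \mean_i^t$ for every $i \in \mathcal{V}$ and every $t \in \NN$. Since the financial losses satisfy $\beta_i \geq 0$ (and $\beta_i = 0$ for pure obligees), multiplying each coordinatewise inequality by $\beta_i$ and summing preserves the inequality, giving $\betavec^\top \meanvec^{t+1} \geq \betavec^\top \meanvec^t$, i.e. $\Exp{\mathcal{L}(\Xvec^{t+1})} \geq \Exp{\mathcal{L}(\Xvec^t)}$.

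For the limiting statement \cref{eq:montone_loss_stat}, I would pass to the limit $t \to \infty$. By \Cref{thm:inverse_fixed_point} the marginals converge, $\mean_i^t \to \mean_i$, so $\betavec^\top \meanvec^t \to \betavec^\top \meanvec$; combined with the almost-sure convergence $\mathcal{L}(\Xvec^t) \to \mathcal{L}(\Xvec^\infty)$ noted after \Cref{lem:markov_chain_stationary} (and boundedness of the finite sum, which justifies identifying $\Exp{\mathcal{L}(\Xvec^\infty)} = \betavec^\top \meanvec$), the chain of inequalities $\Exp{\mathcal{L}(\Xvec^0)} = \betavec^\top \meanvec^0 \leq \betavec^\top \meanvec = \Exp{\mathcal{L}(\Xvec^\infty)}$ follows from monotonicity of the convergent sequence.

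There is essentially no serious obstacle here: the entire result is a linear functional applied to the monotone sequence $\meanvec^t$, so the work was already done in establishing \Cref{lem:monotonicity}. The only point requiring minor care is the limiting identity $\Exp{\mathcal{L}(\Xvec^\infty)} = \betavec^\top \meanvec$, which I would justify by noting that $\mathcal{L}$ is a bounded continuous (indeed linear) function on the finite state space $\{0,1\}^n$, so convergence in distribution of $\Xvec^t$ to the stationary law transfers to convergence of expectations.
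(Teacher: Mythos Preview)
Your proposal is correct and follows essentially the same approach as the paper: use linearity of expectation to write $\Exp{\mathcal{L}(\Xvec^t)} = \betavec^\top \meanvec^t$, invoke \Cref{lem:monotonicity} together with $\beta_i \geq 0$ to obtain \cref{eq:montone_loss}, and then pass to the limit via the continuous mapping theorem for \cref{eq:montone_loss_stat}. Your write-up is slightly more detailed in justifying the limiting identity, but the argument is identical in substance.
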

\begin{proof}
Since \(\mathcal{L}(\Xvec^t) = \sum_{i=1}^{n} \beta_i X_i^t\),
with $\beta_i \geq 0$ for all $i$, \cref{eq:montone_loss} follows by linearity of expectation and~\Cref{lem:monotonicity}, and \cref{eq:montone_loss_stat} follows by taking the limit as \(t \to \infty\) and the continuous mapping theorem and the linearity of $\GAR(\cdot)$.
\end{proof}

An obvious question remains as to whether the reverse is true under \Cref{ass:larger_neighbors}, i.e. if intermediaries have lower risk principals on average then the mean failure probabilities are monotone decreasing.  Indeed, a straightforward extension to the previous discussion establishes the following:

\srsedit{
\begin{corollary}
\label{cor:reverse_mean_monotonicity}
Let $\mathcal N := V \setminus O$ denote the set of principals and intermediaries.  If all intermediaries $i \in \mathcal N$ have lower-risk principals on average, i.e. $\sum_{j \in \din(i)} w_{ij} r_j \leq r_i$, then the mean failure probabilities are componentwise non-increasing on $\mathcal N$, i.e. $\mean^{t+1}_i \leq \mean^t_i$ for all $i \in \mathcal N$.
Consequently, $\mean_i \leq r_i$ for all $i \in \mathcal N$ and $\Exp{\GAR(\Xvec^{t+1})} \leq \Exp{\GAR(\Xvec^t)}$.
Furthermore, for any intermediary $i$,
\[
\frac{\partial \meanvec}{\partial \alpha_i}
=
(\I-\A\W)^{-1}
\frac{\partial \A}{\partial \alpha_i}
\left[ \W\meanvec - \rvec \right]
\le \mathbf{0}.
\]
\end{corollary}
}

\begin{proof}
    We can follow arguments similar to those in \Cref{lem:monotonicity,cor:monotonicity_loss,lem:monotonicity} by replacing $\geq$ with $\leq$ where appropriate to show that similar statements hold for the opposite direction.
\end{proof}


\MeanFieldGap*
\begin{proof}
To show this claim we start off with the following lemma:

\begin{lemma}
\label{lem:mean_risk_increase}
For any contractor graph $G$ and all $t \in \mathbb{N}$, under \Cref{ass:strictly_larger_neighbors} we have
\[
    \meanvec^t \geq f_t(\delta)\,\rvec,
\]
where
\[
    f_t(\delta) :=
    \begin{cases}
        \I, & t=0, \\[6pt]
        \I + \delta \sum_{k=0}^{t-1} (\A\W)^k \A, & t \geq 1.
    \end{cases}
\]
\end{lemma}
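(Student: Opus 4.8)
The plan is to prove the claim by induction on $t$, with essentially all of the conceptual work concentrated in first recasting \Cref{ass:strictly_larger_neighbors} as a single entrywise matrix inequality. Concretely, I would first establish that
\[
\A(\W\rvec - \rvec) \;\geq\; \delta \A\rvec
\]
holds coordinatewise. This requires checking the three node types separately: for an intermediary $i$ we have $\alpha_i \in (0,1)$ and \Cref{ass:strictly_larger_neighbors} gives $\sum_{j \in \din(i)} w_{ij} r_j - r_i = (\W\rvec)_i - r_i \geq \delta r_i$, so multiplying through by $\alpha_i \geq 0$ preserves the bound; for a pure principal $\alpha_i = 0$, so both sides vanish; and for a pure obligee $\alpha_i = 1$ but $r_i = 0$, so the right-hand side is zero while the left-hand side equals $(\W\rvec)_i \geq 0$ by nonnegativity of $\W$ and $\rvec$. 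This is the step I expect to be the main (if modest) obstacle, since the trailing $\A$ factor in $f_t(\delta)$ is precisely what makes the statement robust across node types, and getting the bookkeeping right for pure principals and obligees is where an incautious argument would slip.

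With this matrix inequality in hand, the base case $t=0$ is immediate because $\meanvec^0 = \rvec = \I\rvec = f_0(\delta)\rvec$, and the case $t=1$ already exhibits the pattern: using the matrix recurrence from \cref{lem:mean_field_recurrence} we get $\meanvec^1 = (\I-\A)\rvec + \A\W\rvec = \rvec + \A(\W\rvec - \rvec) \geq \rvec + \delta\A\rvec = (\I + \delta\A)\rvec = f_1(\delta)\rvec$.

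For the inductive step, assume $\meanvec^t \geq f_t(\delta)\rvec$ coordinatewise. Because $\A\W$ is entrywise nonnegative, left-multiplication by $\A\W$ preserves the partial order, so the recurrence $\meanvec^{t+1} = (\I-\A)\rvec + \A\W\meanvec^t$ yields
\[
\meanvec^{t+1} \;\geq\; (\I-\A)\rvec + \A\W f_t(\delta)\rvec.
\]
I would then reindex the sum via $\A\W f_t(\delta) = \A\W + \delta\sum_{k=0}^{t-1}(\A\W)^{k+1}\A = \A\W + \delta\sum_{k=1}^{t}(\A\W)^k\A$, so that the right-hand side rearranges to $\rvec + \A(\W\rvec - \rvec) + \delta\sum_{k=1}^{t}(\A\W)^k\A\rvec$. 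Applying the matrix inequality once more to bound $\A(\W\rvec - \rvec) \geq \delta\A\rvec = \delta(\A\W)^0\A\rvec$ and absorbing this as the $k=0$ term of the sum gives
\[
\meanvec^{t+1} \;\geq\; \rvec + \delta\sum_{k=0}^{t}(\A\W)^k\A\rvec = f_{t+1}(\delta)\rvec,
\]
which completes the induction and establishes the lemma.
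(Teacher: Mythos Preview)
The proposal is correct and follows essentially the same approach as the paper: both first establish the key coordinatewise inequality $\A\W\rvec \geq (1+\delta)\A\rvec$ (equivalently, $\A(\W\rvec-\rvec)\geq\delta\A\rvec$) by casework on node type, then propagate it through the linear recurrence. The only cosmetic difference is that the paper telescopes the increments $\meanvec^{t+1}-\meanvec^t=(\A\W)^t(\meanvec^1-\meanvec^0)\geq\delta(\A\W)^t\A\rvec$ rather than running your direct induction on $\meanvec^t\geq f_t(\delta)\rvec$, but these are equivalent bookkeeping choices.
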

\begin{proof}
First we argue that under \Cref{ass:strictly_larger_neighbors}, we have that for all nodes $i$:
\[
\alpha_i w_i^\top \rvec \geq (1 + \delta) \alpha_i r_i,
\]
where $w_i^\top = [w_{i1}\quad w_{i2} \quad \dots \quad w_{in}]$ denotes the $i$th row in $\W$.
For any pure principal $i$ we have $\alpha_i = 0$, so both sides are zero and the inequality is trivially satisfied. For pure obligees $i$ $r_i = 0$, and so again the inequality is trivially satisfied. Then for intermediaries with $\alpha_i\in(0,1)$, the inequality is precisely \Cref{ass:strictly_larger_neighbors}.
Thus, we have that $\A \W \rvec \geq (1 + \delta) \A \rvec$.

We now show the result by considering the change in $\meanvec^t$ at each step.
In the first step, we have that
\begin{equation*}
    \meanvec^1 = (\I-\A)\rvec + \A\W\meanvec^0 
    = (\I-\A)\rvec + \underbrace{\A\W\rvec}_{\geq(1+\delta)\A\rvec} 
    \geq r + \delta\A\rvec,
\end{equation*}
i.e. $\meanvec^1-\meanvec^0\geq\delta\A\rvec$.
Then \Cref{lem:mean_field_recurrence} implies that $\meanvec^{t+1}-\meanvec^t = (\A\W)^t(\meanvec^1-\meanvec^0)$, so we have that $\meanvec^{t+1}-\meanvec^t \geq \delta(\A\W)^t\A\rvec$.
Then
\begin{equation*}
    \meanvec^t = (\meanvec^t-\meanvec^{t-1})+\cdots+(\meanvec^1-\meanvec^0) + \meanvec^0
    \geq \rvec + \delta \sum_{k=0}^{t-1} (\A\W)^k\A\rvec,
\end{equation*}
which shows that $\meanvec^t \geq \Big[ \I + \delta \sum_{k=0}^{t-1} (\A\W)^k\A \Big] \rvec = f_t(\delta)\rvec$.
\end{proof}

Using \cref{lem:mean_risk_increase} and taking the limit as $t \rightarrow \infty$ we have that:
\begin{align*}
    \meanvec & \geq (\lim_{t \rightarrow \infty} f_t(\delta)) \rvec,
\end{align*}
where we can use the interchange, since all of the limits exist.  However,
\srsedit{\begin{align*}
    \lim_{t \rightarrow \infty} f_t(\delta) & = \lim_{t \rightarrow \infty} \I + \delta \sum_{k=0}^{t-1}(\A\W)^k \A \\
    & = \I + \delta \lim_{t \rightarrow \infty} \sum_{k=0}^{t-1} (\A\W)^k \A \\
    & = \I + \delta (\I - \A\W)^\inv\A
\end{align*}
where in the last line we used the Von-Neumann expansion for a matrix (see \cref{cor:neumann}). Rearranging this expression gives that $\meanvec - \rvec \geq \delta(\I - \A\W)^{-1}\A\rvec$.}

Moreover, we also have:
\begin{align*}
    \Exp{\GAR(\Xvec^\infty)} & = \sum_i \beta_i \meanvec_i = \beta^\top \meanvec \\
    & \geq \beta^\top (\I + \delta (\I - \A\W)^\inv \A) \rvec = \beta^\top \rvec + \delta \beta^\top (\I - \A\W)^\inv \A \rvec = \Exp{\GAR(\Xvec^0)} + \delta \beta^\top (\I - \A\W)^\inv \A \rvec. 
\end{align*}
\end{proof}

\section{\cref{sec:stochastic_process_analysis} Omitted Proofs}
\label{app:stochastic_process_proofs}

\MarkovChainStationary*
\begin{proof}
    The fact that $\Xvec^t$ is a Markov chain over the finite state space $\{0,1\}^n$ follows immediately from our construction; that is, $(\Xvec^t)_{t \in \NN}$ is a sequence of random variables that satisfy the Markov property
    \(
        \Pr(\Xvec^{t+1}=\xvec^{t+1} \mid \Xvec^{t},\dots,\Xvec^0)
        = \Pr(\Xvec^{t+1}=\xvec^{t+1} \mid \Xvec^{t})
    \)
    for all $t$.
    We will restrict our attention to states $\xvec^t\in\{0,1\}^n$ that are \emph{feasible}, in the sense that there exists some time $t$ such that $\Pr(\Xvec^t=\xvec^t)>0$.
    The Markov chain $\Xvec^t$ has a unique stationary distribution if it is \emph{ergodic} (irreducible, positive recurrent) and \emph{aperiodic} over the feasible state space~\citep{resnick2013adventures}.  Thus, we wish to show that all feasible states can reach each other in a finite number of steps, and that at least one feasible state has a self-loop, so $\Xvec^t$ is also aperiodic whenever it is ergodic.

    Before we show that self-loops exist and that the Markov chain is ergodic, we characterize the states that can be reached in one step from a given current state.
    Conditioning on a feasible current state $\Xvec^t=\xvec^t$, nodes fail or do not fail independently in the following time step, so we can consider the conditional default probability of each node individually.
    We first consider principals (pure principals and intermediaries).  Because we require $r_i\in(0,1)$ and $\alpha_i<1$ for all principals $i$, all of their default probabilities lie strictly between zero and one.  More formally, the probability of any principal $i$ failing in the next step satisfies
    \begin{equation}\label{eq:stochastic_principal_states}
        0 < (1-\alpha_i)r_i \leq \Pr(X_i^{t+1}=1 \mid \Xvec^t = \xvec^t) \leq (1-\alpha_i)r_i + \alpha_i < 1,
    \end{equation}
    for any time $t$ and feasible state $\xvec^t$.  Intuitively, this property holds for principals because their default probability contains an idiosyncratic component.
    Pure obligees, on the other hand, can only fail (or not fail) if one of their principals failed (or did not fail) in the previous time step.
    For any pure obligee $i \in \oblig$, feasible $\xvec^t$, and $x^{t+1}_i\in\{0,1\}$,
    \[
        \Pr(X^{t+1}_i=x^{t+1}_i \mid \Xvec^t=\xvec^t)
        = \sum_{j\in\din(i)} w_{ij} \Ind{x^t_j=x^{t+1}_i} > 0
    \]
    if and only if at least one principal is currently in state $x^{t+1}_i$.
    Then conditional independence implies
    \[
        \Pr(\Xvec^{t+1}=\xvec^{t+1} \mid \Xvec^t=\xvec^t)
        = \prod_{j \in \noblig} \underbrace{\Pr(X^{t+1}_j=x^{t+1}_j \mid \Xvec^t=\xvec^t)}_{>0\;\forall j}
        \prod_{i \in \oblig} \underbrace{\Pr(X^{t+1}_i=x^{t+1}_i \mid \Xvec^t=\xvec^t)}_{>0\text{ iff. }\{j \in\din(i) \mid x^t_j=x^{t+1}_i\}\neq\varnothing}.
    \]
    Thus we have the following necessary and sufficient condition:
    \begin{equation}\label{eqn:reachable_prop}
        \Pr(\Xvec^{t+1}=\xvec^{t+1} \mid \Xvec^t=\xvec^t) > 0
        \iff
        \{ j \in\din(i) \mid x^t_j=x^{t+1}_i \}\neq\varnothing,\text{ for all pure obligees }i.
    \end{equation}
    This shows that we can usually go between any two states in one step, with the only exception being when \Cref{eqn:reachable_prop} does not hold.
    In other words, we can reach $\xvec^{t+1}$ from $\xvec^t$ in one step as long as the states of principals in $\xvec^t$ ``coincide" with the states of pure obligees in $\xvec^{t+1}$.
    {We can now apply the above discussion to show aperiodicity and ergodicity of the Markov chain.}
x    
    We first wish to show that there exists a feasible state $\xvec^t$ with a self-loop, so that we can guarantee aperiodicity whenever we have ergodicity.
    From \Cref{eqn:reachable_prop} it follows that $\xvec^t$ has a self-loop, i.e.
    \(
        \Pr(\Xvec^{t+1}=\xvec^t \mid \Xvec^t=\xvec^t) > 0,
    \)
    if and only if
    \(
        \{ j \in\din(i) \mid x^t_j=x^t_i \}\neq\varnothing.
    \)
    Additionally, at least one such state always exists: the state in which no nodes fail, i.e. the vector of all zeros $\xvec=\mathbf{0}$.  We define $X^0_i \sim \text{Bernoulli}(r_i)$ as independent random variables with $r_i<1$, so
    \(
        \Pr(\Xvec^0=\mathbf{0}) = \prod_{i\in\mathcal{V}} (1-r_i) > 0.
    \)
    Therefore, we have that $\xvec^t=\mathbf{0}$ is a feasible state in which all pure obligees $i$ satisfy
    \[
        \{j\in\din(i) \mid x_j^t=x_i^t\}
        = \{j\in\din(i) \mid \mathbf{0}_j^t=\mathbf{0}_i^t\}
        =\din(i)\neq\varnothing,
    \]
    so we always have at least one feasible state with a self-loop.

    Next we show that the Markov chain is ergodic.  Let $\xvec^{t+2}$ and $\xvec^{t}$ denote two arbitrary feasible states in the Markov chain.  We claim that $\Pr(\Xvec^{t+2} = \xvec^{t+2} \mid \Xvec^t = \xvec^t) > 0$ because we can always construct some state $\xvec^{t+1}$ satisfying $\Pr(\Xvec^{t+2} = \xvec^{t+2} \mid \Xvec^{t+1} = \xvec^{t+1}) > 0$ and $\Pr(\Xvec^{t+1} = \xvec^{t+1} \mid \Xvec^{t} = \xvec^{t}) > 0$.
    We will use the fact that no principals $j\in\din(i)$ of any pure obligee $i$ can be contained in the set of pure obligees and \Cref{eqn:reachable_prop} to show that we can set the states of principals and pure obligees in $\xvec^{t+1}$ separately, making it possible to coordinate with both $\xvec^t$ and $\xvec^{t+2}$.  In particular, we can set the states of principals in $\xvec^{t+1}$ so that the states of pure obligees in $\xvec^{t+2}$ are reachable, then set the states of pure obligees in $\xvec^{t+1}$ to be reachable from the states of principals in $\xvec^t$.
    
    We begin by observing that for any feasible $\xvec^{t+2}$, there must be some feasible $\tilde{\xvec}^{t+1}$ such that
    \(
        \Pr(\Xvec^{t+2}=\xvec^{t+2} \mid \Xvec^{t+1}=\tilde{\xvec}^{t+1}) > 0.
    \)
    However, no pure obligees can be a principal in $\din(i)$ for any node $i$, so their realized states at time $t+1$ do not influence any default probabilities at time $t+2$:
    \begin{align*}
        \Pr(X^{t+2}_i=x^{t+2}_i \mid \Xvec^{t+1}=\tilde{\xvec}^{t+1})
        &= (1-x^{t+2}_i) + (2x^{t+2}_i-1)\Big( (1-\alpha_i)r_i + \alpha_i\sum_{j\in\din(i)}w_{ij}\tilde{x}^{t+1}_j \Big) \\
        &= \Pr\left( X^{t+2}_i=x^{t+2}_i \mid X^{t+1}_j=\tilde{x}^{t+1}_j \;\forall j\in\din(i) \right).
    \end{align*}
    That is, for any feasible state $\xvec^{t+1}$ such that $x^{t+1}_j=\tilde{x}^{t+1}_j$ for all principals $j$, its transition probabilities are the same as those of $\tilde{\xvec}^{t+1}$:
    \begin{align*}
        \Pr(\Xvec^{t+2} = \xvec^{t+2} \mid \Xvec^{t+1}=\tilde{\xvec}^{t+1})
        &= \prod_{i\in\mathcal{V}} \Pr\left( X^{t+2}_i=x^{t+2}_i \mid X^{t+1}_j=\tilde{x}^{t+1}_j=x^{t+1}_j \;\forall j\in\din(i) \right) \\
        &= \Pr(\Xvec^{t+2} = \xvec^{t+2} \mid \Xvec^{t+1}=\xvec^{t+1}) > 0.
    \end{align*}
    We can then arbitrarily set the states of pure obligees $i$ in $\xvec^{t+1}$ so that we also satisfy
    \(
        \Pr(\Xvec^{t+1} = \xvec^{t+1} \mid \Xvec^t = \xvec^t) > 0
    \)
    by letting $x^{t+1}_i=\tilde{x}^t_j$ for any $j\in\din(i)$.  Using \Cref{eqn:reachable_prop}, we can check that this indeed ensures a positive transition probability.
    Then $\xvec^{t+1}$ is a feasible state that satisfies both $\Pr(\Xvec^{t+2} = \xvec^{t+2} \mid \Xvec^{t+1} = \xvec^{t+1}) > 0$ and $\Pr(\Xvec^{t+1} = \xvec^{t+1} \mid \Xvec^{t} = \xvec^{t}) > 0$, and we have described a way to construct $\xvec^{t+1}$ such that
    \(
        \Pr(\Xvec^{t+2} = \xvec^{t+2}, \Xvec^{t+1}=\xvec^{t+1} \mid \Xvec^t = \xvec^t) > 0.
    \)
    Thus, we have that
    \(
        \Pr(\Xvec^{t+2} = \xvec^{t+2} \mid \Xvec^t = \xvec^t) > 0
    \)
    for any two feasible states, i.e. any two states can always reach each other in just two steps with positive probability.

    In summary, we have shown that most states can reach each other in one step, as long as states of principals and pure obligees are ``consistent" with each other as described by \Cref{eqn:reachable_prop}.
    Then this implies that the zero vector $\mathbf{0}$ is a feasible state with a self-loop, and that we can always construct an intermediate state $\xvec^{t+1}$ that allows us to traverse between any two feasible states in two steps.
    As a result, the Markov chain is ergodic and aperiodic, and therefore it converges to a unique stationary distribution.
    \end{proof}

\subsection{Extension of \cref{thm:mixing_time} to Time-Varying Graphs}
\label{app:time_varying_mixing}

We next extend the mixing-time result in \cref{thm:mixing_time} to contractor networks whose structure evolves over time.
Specifically, we show that our coupling-based contraction argument continues to hold under mild regularity conditions even when the contracting relationships change dynamically.

Formally, consider a \srsedit{fixed} sequence of directed graphs ${(G_t)}_{t \ge 0}$, where each $G_t = (\V, \E_t)$ represents the contracting relationships at time $t$ over a fixed set of nodes~$\V$.
The structure and interpretation of the model remain identical to \cref{sec:preliminary}, except that contractual ties (and their associated weights) may now vary across time.
We then consider the same stochastic process as in \cref{eq:sp_dynamics}, but now allowing both the failure propagation probabilities and edge weights to evolve with~$t$.
Each node~$i$ has a time-varying propagation probability $\alpha_i^t \in [0,1]$, and $w_{ij}^t$ denotes the fraction of~$i$’s projects subcontracted to principal~$j$ at time~$t$.
For each $t \in \mathbb{N}$, this gives rise to the matrices
\[
\A_t = \mathrm{diag}\{\alpha_i^t\}, \qquad \W_t = (w_{ij}^t)_{i,j \in \V},
\]
where $\W_t$ is row sub-stochastic by construction.
Accordingly, the time-varying stochastic process evolves as
\begin{equation}
\label{eq:time-varying_sp}
X_i^{t+1} \sim \Ber \left((1 - \alpha_i^t) r_i + \alpha_i^t \sum_{j \in \din^t(i)} w_{ij}^t X_j^{t}\right).
\end{equation}

In the static model of \cref{sec:preliminary}, principals satisfy $\alpha_i < 1$ while pure obligees have $\alpha_i = 1$.
In the time–varying setting, it is natural to preserve these across nodes at each step: project owners (obligees) do not suddenly begin subcontracting, and principals do not abruptly stop serving as upstream contractors.
Formally, we impose the following role–persistence condition, together with a time–uniform analogue of the static assumption on propagation probabilities.
\begin{assumption}
\label{ass:no_immediate_flip}
\srsedit{For each $t\ge 0$, define the set of pure obligees  \( O^t := \{ i\in V : \delta^{t}_{\out}(i)=\emptyset\}. \)
We assume the pure-obligee set is time-invariant: $O^{t+1}=O^{t}$ for all $t\ge 0$.
Equivalently, with $P^t:=V\setminus O^t$, we have $P^{t+1}=P^t$ for all $t\ge 0$.}
\end{assumption}

\Cref{ass:no_immediate_flip} reflects that pure obligees (e.g., municipal agencies or project owners) do not act as subcontractors from one period to the next, while principals continue to perform bonded work and may only evolve in their contractual connections.
Graphically, the assumption ensures that outgoing edges from a pure obligee do not appear between $t$ and $t+1$, and that nodes identified as principals retain at least one outgoing edge across time.
Consistent with this interpretation, pure obligees always satisfy $\alpha_i^t = 1$ for all $t \in \NN$.

In addition, we impose a uniform bound on the propagation parameters of all non-obligee nodes.

\begin{assumption}
\label{ass:alpha_uniform_bound}
There exists $\bar\alpha \in (0,1)$ such that, for every $t \in \NN$ and every node $i \in \V$ that is not a pure obligee at time $t$, we have $\alpha_i^t \le \bar\alpha$.
\end{assumption}

\Cref{ass:alpha_uniform_bound} generalizes the static assumption that all principals and intermediaries transmit failures with probability strictly less than one.
Intuitively, while the magnitude of $\alpha_i^t$ may vary over time (e.g., as contracting conditions change), it remains uniformly bounded.
Together, \Cref{ass:no_immediate_flip,ass:alpha_uniform_bound} ensure that the network retains its hierarchical structure across time and that the two-step contraction property continues to hold uniformly.

We now extend the mixing-time bound in \Cref{thm:mixing_time} to the time-varying setting.
In the time-varying regime, the process generally does not admit a stationary distribution, since the transition kernel changes with time.  Accordingly, mixing should be interpreted as the rate at which the process {\em forgets its initialization}: we bound the total variation distance between any two trajectories started from arbitrary initial conditions.  
Under \Cref{ass:no_immediate_flip,ass:alpha_uniform_bound}, the same coupling argument used for static graphs continues to ensure uniform geometric convergence, with a rate governed by the two-step contraction factor~$\bar\alpha$.
When the graph sequence ${G_t}_{t \ge 0}$ is time-homogeneous, this theorem exactly recovers \Cref{thm:mixing_time}, recovering the unique stationary law of the static process.

\begin{restatable}{theorem}{MixingGeneralTimeVarying}
\label{thm:mixing_time_tv}
Let $\{(\A_t,\W_t)\}_{t\ge0}$ be a sequence of time-varying contractor-network matrices defined as above satisfying \Cref{ass:no_immediate_flip,ass:alpha_uniform_bound}.

For any two initial states $\xvec, \yvec \in \{0,1\}^n$ let $\Pr(\Xvec^t \mid \xvec)$ and $\Pr(\Yvec^t \mid y)$ denote the distribution of two realizations of the stochastic process \cref{eq:time-varying_sp} started from $\Xvec^0 = \xvec$ and $\Yvec^0 = \yvec$ respectively.
Then for every $t\in\NN$,
\begin{equation}
\label{eq:tv_bound_tv}
    d_{TV}\!\bigl(\Pr(\Xvec^t \mid \xvec), \Pr(\Yvec^t \mid \yvec)\bigr)\ \le\ n\,\bigl\|\,\M_{t-1}\M_{t-2}\cdots \M_{0}\,\bigr\|_{\infty}.
\end{equation}
As a result, 
\begin{equation}
    \label{eq:tmix_tv_bound}
    d_{TV}\!\bigl(\Pr(\Xvec^t \mid \xvec), \Pr(\Yvec^t \mid \yvec)\bigr)\ \le n \bar\alpha^{\lfloor t/2 \rfloor}.
\end{equation}
\end{restatable}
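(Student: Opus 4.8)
The plan is to adapt the synchronous-coupling argument behind the static result in \cref{thm:mixing_time} almost verbatim, replacing the fixed operator $\A\W$ with the time-indexed operators $\M_t\triangleq\A_t\W_t$. First I would drive two copies $(\Xvec^t,\Yvec^t)$ of the process \cref{eq:time-varying_sp}, started from $\xvec$ and $\yvec$, by a common array of independent uniforms $\theta_i^{t+1}\sim\mathrm{Unif}[0,1]$, writing $X_i^{t+1}=\mathbbm{1}[\theta_i^{t+1}\le h_i^t(\Xvec^t)]$ and $Y_i^{t+1}=\mathbbm{1}[\theta_i^{t+1}\le h_i^t(\Yvec^t)]$ with the time-varying affine map $h_i^t(\xvec)=(1-\alpha_i^t)r_i+\alpha_i^t\sum_{j\in\din^t(i)}w_{ij}^t x_j$. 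Setting $D_i^t\triangleq\Exp{\abs{X_i^t-Y_i^t}}$, the same computation as in \cref{lem:one_step_expected_difference} gives $\Exp{\abs{X_i^{t+1}-Y_i^{t+1}}\mid\Xvec^t,\Yvec^t}=\abs{h_i^t(\Xvec^t)-h_i^t(\Yvec^t)}\le\alpha_i^t\sum_j w_{ij}^t\abs{X_j^t-Y_j^t}$, so that $\Dvec^{t+1}\le\M_t\Dvec^t$ entrywise.

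Since each $\M_t$ is nonnegative, iterating this one-step bound preserves the entrywise order and yields $\Dvec^t\le\M_{t-1}\M_{t-2}\cdots\M_0\,\Dvec^0$, where $\norm{\Dvec^0}\le1$ because the two chains start from deterministic states. Then, exactly as in \cref{thm:mixing_time}, I would bound total variation by the coalescence probability and use submultiplicativity of the $\ell_\infty$-induced norm: $d_{TV}(\Pr(\Xvec^t\mid\xvec),\Pr(\Yvec^t\mid\yvec))\le\Exp{\norm{\Xvec^t-\Yvec^t}_1}=\norm{\Dvec^t}_1\le n\norm{\Dvec^t}_\infty\le n\,\norm{\M_{t-1}\cdots\M_0}\,\norm{\Dvec^0}\le n\,\norm{\M_{t-1}\cdots\M_0}$, which is \cref{eq:tv_bound_tv}.

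The key remaining step, and the part that genuinely invokes the two new assumptions, is to show $\norm{\M_{t-1}\cdots\M_0}\le\bar\alpha^{\lfloor t/2\rfloor}$, the time-varying analogue of the two-step contraction $\norm{(\A\W)^2}<1$ from \cref{lem:second_norm_bounded}. I would compute the row sums of a consecutive product, $\sum_j(\M_{s+1}\M_s)_{ij}=\alpha_i^{s+1}\sum_k w_{ik}^{s+1}\alpha_k^s\sum_j w_{kj}^s\le\alpha_i^{s+1}\sum_k w_{ik}^{s+1}\alpha_k^s$, using row sub-stochasticity of $\W_s$. The crucial observation is that whenever $w_{ik}^{s+1}>0$, node $k$ acts as a principal at time $s+1$ and so is not a pure obligee at time $s+1$; the contrapositive of the role-persistence condition in \cref{ass:no_immediate_flip} then forces $k$ to be a non-obligee already at time $s$, whence \cref{ass:alpha_uniform_bound} gives $\alpha_k^s\le\bar\alpha$. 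Combining this with $\alpha_i^{s+1}\le1$ and $\sum_k w_{ik}^{s+1}\le1$ bounds every row sum of $\M_{s+1}\M_s$ by $\bar\alpha$, i.e. $\norm{\M_{s+1}\M_s}\le\bar\alpha$. Pairing the factors of $\M_{t-1}\cdots\M_0$ into consecutive blocks, and bounding a single leftover factor by $\norm{\M_0}\le1$ when $t$ is odd, gives $\norm{\M_{t-1}\cdots\M_0}\le\bar\alpha^{\lfloor t/2\rfloor}$ and hence \cref{eq:tmix_tv_bound}.

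The main obstacle is precisely this two-step contraction: unlike the static case, a single matrix $\M_s$ need not contract, since pure obligees at time $s$ have $\alpha_i^s=1$, so the estimate must pass through pairs of steps. The delicate point is the \emph{backward} role inference, namely that a principal at time $s+1$ was already a principal at time $s$; this is exactly what the contrapositive of \cref{ass:no_immediate_flip} supplies, allowing \cref{ass:alpha_uniform_bound} to be applied at the earlier time. Finally I would note that when the sequence is time-homogeneous we have $\M_t\equiv\A\W$, and both displayed bounds collapse back to those of \cref{thm:mixing_time}, recovering the static stationary law.
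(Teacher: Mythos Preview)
Your proposal is correct and follows essentially the same approach as the paper: the paper likewise builds the synchronous coupling, derives the one-step domination $\Dvec^{t+1}\le\M_t\Dvec^t$, iterates it, bounds $d_{TV}$ via the coupling inequality, and then proves the uniform two-step contraction $\|\M_{s+1}\M_s\|\le\bar\alpha$ as a standalone lemma using exactly the row-sum computation and the backward role inference from \cref{ass:no_immediate_flip} that you outline. The only cosmetic difference is that when $t$ is odd the paper leaves $\M_{t-1}$ as the unpaired factor rather than $\M_0$, which is immaterial since $\|\M_s\|\le1$ for every $s$.
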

\srsedit{
We adopt the convention that an empty product of matrices equals the identity matrix.
}
Before presenting the proof, we start with the following analog of \cref{lem:second_norm_bounded} that establishes that under \Cref{ass:no_immediate_flip,ass:alpha_uniform_bound}, the fixed graph two–step contraction property continues to hold.
\begin{lemma}
\label{lemma:two_step_contraction}
Under \Cref{ass:no_immediate_flip,ass:alpha_uniform_bound}, we have that
\begin{equation}
\label{eq:time-varying-contraction}
    \sup_{t\in\NN}\ \|\, \A_{t+1}\W_{t+1}\A_t\W_t\,\|_\infty \;\le\; \bar\alpha \;<\;1.
\end{equation}
\end{lemma}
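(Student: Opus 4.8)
The plan is to replicate the static two-step contraction argument of \cref{lem:second_norm_bounded}, but now carrying the time indices explicitly and invoking the role-persistence assumption at the key moment. Write $\M \triangleq \A_{t+1}\W_{t+1}\A_t\W_t$. Since $\A_s$, $\W_s$ are entrywise nonnegative for every $s$, so is $\M$, and therefore its $\ell_\infty$-induced norm equals its largest absolute row sum. Accordingly, I would fix an arbitrary node $i \in \V$ and bound $\sum_\ell \M_{i\ell}$ by $\bar\alpha$ uniformly in $i$ and $t$.

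First I would expand the product, using that $\A_{t+1}$ and $\A_t$ are diagonal, to obtain $\M_{i\ell} = \sum_j \alpha_i^{t+1} w_{ij}^{t+1}\, \alpha_j^{t} w_{j\ell}^{t}$. Summing over $\ell$ and using that $\W_t$ is row sub-stochastic (so $\sum_\ell w_{j\ell}^{t} \le 1$) collapses the innermost sum and yields $\sum_\ell \M_{i\ell} \le \alpha_i^{t+1} \sum_j w_{ij}^{t+1}\, \alpha_j^{t}$. The structure here is exactly the ``two hops'' of the static proof: node $i$ is influenced through $\W_{t+1}$ by its principals $j$, and each such intermediate principal $j$ carries the propagation factor $\alpha_j^{t}$ from the earlier step.

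The crux, and the only subtle point, is to argue that every $j$ contributing a positive weight $w_{ij}^{t+1}$ satisfies $\alpha_j^{t} \le \bar\alpha$. If $w_{ij}^{t+1} > 0$ then $j$ is a principal of $i$ at time $t+1$; in particular $j$ has an outgoing edge at time $t+1$ and so is \emph{not} a pure obligee at time $t+1$. This is where \cref{ass:no_immediate_flip} enters: its contrapositive states that a node which is not a pure obligee at time $t+1$ cannot have been a pure obligee at time $t$. Hence $j$ is a non-obligee at time $t$, and \cref{ass:alpha_uniform_bound} then gives $\alpha_j^{t} \le \bar\alpha$. I expect this transfer of the node's role backward across the two time steps to be the main obstacle to state cleanly, since without role persistence one could not rule out $\alpha_j^{t} = 1$ and the uniform-in-$t$ contraction would fail.

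Finally, substituting $\alpha_j^{t} \le \bar\alpha$, factoring it out of the sum, and using $\sum_j w_{ij}^{t+1} \le 1$ together with $\alpha_i^{t+1} \le 1$ gives $\sum_\ell \M_{i\ell} \le \bar\alpha\, \alpha_i^{t+1} \sum_j w_{ij}^{t+1} \le \bar\alpha$. Because this bound is free of both $i$ and $t$, maximizing over rows yields $\|\M\|_\infty \le \bar\alpha$, and taking the supremum over $t$ establishes \cref{eq:time-varying-contraction}; the strict inequality $\bar\alpha < 1$ is inherited directly from the hypothesis of \cref{ass:alpha_uniform_bound}, preserving the contraction property that the subsequent coupling argument relies on.
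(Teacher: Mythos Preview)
Your proposal is correct and follows essentially the same approach as the paper's proof: expand the row sum of $\A_{t+1}\W_{t+1}\A_t\W_t$, collapse the inner sum using row sub-stochasticity of $\W_t$, then use \cref{ass:no_immediate_flip} (via the contrapositive, exactly as you note) to ensure every intermediate node $j$ with $w_{ij}^{t+1}>0$ is a non-obligee at time $t$ and hence has $\alpha_j^t \le \bar\alpha$. If anything, your phrasing of the role-persistence step is slightly more careful than the paper's, which states that $k$ ``is also a principal at time $t$'' rather than merely ``not a pure obligee at time $t$''; your version is precisely what \cref{ass:alpha_uniform_bound} requires.
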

\begin{proof}
For each $t$, the matrix $\M_t:=\A_t \W_t$ is entry-wise nonnegative and row sub–stochastic, hence $\|\M_t\|_\infty\le 1$. 
To obtain a strict contraction over two steps, fix $t\in\NN$ and a row index $i$. The $\ell_\infty$–induced norm equals the maximum row sum, so we estimate the $i$–th row sum of $\A_{t+1} \W_{t+1} \A_t \W_t$:
\begin{align*}
    \sum_{j} \bigl(\A_{t+1} \W_{t+1} \A_t \W_t\bigr)_{ij}
    &= \sum_{j}\sum_{k} \alpha_{i}^{t+1} w_{ik}^{t+1}\,\alpha_k^{t}\,w_{kj}^{t} \\
    &= \alpha_{i}^{t+1}\sum_{k} \alpha_k^{t} w_{ik}^{t+1} \underbrace{\sum_{j} w_{kj}^{t}}_{\le\,1}
    \;\le\; \sum_{k} \alpha_k^{t} w_{ik}^{t+1},
\end{align*}
where we used $\alpha_i^{t+1}\le 1$ in the last inequality. \srsedit{
If $w^{t+1}_{ik} > 0$, then $\delta^{t+1}_{\out}(k)\neq\emptyset$,
i.e., $k \in P^{t+1}$. By Assumption~\ref{ass:no_immediate_flip},
$P^{t+1}=P^t$, hence $k \in P^t$.
}
Therefore, by \Cref{ass:alpha_uniform_bound}, $\alpha_k^{t}\le \bar\alpha<1$. Hence
\[
    \sum_{k} \alpha_k^{t} w_{ik}^{t+1} \;\le\; \bar\alpha \sum_{k} w_{ik}^{t+1} \;\le\; \bar\alpha,
\]
and taking the maximum over $i$ gives
\[
    \|\A_{t+1} \W_{t+1} \A_t \W_t\|_\infty \;\le\; \bar\alpha \;<\;1.
\]
Finally, taking the supremum over $t$, on both sides, proves the claim.
\end{proof}

Using this lemma we can show \cref{thm:mixing_time_tv}.
\begin{proof}
At a high level, the derivation of \cref{eq:tv_bound_tv} mirrors the static case; we highlight the time-varying modifications and refer to the proof of \cref{thm:mixing_time} for omitted steps. 
We first construct a \emph{synchronous coupling} of the two trajectories $(\Xvec^t)_{t \in \NN}$ and $(\Yvec^t)_{t \in \NN}$ starting from $\Xvec^0 = \xvec$ and $\Yvec^0 = \yvec$ respectively, and show that for all $t\in\NN$,
\[
\Exp{\|\Xvec^t - \Yvec^t\|_1 \mid \xvec, \yvec} \;\le\; n\,\bigl\| \M_{t-1} \M_{t-2} \cdots \M_{0} \bigr\|_{\infty},
\qquad\text{where } \M_s \coloneqq \A_s \W_s.
\]
By the standard coupling inequality,
\begin{align*}
d_{TV}(\Pr(\Xvec^t \mid \xvec), \Pr(\Yvec^t \mid \yvec)) & \;\le\; \Pr(\Xvec^t \neq \Yvec^t \mid \xvec, \yvec)
\;=\; \Pr\!\bigl(\|\Xvec^t - \Yvec^t\|_1 \ge 1 \mid \xvec, \yvec \bigr) \\
& \;\le\; \Exp{\|\Xvec^t - \Yvec^t\|_1}
\;\le\; n\,\bigl\| \M_{t-1}\cdots \M_{0} \bigr\|_{\infty},
\end{align*}
as claimed.

We now formalize the coupling step. Write the dynamics as $\Xvec^{t+1} = f_{\thetavec^{t+1}}(\Xvec^t)$ where $\thetavec^{t} = (\theta_i^{t})_{i \in [n]}$ are independent uniform random variables, exactly as in the proof of \cref{thm:mixing_time}. For $i=1,\ldots,n$, define
\[
    D_i^{t} \;\triangleq\; \Exp{\abs{X_i^t - Y_i^t}},
    \qquad
    \Dvec^{t} \; \triangleq \; \bigl(D_1^{t},\ldots,D_n^{t}\bigr)^{\top}.
\]
By the same calculation as in \cref{lem:one_step_expected_difference} (with time index $t$ carried through), we have for all $t\ge0$,
\[
    \Dvec^{t+1} \;\leq\; \A_t \W_t\, \Dvec^t \;=\; \M_t\,\Dvec^t.
\]
Iterating this one-step domination yields
\begin{align*}
d_{TV}(\Pr(\Xvec^t \mid \xvec), \Pr(\Yvec^t \mid \yvec))
& \leq \Exp{\|\Xvec^t - \Yvec^t\|_1 \mid \xvec, \yvec}
= \|\Dvec^t\|_1
\underset{(\text{C.S.})}{\leq}\; n\, \|\Dvec^t\|_\infty \\
& \leq\; n\, \bigl\| \M_{t-1} \cdots \M_{0}\, \Dvec^0 \bigr\|_\infty 
\;\leq\; n\, \bigl\| \M_{t-1} \cdots \M_{0} \bigr\|_\infty \, \|\Dvec^0\|_{\infty}
\;\leq\; n\, \bigl\| \M_{t-1} \cdots \M_{0} \bigr\|_\infty,
\end{align*}
since $\|\Dvec^0\|_\infty = \norm{\xvec - \yvec}_{\infty} \le 1$. This is precisely the bound stated in \cref{eq:tv_bound_tv}.

Under \Cref{ass:no_immediate_flip,ass:alpha_uniform_bound}, \Cref{lemma:two_step_contraction} holds and gives a uniform two–step contraction. Hence, by submultiplicativity we have that for even steps, $t=2m$,
\[
\|\M_{t-1} \cdots \M_0\|_\infty
=\|( \M_{2m-1} \M_{2m-2})\cdots(\M_1 \M_0)\|_\infty
\ \le\ \bar\alpha^{\,m},
\]
and for odd $t=2m{+}1$,
\[
\|\M_{t-1} \cdots \M_0\|_\infty
=\|\M_{2m}(\M_{2m-1} \M_{2m-2})\cdots(\M_1\M_0)\|_\infty
\ \le\ \|\M_{2m}\|_\infty\,\bar\alpha^{\,m}\ \le\ \bar\alpha^{\,m}.
\]
Combining both cases gives
\[
\|\M_{t-1}\cdots \M_0\|_\infty \le \bar\alpha^{\lfloor t/2\rfloor}.
\]
Plugging this into the bound on $d_{TV}(\Pr(\Xvec^t \mid \xvec), \Pr(\Yvec^t \mid \yvec))$ yields the second result.
\end{proof}

\subsection{\cref{sec:mixing_time} Omitted Proofs}
\label{app:mixing_time}

\MixingDAG*
\begin{proof}
Let $\Xvec^t_S \triangleq (X^t_i)_{i\in S}$ be a random vector containing the entries of $\Xvec^t$ corresponding to nodes in $S\subset\mathcal{V}$.
By definition, the transition $\Pr(X_i^t=X_i^t \mid \Xvec^{t-1}=\xvec^{t-1})$ depends only on the state of in-neighbors (principals) in the previous step and is fixed for all $t>0$:
\[
    \Pr(X_i^t=1 \mid \Xvec^{t-1}=\xvec^{t-1}) = (1-\alpha_i)r_i + \alpha_i \sum_{j\in\din(i)} w_{ij} x_j^{t-1} = \Pr(X_i^t=1 \mid \Xvec^{t-1}_{\din(i)} = \xvec^{t-1}_{\din(i)}).
\]
Then from conditional independence we have that
\[
    \Pr(\Xvec^t=\xvec^t \mid \Xvec^{t-1}=\xvec^{t-1})
    = \prod_{i\in\mathcal{V}} \Pr(X_i^t=x_i^t \mid \Xvec^{t-1}_{\din(i)} = \xvec^{t-1}_{\din(i)})
    = \Pr(\Xvec^t=\xvec^t \mid \Xvec^{t-1}_{\Deltain^1}=\xvec^{t-1}_{\Deltain^1}),
\]
where $\Deltain^1$ (see \cref{def:in-layers}) denotes the subset of nodes that are in-neighbors (principals) of another node.
In other words, the joint distribution over $\Xvec^t$ can be determined without knowing the previous states of nodes not in $\Deltain^1$, i.e., pure obligees.

Similarly, conditional independence implies that the default probabilities of nodes in $\Deltain^k$ depend only on the previous states of their principals, which are contained in $\Deltain^{k+1}$:
\[
    \Pr(\Xvec^t_{\Deltain^k}=\xvec^t_{\Deltain^k} \mid \Xvec^{t-1}=\xvec^{t-1})
    = \prod_{i\in\Deltain^k} \Pr(X_i^t=x_i^t \mid \Xvec^{t-1}_{\din(i)} = \xvec^{t-1}_{\din(i)})
    = \Pr(\Xvec^t_{\Deltain^k}=\xvec^t_{\Deltain^k} \mid \Xvec^{t-1}_{\Deltain^{k+1}}=\xvec^{t-1}_{\Deltain^{k+1}}).
\]
Then the law of total probability implies that
\begin{align*}
    \Pr(\Xvec^t=\xvec^t \mid \Xvec^{t-2}=\xvec^{t-2})
    &= \sum_{\xvec^{t-1}_{\Deltain^1}} \Pr(\Xvec^t=\xvec^t \mid \Xvec^{t-1}_{\Deltain^1}=\xvec^{t-1}_{\Deltain^1}) \Pr(\Xvec^{t-1}_{\Deltain^1}=\xvec^{t-1}_{\Deltain^1} \mid \Xvec^{t-2}=\xvec^{t-2}) \\
    &= \sum_{\xvec^{t-1}_{\Deltain^1}} \Pr(\Xvec^t=\xvec^t \mid \Xvec^{t-1}_{\Deltain^1}=\xvec^{t-1}_{\Deltain^1}) \Pr(\Xvec^{t-1}_{\Deltain^1}=\xvec^{t-1}_{\Deltain^1} \mid \Xvec^{t-2}_{\Deltain^{2}}=\xvec^{t-2}_{\Deltain^{2}}) \\
    &= \Pr( \Xvec^t=\xvec^t \mid \Xvec^{t-2}_{\Deltain^2}=\xvec^{t-2}_{\Deltain^2} ).
\end{align*}
Repeating this argument, we eventually get that:
\begin{align*}
    \Pr(\Xvec^t=\xvec^t \mid \Xvec^{t-d}=\xvec^{t-d})
    &= \sum_{\xvec^{t-1}_{\Deltain^1}} \cdots \sum_{\xvec^{t-d+1}_{\Deltain^{d-1}}}
    \Pr(\Xvec^t=\xvec^t \mid \Xvec^{t-1}_{\Deltain^1}=\xvec^{t-1}_{\Deltain^1})
    \prod_{k=1}^{d-1} \Pr(\Xvec^{t-k}_{\Deltain^k}=\xvec^{t-k}_{\Deltain^k} \mid \Xvec^{t-k-1}_{\Deltain^{k+1}}=\xvec^{t-k-1}_{\Deltain^{k+1}}) \\
    &= \Pr(\Xvec^t=\xvec^t \mid \Xvec^{t-d}_{\Deltain^d} = \xvec^{t-d}_{\Deltain^d}),
\end{align*}
where $d$ is the longest directed path length. We note that this gives a special case of the Chapman--Kolmogorov equation as derived by \cite{Kolmogoroff1931} and \cite{chapman1928brownian}, which describes a general relation between joint probabilities in stochastic processes.
Note that any $d$-length path must start from a pure principal, so $\Deltain^d$ is contained in the set of pure principals who are unaffected by the network and always fail idiosyncratically according to their inherent risk score in $\rvec$.
Thus, $\Pr(\Xvec^{t-d}_{\Deltain^d}=\xvec_{\Deltain^d}) = \Pr(\Xvec^0_{\Deltain^d}=\xvec_{\Deltain^d})$ is fixed for all $t\geq d$.
Then it follows that for all $t\geq d$,
\begin{equation}
\label{eq:pi_dag_def}
    \Pr(\Xvec^t=\xvec^t)
    = \sum_{\xvec_{\Deltain^d}} \Pr(\Xvec^t=\xvec^t \mid \Xvec^{t-d}_{\Deltain^d}=\xvec_{\Deltain^d}) \Pr(\Xvec^{t-d}_{\Deltain^d}=\xvec_{\Deltain^d})
    = \Pr(\Xvec^d=\xvec^t) = \pi(\xvec^t).
\end{equation}
That is, when the contractor network is acyclic and has a maximum depth of $d$, the joint distribution converges to its stationary distribution $\pi(\cdot)$ in at most $d$ steps.
\end{proof}
\section{Computational Experiments: Additional Details}
\label{sec:app_simulations}

\subsection{Construction of Anonymous Network}
\label{app:simulations_network_construction}

In our simulations we use real contract and firm data from our partnering surety organization.  For privacy reasons, we do not use the original data directly. Instead, we construct a replica network that preserves key structural and distributional properties of the empirical network while protecting sensitive information.  \cref{sec:sims_network_description} provides a high level description of the surety network, with the network data included in the attached code details.

Our empirical dataset includes all bonded contracts that were active at any time in calendar year 2018 (i.e. with start date before December 31, 2018 and end date after January 1, 2018).  The graph $G$ contains the set of nodes as contracting organizations which had at least one active contract within that time period.  Each organization $i$ has an observed value $\revenue(i)$ for the total contract volume of obligee $i$ within that year, as observed in the surety’s dataset.  Contracts are dictated by a directed edge $e = (j, i)$ between an obligee $i$ and principal $j$, with a corresponding bond amount $\bond(i,j)$.  Note that in the event that $i$ and $j$ have {\em multiple} contracts, then $\bond(i,j)$ corresponds to the cumulative bond amount across all contracts active within that one year period.  In order to normalize the scale, we construct the edge weights $w_{ij}$ via:
\[
w_{ij} = \frac{\bond(i,j)}{\revenue(i)}.
\]
Note that in the event $\sum_{j \in \din(i)} \bond(i,j) \neq \revenue(i)$ then this node has {\em unobserved contracts}. We present a methodology to handle this case in \cref{app:simulations_unobserved_edges}.  For each organization in the network, we also have their idionyncratic risk score (default probability) $r_i$ and loss amount suffered by the surety if it defaults $\beta_i$. To ensure anonymity, node labels are discarded and replaced with indices under a fixed ordering.

Based on these primitives, we generate a noisy observation of this empirical network that conceals sensitive contractor information while retaining essential features of the edge structure, the distributions of $\beta_i$, $r_i$, and the relationship between network position and node characteristics based on the framework of differential privacy~\citep{dwork2006calibrating}. The construction proceeds as follows. Beginning with the set of pure principals (nodes at depth $\tau=0$), we rewire edges iteratively across depths: 
\begin{enumerate}[label=(\roman*)]
    \item Each node at depth $\tau$ maintains at least one outgoing edge to a node at depth $\tau+1$.
    \item Each node at depth $\tau+1$ with $k$ unmatched in-edges is assigned $k$ incoming edges from nodes at depth $\tau$.
    \item Any remaining unmatched out-edges from nodes at depth $\tau$ are randomly connected to unmatched in-edges of nodes at later depths.
\end{enumerate}
This process continues until depth $d-1$. Nodes at depth $d$ are pure obligees and therefore have no outgoing edges. Conditions (i)–(iii) guarantee that all nodes are matched, degree distributions are preserved, and nodes remain at their original depth, ensuring that principals and obligees retain their types.

To further protect privacy, we apply the Laplace mechanism with scale calibrated to the global sensitivity of each statistic (cf. \citealt{dwork2006calibrating}): each $r_i$ and $\beta_i$ is re-scaled and then perturbed with independent Laplace noise.  Finally, we also redefine the edge weights in our replica network.  The bond amounts of each contract (edge) in the original network also pass through the same re-scaling and Laplace mechanism as the node features, and each re-scaled amount is associated to the contract obligee.  After rewiring, these bond amounts are then arbitrarily reassigned to the rewired in-edges of the obligee and normalized so that the sum of incoming weights to each obligee is one.

This procedure yields a replica network that preserves several important characteristics of the empirical data, the shapes of the distributions of $r_i, \beta_i$ and $w_{ij}$; the way obligees distribute contracts among principals; and the relationship between a node’s position in the network and its features. For example, an obligee that originally allocates most projects to a single principal behaves similarly in the replica, and a high-risk contractor embedded in a long path of intermediaries retains that role.

\subsection{Accounting for Unobserved Edges}
\label{app:simulations_unobserved_edges}

In practice, an obligee's reported revenue $\revenue(i)$ may exceed the value of contracts bonded by the surety, since competing surety organizations may also underwrite a portion of their contracting principal's work.  In this case, we observe that $\revenue(i) > \sum_{j \in \din(i)} \bond(i,j)$ and the proposed methodology will result in a node whose weighted in-degree ($\sum_{j \in \din{i}} w_{ij}$) is strictly less than one.  To account for such unobserved contracts (aka unobserved edges) we present a methodology to impute the graph under reasonable assumption on each obligees contracting behavior.

For each obligee $i$ such that $\revenue(i) > \sum_{j \in \din(i)} \bond(i,j)$, we define a dummy principal with baseline risk $\rout{i}$ and a single outgoing edge weighted as:
\[w_{i,\out} = 1 - \sum_{k\in\din(i)} w_{ik}
\]
to represent all unobserved principals.  Note that by using this dummy variable we rely on the implicit assumption that the subcontractors of $i$ and $k$ do not contract with each other, so that $\rout{i}$ and $\rout{k}$ can be expressed as independent risk scores.
In order to estimate $\rout{i}$ we make the following assumption.

\begin{assumption}
    Suppose each contractor $j$ is assigned one of $T$ product segment types $\ctype(j) \in \{\tau_1,\dots,\tau_T\}$.
    Furthermore, let $\din(i)$ denote the set of all of $i$'s contractors, including unobserved contractors.
    We assume that for any type $\tau_k$,
    \begin{equation*}
        \frac{ \sum_{\{j\in\din(i)\,:\,\ctype(j)=\tau_k\}} w_{ij} }
        { \sum_{\{j\in\din(i)\}} w_{ij} }
        = 
        \frac{ \sum_{\{j\in\mathcal{V}\setminus\din(i)\,:\,\ctype(j)=\tau_k\}} w_{ij} }
        { \sum_{\{j\in\mathcal{V}\setminus\din(i)\}} w_{ij} }.
    \end{equation*}
\end{assumption}
In other words, the fraction of $i$'s obligations to subcontractors  $j$ of type $\tau_k$ in the unobserved network is the same as the fraction in our observed network.
Given this assumption, we define $\rout{i}$ to be a convex combination of the median baseline risk scores $\bar{r}_j$ associated with each product segment type, weighted by how often obligee $i$ works with contractors of the same type in our observed network.  In particular, we set:
\begin{equation*}
    \rout{i} = \sum_{j\in\din(i)} \frac{w_{ij}}{\sum_{j\in\din(i)} w_{ij}} \bar{r}_j.
\end{equation*}

\subsection{Real-World Network Satisfying \Cref{ass:larger_neighbors}}
\label{app:sims_not_satisfying_assumption}

In \Cref{sec:mean_field}, we established that under \Cref{ass:larger_neighbors}, that is, when each intermediary $i$ satisfies $\sum_{j \in \din(i)} w_{ij} r_j \geq r_i$, the limiting failure probabilities $m_i$ are guaranteed to exceed their idiosyncratic risk levels $r_i$.  While this condition is sufficient to ensure higher expected losses, it is not necessary. In practice, we observe that the condition holds for many but not all intermediaries, yet the expected aggregate loss $\Exp{\GAR(\Xvec^\infty)}$ is still larger than $\Exp{\GAR(\Xvec^0)}$.  Indeed, as showin in \Cref{fig:r_vs_Wr}, a number of intermediaries in our empirical network violate the assumption, but the net effect of those satisfying the inequality dominates, producing greeater expected losses and amplified tail risk.

\begin{figure}
    \centering
    \begin{subfigure}[t]{0.48\linewidth}
        \centering
        \includegraphics[width=\linewidth]{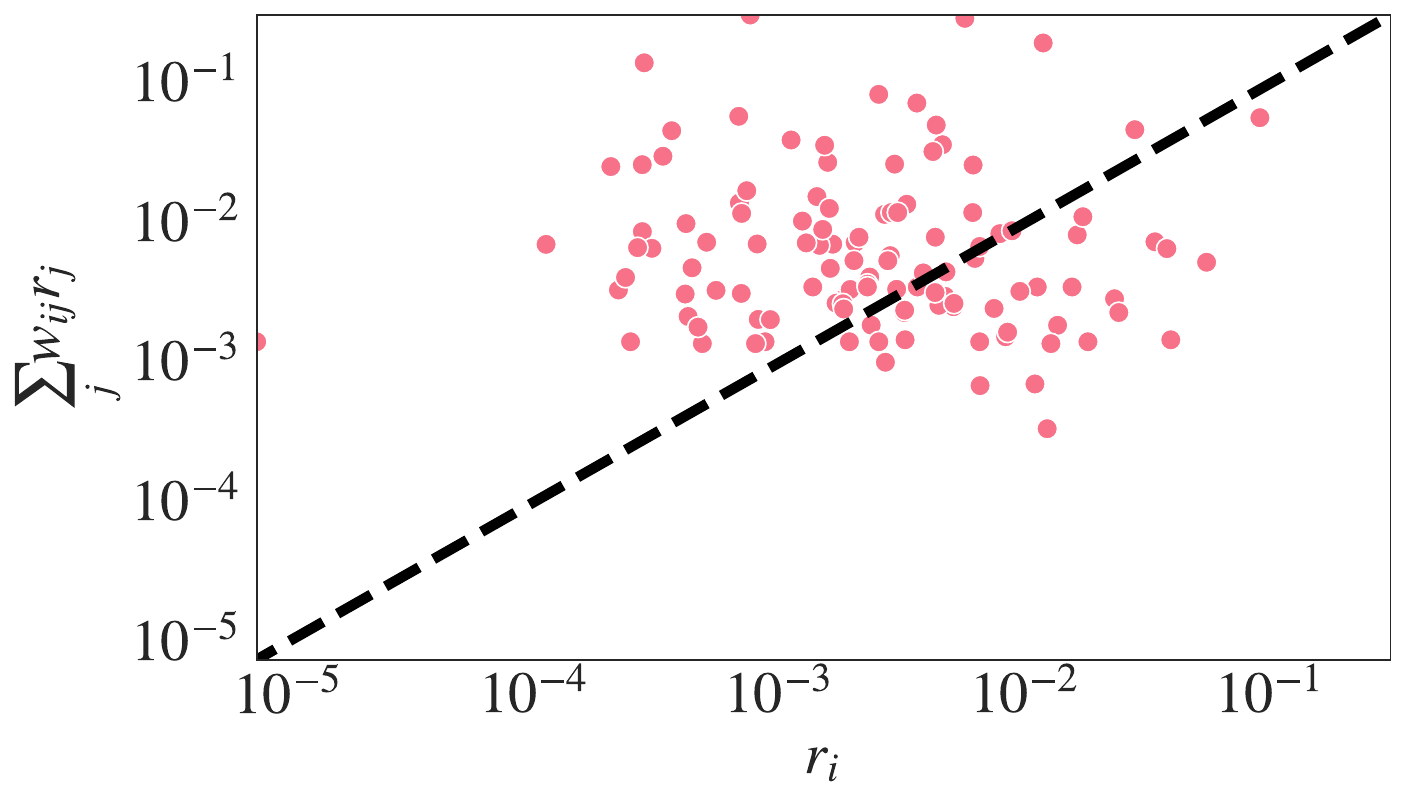}
        \caption{$r_i$ vs $\sum_j w_{ij} r_j$}
        \label{fig:r_vs_Wr}
    \end{subfigure}
    \hfill
    \begin{subfigure}[t]{0.48\linewidth}
        \centering
        \includegraphics[width=\linewidth]{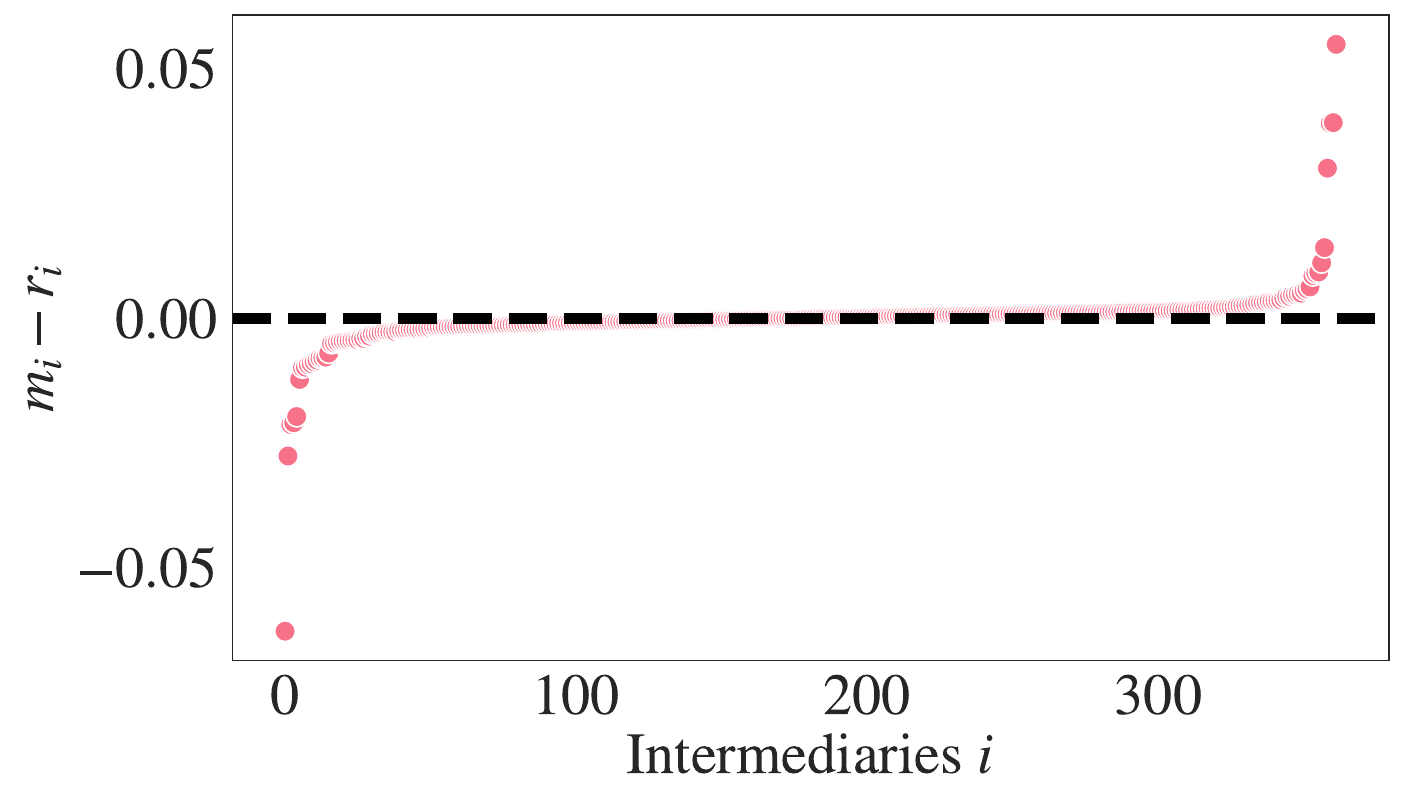}
        \caption{$m_i - r_i$}
        \label{fig:sufficient_condition}
    \end{subfigure}
    \caption{(Left) The idiosyncratic default probabilities $r_i$ for intermediaries $i$ as compared to the weighted average of their principals' idiosyncratic default probabilities. Several points fall below the dashed $y=x$ line, indicating that for these intermediaries, \Cref{ass:larger_neighbors} is not satisfied. Note that plot is shown on a logarithmic scale. (Right) Values $m_i-r_i = [(\I-\A\W)^{-1}\A(\W-\I)\rvec]_i$ for intermediaries $i$, sorted in ascending order. Most points lie above the $x$-axis, indicating that the limiting failure probabilities of these intermediaries satisfy $m_i > r_i$.}
    \label{fig:r_vs_Wr_and_m_minus_r}
\end{figure}

To further formalize this observation, we derive an alternative sufficient condition for $m_i \geq r_i$. Starting from the closed-form expression
\[
\meanvec = (\I - \A \W)^{-1} (\I - \A) \rvec,
\]
we compute
\[
\meanvec - \rvec = (\I - \A \W)^{-1}\A(\W - \I)\rvec.
\]
Thus, for each intermediary $i$, we have $m_i \geq r_i$ if and only if
\[
\bigl[(\I - \A \W)^{-1} \A (\W - \I)\rvec\bigr]_i \geq 0.
\]
Empirically, this condition is satisfied for the majority of intermediaries in our dataset (see \Cref{fig:sufficient_condition}), meaning that their network-adjusted failure probabilities exceed their idiosyncratic risks.

Taken together, these results show that although \Cref{ass:larger_neighbors} does not universally hold, a sufficient mass of intermediaries nonetheless experience risk amplification ($m_i \geq r_i$). Consequently, as summarized in \Cref{fig:loss_box_plot}, the expected aggregate loss $\Exp{\GAR(\Xvec^\infty)}$ strictly exceeds its independent-failure counterpart, consistent with the heavier right tails observed in the empirical loss distribution.
\end{APPENDICES}








\end{document}